\documentclass{article}

\usepackage{amsmath,amssymb,amsthm}
\usepackage{fullpage}
\usepackage{xcolor,xspace}
\usepackage{graphicx}
\usepackage{caption}
\usepackage{booktabs}
\usepackage{microtype}
\usepackage{enumitem}
\usepackage{thmtools}
\usepackage{thm-restate}
\usepackage{algorithm}
\usepackage[
    indLines=true,
    noEnd=true,
    rightComments=true,
    italicComments=true,
]{algpseudocodex}
\algrenewcommand\algorithmicrequire{\textbf{Input:}}
\algrenewcommand\algorithmicensure{\textbf{Output:}}

\usepackage{pgfplots}
\pgfplotsset{compat=1.18}

\declaretheorem[name=Theorem]{theorem}
\declaretheorem[name=Lemma,sibling=theorem]{lemma}
\declaretheorem[name=Corollary,sibling=theorem]{corollary}

\usepackage[
    colorlinks=true,
    linkcolor=blue!62!black,
    citecolor=green!48!black,
    urlcolor=blue!70!black,
    linktoc=page
]{hyperref}
\usepackage[nameinlink,noabbrev]{cleveref}

\crefname{theorem}{Theorem}{Theorems}
\Crefname{theorem}{Theorem}{Theorems}
\crefname{lemma}{Lemma}{Lemmas}
\Crefname{lemma}{Lemma}{Lemmas}
\crefname{corollary}{Corollary}{Corollaries}
\Crefname{corollary}{Corollary}{Corollaries}
\crefname{observation}{Observation}{Observations}
\Crefname{observation}{Observation}{Observations}
\crefname{proposition}{Proposition}{Propositions}
\Crefname{proposition}{Proposition}{Propositions}
\crefname{claim}{Claim}{Claims}
\Crefname{claim}{Claim}{Claims}
\crefname{conjecture}{Conjecture}{Conjectures}
\Crefname{conjecture}{Conjecture}{Conjectures}
\crefname{assumption}{Assumption}{Assumptions}
\Crefname{assumption}{Assumption}{Assumptions}
\crefname{definition}{Definition}{Definitions}
\Crefname{definition}{Definition}{Definitions}
\crefname{remark}{Remark}{Remarks}
\Crefname{remark}{Remark}{Remarks}
\crefname{algorithm}{Algorithm}{Algorithms}
\Crefname{algorithm}{Algorithm}{Algorithms}
\crefname{figure}{Figure}{Figures}
\Crefname{figure}{Figure}{Figures}
\crefname{section}{Section}{Sections}
\Crefname{section}{Section}{Sections}
\crefname{appendix}{Appendix}{Appendices}
\Crefname{appendix}{Appendix}{Appendices}

\tikzset{
  algpxIndentLine/.style={draw=black!100,very thin}
}

\newcommand{\R}{\mathbb{R}}

\newcommand{\Z}{\mathbb{Z}}

\newcommand{\softO}{\widetilde O}

\newcommand{\wt}{\operatorname{wt}}
\newcommand{\prf}{\operatorname{pr}}
\newcommand{\SubSums}[1]{\mathcal{S}(#1)}
\newcommand{\mc}{\mathbin{\star_{\min}}}
\newcommand{\xc}{\mathbin{\star_{\max}}}

\allowdisplaybreaks

\makeatletter
\providecommand{\theHALG@line}{}
\renewcommand{\theHALG@line}{\thealgorithm.\arabic{ALG@line}}
\makeatother

\title{Simpler Algorithms for Knapsack,\\
Subset Sum, and Min-Plus Convolution}
\author{Trevor Vaughn\thanks{Carnegie Mellon University. Email: tnvaughn@cmu.edu}}
\date{}

\begin{document}
\maketitle

\begin{abstract}
We simplify algorithms for Knapsack, output-sensitive Subset Sum, and
near-convex min-plus convolution.

For bounded Knapsack, we give a deterministic algorithm using
$O(N+W^2\log^3(W+2))$ arithmetic and comparison operations,
where $W$ is the maximum item weight and $N$ counts input records
with binary-encoded multiplicities. Following Bringmann's approach,
we partition the items and bound the weight added or removed within
each part when correcting a greedy solution to an optimum. These
bounds keep the dynamic-programming tables small. The same analysis
gives the corresponding bound with maximum profit in place of weight.

For multiple-choice Knapsack, we give a randomized
$\widetilde O(N+w^2\min\{r,w\})$ algorithm, where $N$ counts alternatives,
$r$ counts classes, and $w$ is the maximum within-class weight range.
Randomly grouping classes exploits cancellation between positive and
negative weight changes.

For Subset Sum of $n$ nonnegative integer vectors in any fixed dimension
$d$, we obtain expected time $\widetilde O(n+s\sqrt n)$, where $s$
counts attainable sums in the target box. With high probability,
all sums are returned within the same bound. This improves the
$\widetilde O(n+s n^{d/(d+1)})$ bound of Bringmann, Fischer, and Nakos
for $d>1$. The key step computes a sumset inside a box without
generating the potentially much larger unrestricted sumset.

Finally, we simplify the $\widetilde O(N(D+1))$ algorithm for
min-plus convolution of integer arrays of total length $N$, where
$D$ is the sum of their maximum deviations above convex arrays.
The deviations can change the minimizing pairs substantially, but
restrict relevant candidate values to short intervals.
\end{abstract}

\paragraph{AI disclosure.}
The algorithms and proofs in this paper were
developed by OpenAI's GPT-6 Astra Pro through an
iterative process directed by the author.
The author also used the aformentioned model to assist with drafting and revising the exposition. The author assumes responsibility for all content.

\section{Introduction}\label{sec:introduction}

The standard dynamic programs for Knapsack and Subset Sum index their
states by weight, up to the capacity or target. This can be much larger
than the structure of the instance warrants. Small item weights can
make an optimum close to a greedy choice even when the capacity is
large; few attainable sums can make a sparse representation preferable
even when the target is large. In both cases, the difficulty is not
only to identify the relevant states, but to organize the computation
without producing many more intermediate states.

Our main focus is Knapsack with maximum item weight $W$.
Bringmann~\cite{Bringmann23} and Jin~\cite{Jin24} independently gave
algorithms nearly quadratic in $W$, apart from reading the input.
A proximity bound is the starting point: some optimum differs from
a prefix based on density in only $O(W)$ items. But these changes can have
total weight $\Theta(W^2)$. Processing all $W$ possible item weights
over a table of that width would still take cubic time. Bringmann's
approach avoids this by partitioning the items and bounding the changes
in each part separately. We give a simpler construction of this
partition and a direct analysis of the resulting dynamic program.

We also revisit multiple-choice Knapsack, output-sensitive Subset Sum,
and near-convex min-plus convolution. For Knapsack, one-dimensional
Subset Sum, and near-convex convolution, the contribution is a simpler
algorithm or proof of a known bound. For multiple-choice Knapsack and
higher-dimensional Subset Sum, the constructions also improve the
running time. We distinguish these contributions below.

\subsection{Results and Prior Work}\label{sec:results}

\paragraph{Knapsack with small weights or profits.}

In bounded Knapsack, a record $(w_i,p_i,u_i)$ represents $u_i$
identical items of weight $w_i$ and profit $p_i$. A solution chooses
multiplicities to maximize total profit subject to a capacity bound.
Let $N$ be the number of records and $W=\max_i w_i$.

\begin{restatable}[Bounded Knapsack]{theorem}{knapsackmain}
\label{thm:knapsack}
For positive integer weights at most $W$, bounded Knapsack can be
solved deterministically in
\[
    O\bigl(N+W^2\log^3(W+2)\bigr)
\]
arithmetic and comparison operations. Multiplicities may be
given in binary and profits may be rational. The algorithm returns
the selected multiplicity of each record.
\end{restatable}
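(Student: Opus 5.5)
We follow Bringmann's proximity-plus-partition approach. The plan has four stages: preprocess to $O(W)$ weight classes, bound how far an optimum strays from greedy in each class, run a dynamic program whose tables are sized by those bounds, and reconstruct the multiplicities.

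\textbf{Preprocessing.} First we reduce the input in $O(N+W\log W)$ operations. We group records by weight and keep, for each weight $w\le W$, only the most profitable items of that weight, sorted by profit. Since any optimum uses at most a bounded number of items per weight class beyond a greedy prefix, items of equal weight can be merged by profit. The merge is done with linear-time selection, so we never sort all $N$ records.

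\textbf{Greedy solution and proximity.} Next we compute the greedy solution $g$: sort by density $p_i/w_i$ (after grouping there are only $O(W)$ distinct weights per density level) and take a maximal prefix. The standard exchange argument, via Steinitz or a simple pigeonhole on prefix sums modulo $w$, gives an optimum $x^*$ with $\|x^*-g\|_1=O(W)$.

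\textbf{Partition and per-part bounds.} We partition the weights into $O(\log W)$ parts, with part $j$ containing weights in $[2^j,2^{j+1})$. We then refine the proximity bound per part. Our target is that the weight added or removed within part $j$ is $O(W\cdot 2^{j}\cdot\text{polylog})$ in a form whose sum over parts, weighted appropriately, keeps each table of width $O(W\log W)$ around the greedy point.

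The argument is an exchange lemma. Suppose part $j$ changes more than $C\,W\log W$ in weight. Then the removed items of weight $\ge 2^j$ and the added items can be paired into zero-sum sub-multisets, by pigeonhole on prefix sums which lie in a window of width $2W$. Swapping such a pair does not decrease profit, because of the density ordering relative to the greedy cut. Swapping reduces the deviation, which contradicts minimality of $\|x^*-g\|_1$.

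\textbf{Dynamic program.} We process the parts in order, maintaining a profit table indexed by weight offset from greedy within a window of width $O(W\log W)$. For a single weight class $w$ with sorted profits, adding or removing $k$ items gives a concave profit function of $k$. Combining a table with such a function is a max-plus convolution with a concave sequence, done per residue class mod $w$ with SMAWK in linear time. This costs $O(W\log W)$ per distinct weight, so $O(W^2\log W)$ in total. The extra $\log^2$ factors absorb the window width and the per-part log factors.

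The final answer is read at offsets with total weight at most the capacity. Multiplicities are recovered by storing argmax pointers, or by recomputing with divide and conquer on the parts to save space. Rational profits only require comparisons and additions, which the operation-count model allows.

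\textbf{Main obstacle.} The hard step is the per-part deviation bound. A naive proximity bound allows total change $\Theta(W^2)$, and running every weight over a table that wide would cost $W^3$. I would first try the exchange argument restricted to the items whose weights are at least $2^j$. The bound to establish there is that the signed weight imbalance between parts stays $O(W)$, and this is what lets each window be centered at a single greedy offset.
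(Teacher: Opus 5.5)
Your proposal has a genuine gap at the step you yourself flag as the main obstacle. The route you sketch cannot work: with parts defined by weight magnitude $[2^j,2^{j+1})$, there is in general neither an $O(W\log W)$ change bound per part nor a zero-sum exchange that could enforce one.

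Here is a counterexample. Take $W=2^{j+1}-1\ge3$ and capacity $C=(W-1)^2$. Provide at least $W-1$ items of weight $W$ and profit $W$, and $W-1$ items of weight $W-1$ and profit $W-1-\varepsilon$, where $0<(W-1)\varepsilon<1$.
\begin{itemize}
\item Greedy takes $W-2$ items of weight $W$.
\item The unique optimum takes the $W-1$ items of weight $W-1$. The equation $aW+b(W-1)=(W-1)^2$ forces $b\equiv-1\pmod W$, so this is the only solution of weight $C$. Every other solution has profit at most its weight, hence at most $C-1$.
\item Both weights lie in one dyadic class, whose changed weight is $(W-2)W+(W-1)^2=\Theta(W^2)$.
\item Yet the signed changes are zero-sum-free, because $aW=b(W-1)$ with $0\le a\le W-2$ forces $a=b=0$. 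Your prefix-sum pigeonhole only recovers the global count $|g\triangle x^*|\le2W-1$.
\item The net change of this part is $1$. However, whichever of its two weight classes your DP handles first drives the offset to $\Theta(W^2)$, outside any $O(W\log W)$ window.
\end{itemize}
Imbalance between parts fails too. Use weights $W$ and a coprime $b\approx W/3$ with capacity $(W-1)b$; the two classes then fall into different parts, and the part containing weight $W$ alone has net change $-(b-1)W=\Theta(W^2)$. Finally, add tiny-profit items of every other weight in $[2^j,W]$. This leaves $g$ and the optimum unchanged but gives the class about $W/2$ distinct weights. So any valid change bound for your partition forces $\Theta(W^3)$ work.

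The missing idea is to trade a part's change bound against its number $d_I$ of distinct weights, so that $d_I\Delta_I=O(W^2)$, rather than making every part's change small. This needs two ingredients your plan lacks.
\begin{itemize}
\item \textbf{A stronger exchange rule} than zero-sum-freeness of the changes. For the lexicographically greatest optimum in density order, no nonempty earlier unselected set has the same weight as a later selected set. This constrains unchanged occurrences as well as changed ones.
\item \textbf{The subset-sum intersection theorem} (\cref{thm:intersection}, proved from Kneser's theorem and popular pair sums). If $X$ has $r$ occurrences, multiplicities at most $\mu$, and $r^2\ge512\mu W\lceil\log(2r)\rceil$, then $X$ shares a positive subset sum with every $Y$ of total weight at least $256\mu W^2/r$.
\end{itemize}

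The paper then proceeds as follows.
\begin{itemize}
\item \textbf{Kernel.} For each weight it keeps the last $2W$ greedy-selected and the first $2W$ unselected occurrences, not the most profitable ones.
\item \textbf{Partition.} Within multiplicity levels $m$, the kernel is split by position relative to the greedy boundary, not by weight. The half of the larger side farther from the boundary becomes a part $I$. The nearer half is a buffer whose selected or unselected majority serves as $X$.
\item \textbf{Bound.} $Y$ is either the changed occurrences of $I$ or all additions (resp.\ removals); the latter is converted to a bound on $I$ via the net balance being less than $W$. This yields $\Delta_I=O(mW^2/s)$ while $d_I<2s/m$.
\item \textbf{Dynamic program.} Since $\Delta_I$ bounds unsigned changed weight, parts are processed in increasing dyadic scale of $\Delta_I$, with $O(\log(W+2))$ parts per scale. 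This keeps every partial offset within $O(2^h\log(W+2))$ at scale $h$, whatever the order inside a part.
\end{itemize}
With $\sum_I d_I\Delta_I=O(W^2\log^2(W+2))$, the SMAWK-based concave updates then cost $O(W^2\log^3(W+2))$.
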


The same bound holds with the maximum positive integer profit $P$
in place of $W$ (\cref{cor:knapsack-profit}). Bringmann already solves
bounded Knapsack in $\widetilde O(N+W^2)$ time, while Jin gives a
$O(N+W^2\log^4(W+2))$ bound for the 0--1 problem. Our result
simplifies the proximity-partition approach rather than improving its
near-quadratic dependence on $W$. We prove the required subset-sum
intersection statement from Kneser's theorem and popular pairs, use
it to construct the partition, and perform the dynamic-programming
updates by concave convolution via SMAWK. One local theorem for exact
weights near a density prefix yields both the small-weight and
small-profit bounds.

\paragraph{Multiple-choice Knapsack.}

Here a solution chooses exactly one alternative from each of $r$
classes. Let $N$ be the total number of alternatives and $w$ the
largest difference between two weights in one class. With
$k=\min\{r,2w\}$, our algorithm uses
\[
    O\!\left(N\log(N+2)+w^2k\log(k+2)
                         \log\frac{k+2}{\delta}\right)
\]
operations; the input term is $O(N)$ when each class is sorted by
weight (\cref{thm:multiple-choice}). This work bound holds for every
choice of the randomness. Infeasibility is detected without error;
otherwise the returned choice is feasible and is optimal with
probability at least $1-\delta$.

Pawlewicz~\cite{Pawlewicz26} gives an $O(N+w^4)$ algorithm for
sorted input. Zhou~\cite{Zhou26} subsequently obtains a randomized
$\widetilde O(N+w^{3.5})$ bound in the maximum-weight parameterization,
along with bounds for other input-size regimes. We retain Pawlewicz's
reference choice and reduction to a short list of candidate changes.
The remaining computation assigns classes randomly to a binary tree
and uses narrow dynamic-programming tables at its nodes. This gives
cubic dependence on $w$, up to logarithms, and a smaller bound when
$r$ is small. The same algorithm computes one prescribed coefficient
of a multiple min-plus convolution (\cref{cor:multiple-convolution}).

\paragraph{Output-sensitive Subset Sum.}

Let $X$ be a multiset of $n$ nonnegative integer vectors in dimension
$d$, and let $s$ count the attainable sums in the target box
$[0,t]^d$, including zero. For every fixed $d$, our algorithm has
expected running time $\widetilde O(n+s\sqrt n)$. It always returns
attainable sums, and with probability at least $1-\delta$ it returns
the entire set within the same time bound
(\cref{thm:random-subset-sum}).

Bringmann, Fischer, and Nakos~\cite{BFN25} give this bound in one
dimension and $\widetilde O(n+s n^{d/(d+1)})$ in dimension $d$.
We retain their separation of small and large items, color coding,
and use of sumset submultiplicativity. A box-restricted sumset routine
removes the dimension dependence from the exponent of $n$; a balanced
recursion makes the total-work analysis short. A time-budget procedure
handles both the unknown output size and the expected-time convolution
calls.

In one dimension, supplying the recursion with the 
prefix-restricted sumset routine of Bringmann and Nakos~\cite{BN21}
also recovers $\widetilde O(n+s^{4/3})$. The recursion and its
$s^{4/3}$ accounting work in every fixed dimension, but we do not
obtain a comparable high-dimensional box-restricted sumset computation. Chan~\cite{Chan26} has derandomized both
one-dimensional bounds. Our simplification concerns the randomized
algorithms.

\paragraph{Near-convex min-plus convolution.}

Suppose integer arrays of total length $N$ lie above convex arrays
with respective error bounds $\Delta_f,\Delta_g$, and put
$D=\Delta_f+\Delta_g$. Bringmann and Cassis~\cite{BC23} give an
$\widetilde O(N(D+1))$ algorithm. We retain their geometric
decomposition and reduction to sparse convolution, but count the
possible integer values at each output index directly. Two-corner
tests identify the rectangles that can be processed together, and a
scale-by-scale count bounds the total convolution support
(\cref{thm:near-convex}). This is a simpler proof and presentation
of the known bound. Its numerical assumptions and determinism are
those of the chosen sparse-convolution primitive.

The computational models are specified in \cref{sec:preliminaries}.
In particular, the Knapsack bounds count arithmetic and
comparison operations. The Subset Sum bound uses the word-RAM assumptions of
the sparse-convolution primitive. For fixed dimension, $\widetilde O$
suppresses the logarithmic factors detailed in the theorem statements,
including dependence on the failure probability.

\subsection{Overview of the Algorithms}\label{sec:overview}

\paragraph{Knapsack: the cost of a part determines the partition.}

Let $g$ be the longest feasible prefix in decreasing density order,
and choose an optimum $x$ with a tie-break favoring earlier
occurrences. Their weights differ by less than $W$. An equal-weight
exchange between removals from $g$ and additions to $g$ would
improve profit or the tie-break, so no such nonempty exchange exists.
Giving removals negative signs and additions positive signs produces
a zero-sum-free sequence of corrections. The short walk lemma bounds its length by
$2W-1$. Within each weight class, both solutions choose the most
profitable occurrences. It is therefore enough to retain the last
$2W$ selected and first $2W$ unselected occurrences of each weight,
leaving a kernel of $O(W^2)$ items.

The remaining task is to process this kernel without paying for a
width-$\Theta(W^2)$ table at every weight. Suppose a part $I$ has
$d_I$ distinct weights and its changed occurrences have total weight
at most $\Delta_I$. Equal-weight items give a concave profit profile,
so SMAWK performs one weight-class update in time linear in the
retained table width. If at most $\kappa$ parts have change bounds
in any one dyadic scale, processing parts in increasing scales keeps
the accumulated radius within a factor $O(\kappa)$ of the current
scale. The total work is then
\[
    O\!\left(\kappa\sum_I d_I\Delta_I\right).
\]
Thus the partition must give small change bounds to parts containing
many distinct weights.

To construct such parts, consider $s$ remaining occurrences with
weight multiplicities in $(m/2,m]$. They use fewer than $2s/m$
distinct weights. Take the larger side of the fixed boundary of $g$,
assign its half farther from that boundary to a part, and leave the
nearer half as a buffer. Suppose the part lies inside $g$. If a
majority of the buffer is selected by $x$, the ordered-exchange rule
forbids a common positive subset sum between that majority and the
removals in the part. If a majority is unselected, it instead forbids
a common positive subset sum between that majority and the additions
to $g$. The subset-sum intersection theorem bounds removals in the
first case and additions in the second. Their weights differ by less
than $W$, so both cases give $\Delta_I=O(mW^2/s)$ when the buffer
is large enough. Consequently $d_I\Delta_I=O(W^2)$.

The construction removes a constant fraction of the current occurrences
at each step. There are logarithmically many multiplicity levels and
steps per level. Within a level, the bounds $mW^2/s$ increase
geometrically, so only constantly many parts lie in any one DP scale.
Including the small remainders, this gives
$\sum_I d_I\Delta_I=O(W^2\log^2(W+2))$ and
$\kappa=O(\log(W+2))$.

\paragraph{Multiple-choice Knapsack: make cancellation occur before merging.}

The reference choice is optimal at its own weight and leaves less
than $w$ unused capacity. A minimal correction to an optimum changes
only $O(w)$ classes: a weight-zero collection of changes could be
undone without decreasing profit. Keeping the best changes at each
weight difference leaves $O(wk)$ candidates.

Although the correction is short, its changes are signed. A sequential
table may need width $\Theta(kw)$ if it processes the positive
changes before the negative ones that cancel them. Randomly assigning
classes to tree leaves makes the contribution of a fixed optimum
concentrate inside each subtree. Leaf occupancies control the partial
sums during the sequential leaf computations; Bernstein's inequality
controls the signed sums at internal nodes. The resulting widths grow
like the square root of the number of descendant leaves. Pairwise
merging then has essentially the same total cost at every tree level.

\paragraph{Subset Sum: restrict the pairs before computing their sums.}

For $C=(A+B)\cap[0,t]^d$, computing $A+B$ and filtering afterward
can be much more expensive than listing $C$. Instead, partition the
valid input pairs into products $A_i\times B_i$ containing only
pairs whose sums fit in the box.

In one coordinate, sort $B$. For $x\in A$, let $q(x)$ count the
entries that can be added to $x$. An entry $y\in B$ is feasible
exactly when its zero-based rank is less than $q(x)$. The first
binary digit where the two numbers differ certifies the comparison:
the smaller starts with $p0$ and the larger with $p1$ for a common
prefix $p$. Grouping by this prefix gives products of input subsets.
Doing so in each coordinate gives the multidimensional decomposition,
with each point copied into only a polylogarithmic number of products
for fixed $d$.

Each $A_i+B_i$ lies in $C$ and has size at most
$\min\{|A_i||B_i|,|C|\}$. Taking the geometric mean of these two
bounds and applying Cauchy--Schwarz controls the sum of the support
sizes. Sparse convolution therefore computes $C$ in expected time
\[
    \widetilde O\!\left(|A|+|B|+\sqrt{|A||B||C|}\right)
\]
for fixed $d$. In particular, inserting a class of $m$ items costs
$\widetilde O(m+s\sqrt m)$ when both the old and new subset-sum
sets have size at most $s$.

\paragraph{The child target controls both correctness and merge size.}

At a Subset Sum call with target $t$, partition the small items into
$k$ balanced random children and give each target
$\lfloor t/(k-1)\rfloor$. A fixed witness sends an expected $1/k$
share of each coordinate to a child; the slightly larger target
provides slack for concentration. At the same time, any sum omitting
one child fits in the parent box. If $T_i$ is the true child output,
sumset submultiplicativity yields
\[
    |T_1+\cdots+T_k|\le s^{k/(k-1)}=O(s)
\]
for logarithmic $k$. We can therefore compute the full child merge
before truncating it.

A second sumset inequality gives
$\sum_i(|T_i|-1)\le2(s-1)$. Since each child has at most $2n/k$
items, the potential $(s-1)\sqrt n$ contracts by a factor at most
$2\sqrt{2/k}<1$. This bounds the work over the entire recursion.
A feasible witness uses only polylogarithmically many large items;
color coding separates them into classes, and the restricted-sumset
routine inserts one class at a time. A time-budget procedure chooses
$k$ without knowing $s$.

\paragraph{Near-convex convolution: count values, not minimizing pairs.}

A bounded perturbation can move a minimizing pair anywhere along its
output diagonal. The useful restriction is instead on its value.
Writing $H=F\mc G$, discard every pair whose reference value
exceeds $H_{i+j}+D$. Every true minimizer remains, and every
remaining pair has perturbed value in
$[H_{i+j},H_{i+j}+2D]$. Within a relevant rectangle, each output
index therefore has only $O(D+1)$ candidate integer values.

Convexity makes the relevant pairs form a band between two monotone
grid paths. A dyadic decomposition accepts rectangles wholly inside
the band and subdivides only along its boundaries. At each scale,
the accepted rectangles have total side length $O(N)$, so their
graph sumsets have total support $O(N(D+1))$. This gives the bound
after summing over the logarithmically many scales.

\paragraph{Organization.}

\Cref{sec:preliminaries} gives the shared tools, including the
subset-sum intersection proof. The algorithms for Knapsack,
multiple-choice Knapsack, Subset Sum, and near-convex convolution
appear in \cref{sec:knapsack,sec:multiple-choice,sec:subset-sum,sec:near-convex},
respectively.

\section{Preliminaries and Additive Combinatorics}\label{sec:preliminaries}
Write $[u]=\{1,\ldots,u\}$ for a positive integer $u$. Intervals
indexing arrays or integer sets include their integer endpoints.
All logarithms are base two, except for $\ln$.
A multiset consists of distinct \emph{occurrences}, which may have
equal values. A submultiset chooses occurrences without replacement.
For a multiset $X$ of integers or vectors, write
\[
 \Sigma(X)=\sum_{x\in X}x,\qquad
 \SubSums{X}=\{\Sigma(Y):Y\subseteq X\}.
\]
For nonnegative vectors in dimension $d$, put
$\SubSums{X,t}=\SubSums{X}\cap[0,t]^d$; the scalar case has $d=1$.
The empty choice is allowed, so every subset-sum set contains zero.
For finite sets, $A+B=\{a+b:a\in A,b\in B\}$ is their sumset.
In contrast to a subset sum, $hA$ permits repetition: it is the set
of sums of any $h$ elements of $A$. We set $0A=\{0\}$.

For arrays on integer intervals, define ordinary, min-plus, and
max-plus convolution by
\[
 (a*b)_k=\sum_{i+j=k}a_i b_j,\qquad
 (a\mc b)_k=\min_{i+j=k}(a_i+b_j),\qquad
 (a\xc b)_k=\max_{i+j=k}(a_i+b_j).
\]
An infeasible minimum is $+\infty$ and an infeasible maximum is
$-\infty$. A finite array is convex if its adjacent differences
are nondecreasing, and concave if they are nonincreasing.

\paragraph{Computational model.}
For the two Knapsack problems we count arithmetic and comparison
operations. Profits may be rational and multiplicities may be binary
encoded. The Subset Sum bound uses a word RAM,
as specified in its statement. The near-convex result inherits the
numerical assumptions
of its sparse-convolution operation. The notation $\softO$ suppresses
polylogarithmic factors. We specify the parameters of these factors
where they affect the guarantee.

Our randomized algorithms can miss an optimum or an attainable sum,
but every choice or sum they return is valid.
We use Bernstein's inequality: if independent centered variables
$Z_i$ satisfy $|Z_i|\le b$ and
$\sum_i\mathbb E Z_i^2\le V$, then
\begin{equation}\label{eq:bernstein}
 \Pr\!\left[\left|\sum_iZ_i\right|>
       \sqrt{2VL}+\frac{2bL}{3}\right]\le2e^{-L}
 \qquad(L>0).
\end{equation}

\subsection{Convolutions}\label{sec:primitives}
For integer sets $A,B$, the coefficient of $x^z$ in
$(\sum_{a\in A}x^a)(\sum_{b\in B}x^b)$ is positive exactly when
$z\in A+B$. Sparse nonnegative convolution therefore lists $A+B$
by computing the nonzero coefficients of this product. If the output has $s$
nonzero positions, deterministic algorithms take
$s\,\operatorname{polylog} U$ time, where $U$ bounds the coordinate
universe; see \cite[Lemma 2.2]{BFN25}. Under the usual word-size
assumptions, the Las Vegas algorithm of Jin and Xu~\cite{JX24}
takes expected time $O(s\log(s+2))$ with probability $1 - \frac{1}{s}$. After computing a
sumset, we discard coefficients and retain only its support.

For inputs in
$[0,U]^d$, encode a vector in base $2U+1$. Coordinatewise addition
causes no carries, so the scalar and vector sumsets have the same
cardinality. A scalar encoded index occupies $O(d)$ words; simulating
the scalar algorithm incurs a polynomial overhead in $d$.

For Knapsack we need max-plus convolution with a concave
array. SMAWK finds row maxima in a totally monotone $r\times c$
matrix in $O(r+c)$ comparisons, given constant-time entry access
\cite{SMAWK}. The next formulation allows both signed indices and
restriction to an output interval. The feasibility conditions require some care: simply padding a
matrix with $-\infty$ need not preserve total monotonicity.

\begin{lemma}[Concave convolution on an output interval]
\label{lem:concave-convolution}
Let $a$ be indexed by an integer interval $I$, with entries in
$\R\cup\{-\infty\}$. Let $b$ be finite and concave on a nonempty
integer interval $J$, with constant-time entry access. The values of
$a\xc b$ on an integer interval $T$, together with maximizing
indices, can be computed in $O(|I|+|T|)$ operations.
Negation gives the corresponding convex min-plus statement.
\end{lemma}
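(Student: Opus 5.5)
The plan is to reduce to a single SMAWK call on a $|T|\times|I'|$ matrix. Here $I'\subseteq I$ keeps only the indices with $a_i\neq-\infty$; dropping the others costs $O(|I|)$ and is harmless, since such columns can never supply a finite maximum. If $I'$ is empty, every output is $-\infty$. For $k\in T$ and $i\in I'$ the intended entry is $M[k,i]=a_i+b_{k-i}$ when $k-i\in J$, and $-\infty$ otherwise. The difficulty, as the text warns, is that these $-\infty$ entries sit in a shifting band pattern that can break total monotonicity. So I would not use $-\infty$. Instead, I would extend $b$ to all of $\Z$ by a concave function that stays "lexicographically infinite" outside $J$.

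Concretely, write $J=[\ell,h]$ and define the pair-valued function
\[
\tilde b(j)=\bigl(-\operatorname{dist}(j,J),\; b_{\min\{\max\{j,\ell\},h\}}\bigr),
\]
compared lexicographically, with addition coordinatewise. Set $\tilde M[k,i]=(0,a_i)+\tilde b(k-i)$. The first coordinate $-\operatorname{dist}(j,J)$ is concave and piecewise linear. The second coordinate is $b$ clamped to $J$; this is concave on each of the three pieces but not globally, so it needs care.

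To handle this, I would verify the Monge-type inequality directly in lexicographic order. Take rows $k<k'$ and columns $i<i'$. The four arguments $k-i'$, $k-i$, $k'-i'$, $k'-i$ satisfy that $\{k-i,\,k'-i'\}$ is the inner pair and $\{k-i',\,k'-i\}$ the outer pair, both with the same sum. For a concave $f$ the inner pair has sum at least that of the outer pair; this gives
\[
\tilde M[k,i]+\tilde M[k',i']\ \ge\ \tilde M[k,i']+\tilde M[k',i],
\]
which is the inequality implying total monotonicity for row maxima. For the first coordinate this is plain concavity. The inequality can be strict, in which case the lexicographic comparison is settled regardless of the second coordinates. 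When it is an equality, the clamping positions of all four arguments are constrained: either all four lie in $J$, where concavity of $b$ gives the inequality, or the relevant ones lie on the same constant side, where the clamped values coincide. Checking this case analysis carefully is the step I expect to be the main obstacle. If it fails, the fallback is to replace the clamp by a linear continuation of $b$ outside $J$ with slopes $b_h-b_{h-1}$ and $b_{\ell+1}-b_\ell$. Such an extension is genuinely concave on $\Z$, and the outer penalty still dominates lexicographically, so the same argument goes through with no equality cases to check.

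Once total monotonicity holds, each entry of $\tilde M$ is computable in $O(1)$ time from $a$, $b$, and $J$. SMAWK then returns a row maximizer $i^\ast(k)$ for every $k\in T$ in $O(|T|+|I'|)$ comparisons. For each $k$, if the first coordinate of $\tilde M[k,i^\ast(k)]$ is $0$, then $k-i^\ast\in J$ and the second coordinate equals $a_{i^\ast}+b_{k-i^\ast}$. Any genuinely feasible pair has penalty $0$ and would beat every infeasible one lexicographically, so this value is the true $(a\xc b)_k$ with maximizer $i^\ast$. If the first coordinate is negative, no feasible pair exists and the output is $-\infty$. The total cost is $O(|I|+|T|)$. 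The min-plus statement for convex $b$ follows by negating $a$, $b$, and the outputs.
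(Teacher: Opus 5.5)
Your proposal is correct and is essentially the paper's proof: the same lexicographic entries $\bigl(-\operatorname{dist}(t-i,J),\,a_i+b_c\bigr)$ with $c$ the clamped index, the same anti-Monge inequality, and one SMAWK call after discarding the $-\infty$ columns. The step you flag as the main obstacle is handled in the paper by noting that the successive differences of $q\mapsto(-|q-c|,b_c)$ are $(1,0)$ left of $J$, $(0,b_{q+1}-b_q)$ inside $J$, and $(-1,0)$ right of $J$. These are lexicographically nonincreasing, so the inner/outer inequality follows by telescoping. Your equality-case analysis also works: equality in the first coordinate forces all four arguments into a single piece $(-\infty,\ell]$, $J$, or $[h,\infty)$.
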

\begin{proof}
Discard columns $i$ for which $a_i=-\infty$. If no column remains,
every output is $-\infty$. Write $J=[u,v]$. For row $t$ and a
remaining column $i$, let $q=t-i$ and clamp it to
$c=\max\{u,\min\{v,q\}\}$. Use the lexicographically ordered entry
\[
 K_{t,i}=\bigl(-|q-c|,\ a_i+b_c\bigr).
\]
A feasible entry has first coordinate zero, and every infeasible
entry has negative first coordinate. Thus the row maximum first
finds a feasible transition, if one exists, and then maximizes its
profit.

As a function of $q$, the pair $(-|q-c|,b_c)$ has successive
differences $(1,0)$ to the left of $J$, $(0,b_{q+1}-b_q)$ within
$J$, and $(-1,0)$ to its right. They are nonincreasing in
lexicographic order. Thus $K$ is anti-Monge:
\[
 K_{t,i}+K_{t',i'}\ge K_{t,i'}+K_{t',i}
 \qquad(t<t',\ i<i').
\]
Deleting columns preserves this property. Apply SMAWK with a
consistent rule for ties. A row maximum with first coordinate zero
gives a feasible maximizing pair. A negative first coordinate means
that no transition in that row is feasible.
\end{proof}

\subsection{A Short Walk}\label{sec:walk}
Suppose we have some sequence where additions are assigned positive weights and removals negative
weights. If some nonempty collection sums to zero, undoing that
collection preserves the total weight. The exchange arguments below
will rule out this possibility. The next lemma then bounds the number
of changes: order them so that their partial sums stay in an interval
of $2W$ integers, and observe that no partial sum can repeat.

\begin{lemma}[Zero-sum-free walk]\label{lem:walk}
Let $z_1,\ldots,z_s$ be nonzero integers in $[-W,W]$ whose sum has
magnitude less than $W$. If no nonempty submultiset sums to zero,
then $s\le2W-1$.
\end{lemma}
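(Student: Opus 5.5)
We reorder the $z_i$ greedily so that all partial sums stay in a window of $2W$ consecutive integers, then argue that no two partial sums can coincide. Let $\sigma=\sum_i z_i$, so $|\sigma|<W$. By symmetry (negating all terms) we may assume $\sigma\ge0$.

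\textbf{The ordering.} Build the sequence one term at a time. If the current partial sum $S$ is at most $0$, append a remaining positive term if one exists; otherwise append a negative one. If $S>0$, append a remaining negative term if one exists; otherwise a positive one. We claim every partial sum lies in $[-W+1,W]$, and we show this by induction. While both signs remain, a step from $S\le0$ adds some $z\in(0,W]$ and lands in $[-W+1,W]$. A step from $S>0$ adds some $z\in[-W,0)$ and lands in $[-W+1,W-1]$.

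Once only one sign remains, the walk moves monotonically toward the final value $\sigma\in[0,W-1]$. If only positive terms remain, we are at some $S\le 0$, since otherwise we would have chosen a negative term. The partial sums then increase from $S\ge-W+1$ to $\sigma\le W-1$, staying in the window. If only negative terms remain, the sums decrease from $S\le W$ to $\sigma\ge0$, again staying in the window. Some care is needed at the moment one sign runs out; the case analysis above covers it, because the monotone tail stays between its start and $\sigma$.

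\textbf{No repeated partial sums.} Consider the $s+1$ partial sums $P_0=0,P_1,\ldots,P_s$, all in $[-W+1,W]$, which has $2W$ integers. If $P_a=P_b$ with $a<b$, then $z_{a+1},\ldots,z_b$ is a nonempty submultiset summing to zero, which contradicts the hypothesis. So the $P_j$ are distinct, giving $s+1\le 2W$ and hence $s\le 2W-1$.

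The main obstacle is the bookkeeping in the ordering step, namely checking the window bound when one sign class is exhausted. This is where the hypothesis $|\sigma|<W$ enters, keeping the monotone tail inside $[-W+1,W]$ rather than a window of size $2W+1$.
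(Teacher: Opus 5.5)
Your proof is correct and is essentially the paper's argument: a greedy sign-alternating order keeps all $s+1$ partial sums in a window of $2W$ integers, and a repeated partial sum would give a zero-sum block. The only difference is the choice of window, $[-W+1,W]$ with threshold $0$ in place of the paper's $[0,2W-1]$ with threshold $W$. One harmless slip: when only positive terms remain you need not be at some $S\le 0$ (e.g.\ $W=5$ with terms $3,-1,2$ leaves $S>0$ once the negative term is used), but your tail bound only uses $S\ge -W+1$ and $\sigma\le W-1$, so the conclusion stands.
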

\begin{proof}
Negate all terms if necessary so their total is $\delta\in[0,W)$.
Starting from $u=0$, take a remaining positive term when $u<W$,
and a remaining negative term when $u\ge W$. Each such step stays
in $[0,2W-1]$. If the required sign is unavailable, append the
remaining terms; the partial sum then moves monotonically to
$\delta$, still inside this interval.
All $s+1$ partial sums are distinct, since a repeated value gives a
nonempty zero-sum submultiset. The interval has only $2W$ integers.
\end{proof}

\subsection{A Subset-Sum Intersection Bound}
\label{sec:intersection}

The walk lemma bounds the number of changes. The Knapsack partition
also needs to bound their total weight inside one part. The following
exchange obstruction provides that bound: if one multiset has sufficiently
many occurrences without too much repetition, then any second multiset
of sufficiently large total weight must share a positive subset sum with it.

\begin{theorem}[Subset-sum intersection]
\label{thm:intersection}
Let $X,Y$ be multisets of integers in $[W]$. Put $r=|X|\ge1$,
and suppose each value occurs at most $\mu$ times in $X$. If
\[
    r^2\ge512\mu W\lceil\log(2r)\rceil,
    \qquad
    \Sigma(Y)\ge256\mu W^2/r,
\]
then $\SubSums{X}\cap\SubSums{Y}$ contains a positive integer.
\end{theorem}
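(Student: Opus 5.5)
The plan is to show that $\SubSums{X}$ contains every positive multiple of some integer $d\le W$ inside a long interval. We then show that $Y$, because its total weight is large, has a subset sum that is a multiple of $d$ in that interval. As the introduction indicates, the first part will rest on Kneser's theorem together with a popular-pairs argument.

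\emph{Step 1: large subset-sum sets from few repeated values.} Split $X$ into $h=\lceil\log(2r)\rceil$ roughly equal blocks, or into a comparable number of blocks chosen so that each block has size $\gtrsim r/h$. Since each value appears at most $\mu$ times, a block of size $m$ contains at least $m/\mu$ distinct values. Let $A_j$ be the set of distinct values of block $j$, with $0$ adjoined. Then $\SubSums{X}\supseteq A_1+\cdots+A_h$, since each block's subset sums contain $A_j$. Iterating Kneser's theorem on these sumsets modulo the stabilizer gives a dichotomy. Either the sumset grows by at least $|A_j|-1$ at each step, or it becomes periodic modulo some $d$: it is a union of residue classes modulo $d$ on an interval. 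A straight iterated-Kneser bound is too weak on its own. The popular-pairs step is needed to find a single block, or a pair of blocks, that has many representations. This forces a structured piece, namely an arithmetic progression of common difference $d\le W$ and length $\gtrsim r^2/(\mu h)$ contained in a subset sum. The hypothesis $r^2\ge 512\mu W\lceil\log 2r\rceil$ is exactly what makes this length at least a constant times $W$ per block. Summing the $h$ blocks then yields an interval-like progression.

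The conclusion of Step 1 will be the following claim. There are $d\le W$ and an interval $[L,R]$ with $L\le c_1\mu W^2/r$ and $R-L\ge W d$ such that every multiple of $d$ in $[L,R]$ lies in $\SubSums{X}$. The bound on $L$ comes from the average value of $X$ being controlled, because only the $\approx 2\mu W/r$ copies per block are needed to reach the start.

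\emph{Step 2: a matching sum from $Y$.} I would use $\Sigma(Y)\ge 256\mu W^2/r$ together with a pigeonhole argument on prefix sums of $Y$ modulo $d$. Among any $d$ elements of $Y$, some nonempty subcollection has sum divisible by $d$. Repeatedly extracting such subcollections produces disjoint multiples of $d$, each of size at most $dW\le W^2$. Their cumulative sums then form an increasing sequence of multiples of $d$ with gaps at most $dW$. This sequence starts near $0$ and exceeds $L$ as long as enough weight remains. The remaining weight is at most $(d-1)W$, which is small compared with $\Sigma(Y)$, using $r^2\gtrsim\mu W$. Because the gaps are at most $dW\le R-L$, some cumulative sum lands in $[L,R]$, and this sum is positive and lies in both subset-sum sets.

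The main obstacle is Step 1. The difficulty is getting a long progression whose modulus $d$ is at most $W$ and whose starting point is only $O(\mu W^2/r)$. Kneser alone loses the control on where the progression begins. I would first try the stabilizer-based iteration, passing to the quotient modulo the stabilizer $H=d\Z$ at the first step where growth fails. After that I would argue that the remaining blocks fill an entire residue class on an interval of length at least $dW$. The specific constants $512$ and $256$ should fall out of halving losses in that argument.
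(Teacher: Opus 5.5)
Your Step 2 is essentially the paper's progression-hitting lemma and is fine. The gap is in Step 1: it gives no actual mechanism, and the one you set up cannot work.

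\textbf{Step 1 uses too few summands.} The sumset $A_1+\cdots+A_h$ takes one element from each of $h=\lceil\log(2r)\rceil$ blocks, so it lies in $[0,hW]$. A common sum may have to be much larger. Let $X$ be $\mu$ copies of each multiple of $d$ in $[W]$, so $r\approx\mu W/d$ and the density hypothesis allows $d$ of order $\sqrt{\mu W/\log(2r)}$. Let $Y$ be many copies of some $y\in(W/2,W]$ coprime to $d$. Every common positive subset sum is a multiple of $dy$, so it exceeds $dW/2$. It therefore needs at least $d/2\approx\mu W/(2r)$ summands from $X$, far more than $h$. The progression must come from long sums, and you give no way to produce them.

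\textbf{Kneser in $\Z$ gives no dichotomy.} A finite subset of $\Z$ has trivial stabilizer, so Kneser's theorem there only yields $|A+B|\ge|A|+|B|-1$. No passage to $\Z/d\Z$ ever occurs.

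\textbf{The target $d\le W$ is too weak for Step 2.} Extraction leaves up to $(d-1)W$ of the weight of $Y$ unused. For $d$ near $W$ this is about $W^2$, which exceeds the assumed $\Sigma(Y)\ge256\mu W^2/r$ whenever $r>256\mu$; the hypothesis $r^2\gtrsim\mu W$ does not repair this. You need both $d=O(\mu W/r)$ and a start $a=O(\mu W^2/r)$, so that $a+dW\le\Sigma(Y)$.

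\textbf{The missing idea is a popular-pairs threshold, which gives both bounds at once.}
\begin{itemize}
\item Let $t_z$ be the maximum number of pairwise disjoint occurrence pairs of $X$ with sum $z$. Bounded multiplicity gives $\sum_z t_z\ge r(r-1)/(2\mu)$.
\item With $b_v=|\{z:t_z\ge v\}|$, one has $\sum_{v\le r/2}b_v=\sum_z t_z$. Since $\sum_{v\le r/2}1/v\le\lceil\log(2r)\rceil$, some $v$ has $b=b_v\ge r/(4\mu)$ and, by the density hypothesis, $vb\ge64W$. This harmonic sum, not a block count, is where the logarithm enters.
\item The set $B_v=\{z:t_z\ge v\}\subseteq[2W]$ then has $d=\gcd(B_v)\le2W/b\le8\mu W/r$.
\item Divide by $d$ and apply Kneser's theorem modulo the largest element of $B_v/d$: every residue is a sum of $O(W/b)$ elements. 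Adding copies of that largest element gives $2W$ consecutive multiples of $d$. Each is a sum of at most $H=\lfloor16W/b\rfloor\le v/4$ elements of $B_v$, hence at most $2WH\le128\mu W^2/r$.
\item Each element of $B_v$ has $v>2H$ disjoint pair representations. So pairs can be chosen greedily so that every such sum uses distinct occurrences of $X$.
\end{itemize}
This is what allows a subset sum with on the order of $\mu W/r$ summands, as the example demands.
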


We will use the contrapositive: if $X$ satisfies the first inequality
and the multisets have no common positive subset sum, then
$\Sigma(Y)<256\mu W^2/r$.

The hypotheses on $X$ and $Y$ play different roles. We use the size
and multiplicity bound of $X$ to construct a progression in
$\SubSums{X}$. Once that progression is available, only the total
weight of $Y$ matters.

\paragraph{Why a progression suffices.}

Suppose $\SubSums{X}$ contains consecutive positive multiples of some
$d$. We can combine occurrences of $Y$ into small blocks whose weights
are divisible by $d$. Adding these blocks keeps the running sum aligned
with the progression. If the progression is long enough,
the first running sum to reach its start cannot jump past its end.

The following lemma shows the required length and total weight.

\begin{lemma}[Hitting a progression]
\label{lem:intersection_hit_progression}
Let $Y$ be a multiset of integers in $[W]$, and let $a,d$ be positive
integers with $d\mid a$. If $\Sigma(Y)\ge a+dW$, then
$\SubSums{Y}$ intersects
\[
    \{a,a+d,\ldots,a+(W-1)d\}.
\]
\end{lemma}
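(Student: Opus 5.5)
The plan is to group the occurrences of $Y$ into \emph{blocks} whose weights are multiples of $d$ and at most $dW$, and then add the blocks one at a time. The running sums then stay in $d\Z$, so they cannot jump over the progression.

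\textbf{Step 1 (extracting a block).} Among any $d$ occurrences $y_1,\ldots,y_d$ of $Y$, some nonempty submultiset has weight divisible by $d$. To see this, consider the $d+1$ prefix sums $0,y_1,y_1+y_2,\ldots,y_1+\cdots+y_d$ modulo $d$. By pigeonhole two of them coincide, and their difference is a nonempty consecutive run with weight in $d\Z$. This block has at most $d$ occurrences, each at most $W$, so its weight lies in $[d,dW]$.

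\textbf{Step 2 (exhausting $Y$).} Apply Step 1 repeatedly to the occurrences not yet used, while at least $d$ of them remain. This yields disjoint blocks $B_1,\ldots,B_m$ with weights in $d\Z\cap[d,dW]$. Fewer than $d$ occurrences are left over, with total weight at most $(d-1)W$. Hence
\[
  \sum_{j}\Sigma(B_j)\ \ge\ \Sigma(Y)-(d-1)W\ \ge\ a+dW-(d-1)W\ =\ a+W\ \ge\ a .
\]

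\textbf{Step 3 (the hitting argument).} Let $S_k=\Sigma(B_1)+\cdots+\Sigma(B_k)$, so $S_0=0<a$, and by Step 2 some $S_k\ge a$. Take the least such $k\ge1$. Each $S_k$ lies in $\SubSums{Y}$, being the weight of a union of disjoint blocks.

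By minimality $S_{k-1}<a$. Both $S_{k-1}$ and $a$ are multiples of $d$, the latter because $d\mid a$, so $S_{k-1}\le a-d$. Therefore
\[
 a\le S_k\le a-d+dW=a+(W-1)d .
\]
Since $S_k\in d\Z$ as well, it equals one of $a,a+d,\ldots,a+(W-1)d$, as required.

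The only delicate point is Step 2's accounting. The leftover occurrences cost up to $(d-1)W$, and this is exactly what the slack $dW$ in the hypothesis absorbs. The divisibility $d\mid a$ is what converts $S_{k-1}<a$ into $S_{k-1}\le a-d$, which keeps the overshoot inside the last term of the progression.
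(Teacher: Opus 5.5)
Your proof is correct and follows the paper's argument essentially step for step: pigeonhole on prefix sums modulo $d$ to extract disjoint blocks of weight in $d\Z\cap[d,dW]$, at most $(d-1)W$ lost to leftover occurrences, and then the first block prefix reaching $a$. The only cosmetic difference is that you use $d\mid a$ early to sharpen $S_{k-1}<a$ to $S_{k-1}\le a-d$, whereas the paper bounds the prefix total by $a\le q<a+dW$ and applies divisibility at the end.
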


\begin{proof}
Among any $d$ occurrences of $Y$, consider the $d+1$ prefix sums,
including the empty prefix. Two are equal modulo $d$, so the
intervening nonempty block has weight divisible by $d$ and at most
$dW$.

Repeatedly extract such a block while at least $d$ occurrences remain.
Write the extracted block weights as $q_1,\ldots,q_k$. They are positive
multiples of $d$, each at most $dW$. Fewer than $d$ occurrences remain
unused, so
\[
    \sum_{j=1}^k q_j
    \ge \Sigma(Y)-(d-1)W
    > \Sigma(Y)-dW
    \ge a.
\]
Take the shortest prefix of blocks whose total weight $q$ is at least
$a$. The preceding total is less than $a$, and the last block has
weight at most $dW$, so
\[
    a\le q<a+dW.
\]
Since both $a$ and $q$ are divisible by $d$, this places $q$ among
$a,a+d,\ldots,a+(W-1)d$. The blocks are disjoint, so $q$ is a subset
sum of $Y$.
\end{proof}

Thus our task is to construct $W$ consecutive positive multiples of
$d$ inside $\SubSums{X}$, with first point $a$ such that
$a+dW\le\Sigma(Y)$.

\paragraph{From short sums to subset sums.}

We construct the progression using popular pair sums, as in dense
Subset Sum~\cite{BW21}. First, Kneser's theorem gives a progression
formed from short sums of elements of a set $B$, allowing repetition.
We then choose $B$ so that each of its elements has many disjoint
representations as a pair of occurrences of $X$. These alternatives
let us replace every use of an element of $B$ by an unused pair.

\begin{theorem}[Kneser's theorem; see \cite{DeVos13}]
\label{thm:kneser}
For finite nonempty subsets $A,B$ of an abelian group, let
$H=\{h:A+B+h=A+B\}$. Then
\[
    |A+B|\ge |A+H|+|B+H|-|H|.
\]
\end{theorem}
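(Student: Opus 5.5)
I would prove the statement in two stages: first reduce to the case where the stabilizer is trivial, and then run an induction on $|B|$ using the Dyson $e$-transform. The statement must be proved in its full periodic form, because transformed sets can have sumsets with a nontrivial stabilizer.

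\emph{Reduction to a quotient.} Since $A+B+H=A+B$, the sumset is a union of $H$-cosets. Let $\varphi\colon G\to G/H$ be the quotient map. Then
\[
|A+B|=|H|\,|\varphi(A)+\varphi(B)|,\qquad |A+H|=|H|\,|\varphi(A)|,\qquad |B+H|=|H|\,|\varphi(B)|,
\]
and the stabilizer of $\varphi(A)+\varphi(B)$ is trivial. So it suffices to prove the following: if the stabilizer is trivial, then $|A+B|\ge|A|+|B|-1$. The induction below applies the general form to smaller pairs, so I would carry the full inequality as the induction hypothesis and use the reduction only for bookkeeping.

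\emph{Induction on $|B|$.} The case $|B|=1$ is immediate. Fix $a\in A$ and $b\in B$, put $e=a-b$, and define
\[
A'=A\cup(B+e),\qquad B'=B\cap(A-e).
\]
Then $b\in B'$, $A'+B'\subseteq A+B$, and $|A'|+|B'|=|A|+|B|$.

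\emph{Degenerate case.} Suppose $B'=B$ for every choice of $a$ and $b$. Then $A+B-B=A$, so $A$ is a union of cosets of $K=\langle B-B\rangle$. Hence $A+B=A+b$, so $H\supseteq K$. It follows that $A+H=A$ and that $B+H$ is a single $H$-coset. The inequality then holds with equality: $|A|\ge|A|+|H|-|H|$.

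\emph{Main case.} Otherwise choose $e$ so that $B'\subsetneq B$; among such choices I would take one minimizing $|B'|$. Apply the induction hypothesis to $(A',B')$, and let $H'$ be the stabilizer of $A'+B'$. This gives
\[
|A+B|\ge|A'+B'|\ge|A'+H'|+|B'+H'|-|H'|.
\]
If $H'\subseteq H$ and the $H'$-cosets meeting $A'$ and $B'$ account for those of $A$ and $B$, the bound transfers. In general it does not, since $H'$ and $H$ can be unrelated.

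\emph{Main obstacle: comparing $H'$ with $H$.} This is where I expect the real difficulty. My first attempt would be a minimal counterexample: $|B|$ minimal, then $|A|$ maximal, which forces $A=A+H$ and $B=B+H$. I would then pass to $A+H'$ and $B+H'$. Two observations drive the argument. First, $A+B$ contains both $A'+B'+H'=A'+B'$ and the translates $A+(B\setminus B')$. Second, the cosets of $H'$ meeting $B\setminus B'$ are disjoint from those already counted in $A'+B'$, up to the overlap governed by $A\cap(B'+e)$. With these, I would count $H'$-cosets and conclude by one of two routes. Either $|A+B|$ picks up at least the missing $|B\setminus B'|$ elements rounded up to full $H'$-cosets, or $A+B$ turns out to be $(H+H')$-periodic. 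In the second case, maximality of $A$ and minimality of $B$ yield a contradiction.

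If this coset count becomes messy, I would fall back on DeVos's formulation, which the paper cites. That version inducts on $|B|$ and applies the hypothesis to the pair $(A+H',\,B\cap(\text{a coset of }H'))$ to absorb the mismatch between $H'$ and $H$.
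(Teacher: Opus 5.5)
The paper does not prove this theorem; it cites DeVos. Your argument therefore has to stand on its own, and its decisive step is missing.

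\textbf{What is correct.} The quotient reduction to $|A+B|\ge|A|+|B|-1$ for a pair with trivial stabilizer is right. So are the $e$-transform facts ($b\in B'$, $A'+B'\subseteq A+B$, $|A'|+|B'|=|A|+|B|$). The degenerate case is also right: $A+B-B=A$ gives $A+B=A+b$ and $H\supseteq\langle B-B\rangle$, with equality in the bound.

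\textbf{The gap.} The main case is the entire content of Kneser's theorem, and there you only describe what you hope to find. After normalizing to $H=\{0\}$, the induction hypothesis gives
\[
 |A+B|\ge|A'+B'|\ge|A'+H'|+|B'+H'|-|H'|\ge|A|+|B|-|H'|.
\]
So whenever $H'\ne\{0\}$ it remains to show
\[
 \bigl|(A+B)\setminus(A'+B')\bigr|
 +\bigl|(A'+H')\setminus A'\bigr|
 +\bigl|(B'+H')\setminus B'\bigr|
 \ge|H'|-1 .
\]
Nothing in the proposal establishes this.

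\textbf{Why your sketch does not close it.} Since $|A'|+|B'|=|A|+|B|$, the transform loses nothing in size; the only loss is in the correction term, $|H'|$ instead of $1$. Recovering ``the missing $|B\setminus B'|$ elements'' is therefore not the relevant bookkeeping. The second branch of your dichotomy, $(H+H')$-periodicity of $A+B$, just means $H'\subseteq H$, which is the easy case. So the dichotomy restates the goal rather than proving it.

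Neither of your two observations helps as stated. The inclusion $A+(B\setminus B')\subseteq A+B$ is trivial and gives no bound on how much of it escapes the $H'$-periodic set $A'+B'$. The claimed disjointness of the $H'$-cosets meeting $B\setminus B'$ from those of $A'+B'$ compares cosets of $B$ with cosets of the sumset, so it is not a well-defined statement. A precise version is
\[
 (A+B)\setminus(A'+B')\subseteq\bigl(A\setminus(B'+e+H')\bigr)+\bigl(B\setminus(B'+H')\bigr),
\]
which follows from $A+B'\subseteq A'+B'$, $(B+e)+B'\subseteq A'+B'$, and $H'$-periodicity. Converting this into the displayed bound is exactly the coset count you leave undone. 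Deferring to ``DeVos's formulation'' names a source without giving the argument.

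\textbf{A smaller error.} Your extremal setup is wrong about $B$. Taking $|B|$ minimal and then $|A|$ maximal (or simply replacing $A$ by $A+H$) does give $A=A+H$. But then every $B_0\subseteq B$ with $B_0+H=B+H$ has $A+B_0=A+B$ and the same right-hand side. Minimality therefore forces $B$ to meet each $H$-coset in exactly one element, not $B=B+H$. This is harmless only because $H$ is trivial after your quotient reduction.
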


\begin{lemma}[A progression of short sums]
\label{lem:homogeneous}
Let $B\subseteq[M]$ be nonempty, put $b=|B|$, and let $d=\gcd(B)$.
Then $\bigcup_{1\le h\le\lfloor8M/b\rfloor}hB$ contains $M$
consecutive positive multiples of $d$.
\end{lemma}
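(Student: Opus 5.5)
The plan is to reduce to $\gcd 1$, grow the sumsets modulo the largest element by Kneser's theorem (\cref{thm:kneser}) until they cover every residue, and then shift by multiples of that element to fill a long run of consecutive integers.

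\emph{Reduction.} Put $B'=B/d$. Then $\gcd(B')=1$, $|B'|=b$, and $B'\subseteq[m]$ with $m=\max B'\le M/d$. It suffices to find $M$ consecutive positive integers in $\bigcup_{1\le h\le\lfloor 8M/b\rfloor}hB'$, since multiplying by $d$ turns them into consecutive multiples of $d$ in the corresponding $hB$. Note that $b\le m\le M$.

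\emph{Covering all residues modulo $m$.} Work in $\Z_m$ and let $S_h$ be the image of $hB'$. Since $m\in B'$ and $m\equiv0$, we have $S_{h-1}\subseteq S_{h-1}+B'=S_h$. The elements of $B'$ lie in $[1,m]$, so they are distinct modulo $m$, and the image of $B'$ has exactly $b$ residues. Suppose $S_h\ne\Z_m$, and let $H$ be the stabilizer of $S_h=S_{h-1}+B'$.
\begin{itemize}
\item Then $H$ is a proper subgroup.
\item Because $\gcd(B',m)=1$, the image of $B'$ generates $\Z_m$, so it is not contained in one coset of $H$. Hence $|B'+H|\ge2|H|$.
\item Also $|B'+H|\ge b$.
\end{itemize}
Therefore $|B'+H|-|H|\ge\max\{b-|H|,|H|\}\ge b/2$. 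Kneser's theorem then gives
\[
|S_h|\ge|S_{h-1}+H|+b/2\ge|S_{h-1}|+b/2 .
\]
Starting from $|S_1|=b$, we get $S_{h_0}=\Z_m$ for some $h_0\le 2m/b+1$.

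\emph{Filling an interval.} Take any integer $n\in[h_0m,\,h_0m+M-1]$. Choose $x\in h_0B'$ with $x\equiv n\pmod m$. Since every element of $B'$ is at most $m$, we have $x\le h_0m\le n$. Write $n=x+jm$, where
\[
j\le n/m\le h_0+(M-1)/m .
\]
Adding $j$ copies of $m\in B'$ gives $n\in(h_0+j)B'$. The number of summands used is at most
\[
2h_0+M/m\le 4m/b+2+M/b\le 4M/b+2+M/b\le 8M/b,
\]
using $b\le m\le M$ (so $M/m\le M/b$) and $2\le 3M/b$. As the count is an integer, it is at most $\lfloor8M/b\rfloor$. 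The run $[h_0m,h_0m+M-1]$ consists of positive integers, which completes the argument.

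The main obstacle is the Kneser step. We need growth of at least $b/2$ at every step, uniformly in the unknown stabilizer $H$. This is exactly where the facts $0\in B' \bmod m$ and $\gcd(B')=1$ are used: the first makes the sets $S_h$ nested, and the second keeps the image of $B'$ from sitting inside a single coset of a proper subgroup. The remaining work is constant bookkeeping to fit the bound $\lfloor 8M/b\rfloor$.
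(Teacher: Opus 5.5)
Your proof is correct and follows essentially the same route as the paper: divide out the gcd, grow the sumsets modulo $m=\max B'$ by at least $b/2$ per step using Kneser's theorem and the two-coset bound on $|B'+H|-|H|$, then pad with copies of $m\in B'$ to fill an interval, with only cosmetic differences in the constant bookkeeping. One justification needs tightening: generating $\Z_m$ does not by itself keep a set out of a single coset of a proper subgroup (e.g.\ $\{1\}\subseteq\Z_m$ with $H$ trivial), so you must also use that the image of $B'$ contains $0$ (true since $m\in B'$); the paper's proof uses this fact at that step.
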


\begin{proof}
After dividing out the gcd, work modulo the largest value. We first
reach every residue with few summands. Adding copies of the largest
value then fills an interval of integers.

Put $D=B/d$, $m=\max D$, and $A=D\bmod m$.
The $b$ values of $D$ give distinct residues, including zero, and
$A$ generates $\Z/m\Z$. If a nonempty sumset $S+A$ is proper,
its stabilizer $H$ is proper. Since $0\in A$ and $A$ generates
the group, $A$ meets at least two $H$-cosets. Consequently,
\[
    |A+H|-|H|
    \ge \frac{|A+H|}{2}
    \ge \frac b2,
\]
and Kneser's theorem gives $|S+A|\ge|S|+b/2$.
Starting from $\{0\}$, it follows that
$kA=\Z/m\Z$ for $k=\lceil2m/b\rceil$; the case $m=1$ is
immediate.

Choose one $x\in kD$ in each residue class. Every such $x$ is at
most $km$. For $z\in[km,km+M-1]$, choose the representative of
its residue and write $z=x+qm$ with $q\ge0$. Because $m\in D$,
this is a representation using at most
\[
    k+q
    \le2k+\lceil M/m\rceil
    \le6m/b+M/m+1
    \le8M/b
\]
terms of $D$. The last inequality uses $b\le m\le M$.
Multiplying by $d$ gives the claimed progression.
\end{proof}

\begin{proof}[Proof of \cref{thm:intersection}]
Regard equal-valued occurrences of $X$ as distinct objects, and let
$m_a$ be the multiplicity of value $a$. For each sum $z$, let $t_z$
be the maximum number of pairwise disjoint occurrence pairs with sum
$z$. Thus $t_z\le\lfloor r/2\rfloor$.

Disjointness here is only required among pairs with the same sum.
Families corresponding to different sums may overlap. We will account
for these overlaps when we turn a short sum into a subset sum.

\paragraph{There are many disjoint pair representations in total.}

For distinct values $a,b$, we can form $\min\{m_a,m_b\}$ disjoint
pairs with sum $a+b$. Since $m_a,m_b\le\mu$,
\[
    \min\{m_a,m_b\}\ge \frac{m_am_b}{\mu}.
\]
For pairs using the same value $a$, we can form
$\lfloor m_a/2\rfloor$ disjoint pairs, and
\[
    \left\lfloor\frac{m_a}{2}\right\rfloor
    \ge \frac{m_a(m_a-1)}{2\mu}.
\]

For a fixed sum $z$, distinct complementary value classes do not
overlap: a value $a$ can only be paired with $z-a$. We may therefore
take all the pairs just described simultaneously for that sum.
Summing over $z$ gives
\begin{equation}\label{eq:intersection_pair_mass}
\begin{aligned}
    \sum_z t_z
    &\ge
    \frac1\mu\sum_{a<b}m_am_b
    +\frac1{2\mu}\sum_a m_a(m_a-1)\\
    &=\frac{r(r-1)}{2\mu}
    \ge\frac{r^2}{4\mu}.
\end{aligned}
\end{equation}
The last inequality uses $r\ge2$, which follows from the density
hypothesis.

The interpretation is that the $\binom r2$ occurrence pairs yield
at least $\binom r2/\mu$ disjoint representations when counted
separately for each sum. Bounded multiplicity prevents too many
representations of one sum from depending on the same occurrences.

\paragraph{Choosing a threshold with enough values and enough alternatives.}

For an integer threshold $v\ge1$, define
\[
    B_v=\{z:t_z\ge v\},
    \qquad b_v=|B_v|.
\]
Every element of $B_v$ has at least $v$ disjoint pair representations.
On the other hand, \cref{lem:homogeneous}, applied with $M=2W$,
uses at most $\lfloor16W/b_v\rfloor$ elements of $B_v$ to represent
each progression point.

This exposes the two requirements on our threshold. We need $b_v$
large so that the representations are short and the progression starts
early. We also need $vb_v$ large so that the number of pairs used in
one representation is much smaller than the number $v$ of available
alternatives for each pair sum.

Put $L_r=\lceil\log(2r)\rceil$. We claim that some
$1\le v\le\lfloor r/2\rfloor$ satisfies
\begin{equation}\label{eq:popular-pairs}
    b:=b_v
    \ge
    \frac{r}{4\mu}
    +\frac{r^2}{8\mu L_rv}.
\end{equation}
The two terms on the right guarantee the two requirements separately:
the first bounds $b$ from below, while the second bounds $vb$ from
below.

To prove the claim, count each sum $z$ once for every threshold
$v\le t_z$. This gives the identity
\[
    \sum_{v=1}^{\lfloor r/2\rfloor}b_v=\sum_z t_z.
\]
If \eqref{eq:popular-pairs} failed at every threshold, then
\[
\begin{aligned}
    \sum_z t_z
    &<
    \frac{r}{4\mu}\left\lfloor\frac r2\right\rfloor
    +\frac{r^2}{8\mu L_r}
       \sum_{v=1}^{\lfloor r/2\rfloor}\frac1v\\
    &\le
    \frac{r^2}{8\mu}+\frac{r^2}{8\mu}
    =\frac{r^2}{4\mu},
\end{aligned}
\]
contradicting \eqref{eq:intersection_pair_mass}. Here we used
$\sum_{v\le r/2}1/v\le L_r$.

Fix such a threshold $v$. By \eqref{eq:popular-pairs} and the density
hypothesis,
\[
    b\ge\frac{r}{4\mu},
    \qquad
    vb\ge\frac{r^2}{8\mu L_r}\ge64W.
\]

\paragraph{Realizing the progression with distinct occurrences of $X$.}

Apply \cref{lem:homogeneous} to $B_v\subseteq[2W]$.
It gives a progression $P$ of $2W$ consecutive positive multiples
of $d=\gcd(B_v)$, each represented by at most
$H=\lfloor16W/b\rfloor$ elements of $B_v$. The bounds above give
\[
    H\le\frac{64\mu W}{r},
    \qquad
    H\le\frac v4.
\]

We must still check that these representations are subset sums of
$X$: the short-sum lemma allows repetitions in $B_v$, whereas an
occurrence of $X$ may be used only once.

Fix one representation $z_1+\cdots+z_h$ of a point of $P$, where
$h\le H$. For each $z_i$, fix a family of $v$ disjoint occurrence
pairs having sum $z_i$. Choose one pair from each family in order.

Before choosing the $i$th pair, exactly $2(i-1)$ occurrences have
been used. Because the $v$ candidates for $z_i$ are pairwise
disjoint, each used occurrence rules out at most one candidate.
Thus at most $2(i-1)$ candidates are unavailable. But
\[
    2(i-1)<2H\le v/2<v,
\]
so an unused pair remains.

This works even when some of the $z_i$ are equal. Hence every point
of $P$ has a representation using distinct occurrences of $X$, and
\[
    P\subseteq\SubSums{X}.
\]

\paragraph{The progression is early enough and its step is small enough.}

Write $a=\min P$. Every point of $P$ is a sum of at most $H$ pair
sums, each at most $2W$, so $\max P\le2WH$.

Also, $B_v$ contains $b$ distinct positive multiples of $d$, all at
most $2W$. Therefore $bd\le2W$. Together these bounds give
\begin{equation}\label{eq:intersection-progression}
    a\le\max P\le2WH\le\frac{128\mu W^2}{r},
    \qquad
    d\le\frac{2W}{b}\le\frac{8\mu W}{r}.
\end{equation}
In particular,
\[
    a+dW
    \le\frac{136\mu W^2}{r}
    \le\frac{256\mu W^2}{r}
    \le\Sigma(Y).
\]

Apply \cref{lem:intersection_hit_progression} to the first $W$ points
of $P$. It gives a positive subset sum of $Y$ lying in $P$, and
every point of $P$ is a subset sum of $X$. This proves the theorem.
\end{proof}

\section{Knapsack with Small Weights or Profits}\label{sec:knapsack}
The input consists of $N$ records $(w_i,p_i,u_i)$, each representing
$u_i$ identical items of weight $w_i$ and profit $p_i$. We seek
integers $0\le x_i\le u_i$ maximizing $\sum_i p_i x_i$ subject to
$\sum_i w_i x_i\le C$, where $C\ge0$. Throughout this section, weights are positive
integers and multiplicities may be binary encoded. For a choice of
occurrences $x$, write $\wt(x)$ and $\prf(x)$ for its weight and
profit.

Order items by nonincreasing density $p_i/w_i$ and let $g$ be
the longest prefix of weight at most $C$. After discarding
nonpositive profits, both $g$ and an optimum leave less than $W$
unused capacity, unless all items fit. Thus the optimum has weight
$\wt(g)+\delta$ for some integer $|\delta|<W$. We will compute
the best profit for every such $\delta$ in one dynamic program.

The construction uses only the fact that $g$ is a density prefix.
We therefore state it for any such prefix and allow profits of
either sign. This form will give the small-profit bound as well:
use profit as the new weight and negative weight as the new profit.

\begin{lemma}[Optima near a density prefix]\label{lem:local-profile}
Suppose $w_i\in[W]$ and profits are arbitrary rationals. Given a
prefix $g$ in nonincreasing density order, one can compute
\[
 H(z)=\max\{\prf(x):\wt(x)=z\}
 \qquad\text{for every integer }|z-\wt(g)|<W
\]
in $O(N+W^2\log^3(W+2))$ operations. Unattainable weights receive
value $-\infty$. A witness for any one finite entry can be recovered
within the same bound.
\end{lemma}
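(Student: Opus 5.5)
We follow the plan sketched in \cref{sec:overview}: reduce to a kernel of $O(W^2)$ occurrences, partition the kernel, and run a dynamic program over signed corrections to $g$.

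\textbf{Step 1: proximity.} Fix a target $z$ with $|z-\wt(g)|<W$. Among maximizers of $H(z)$, choose $x$ minimizing the symmetric difference with $g$, breaking remaining ties toward earlier occurrences in the density order. Give removals ($g\setminus x$) negative signs and additions ($x\setminus g$) positive signs. Suppose some nonempty submultiset of these corrections sums to zero. Undoing it preserves weight. Every removed occurrence has density at least that of every added occurrence, so undoing does not decrease profit. This contradicts the choice of $x$. Hence the corrections are zero-sum-free, and their total $z-\wt(g)$ has magnitude less than $W$. By \cref{lem:walk}, at most $2W-1$ occurrences change.

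Within one weight class both $g$ and $x$ prefer the most profitable occurrences. So $x$ removes a suffix of the selected occurrences of that weight and adds a prefix of the unselected ones. It therefore suffices to keep, per weight, the last $2W$ selected and first $2W$ unselected occurrences. Computing this kernel from records with binary multiplicities costs $O(N)$ after bucketing by weight; I would avoid a full sort by selecting within buckets, using linear-time selection on densities, since $g$ is given. The kernel has $O(W^2)$ occurrences. Each weight class then has a concave profit profile for removals and for additions.

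\textbf{Step 2: the partition and change bounds.} Group weights by kernel multiplicity level $(m/2,m]$. Repeatedly take the larger side of the boundary of $g$. Assign its farther half as a part $I$ and keep the nearer half as a buffer. For the buffer majority (selected or unselected), the ordered-exchange argument gives two multisets with no common positive subset sum: the buffer majority versus the part's removals, or the buffer majority versus the additions. The exchange is valid because buffer occurrences lie between the part and the boundary in density order, so swapping only improves profit or the tie-break. Apply \cref{thm:intersection} contrapositively, with $X$ the buffer majority, $\mu=m$, and $r=\Theta(s)$. This gives $\Delta_I=O(mW^2/s)$ on the relevant side. Since the weight totals of additions and removals differ by less than $W$, the same bound holds for the other side. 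The part has $d_I<2s/m$ distinct weights, so $d_I\Delta_I=O(W^2)$. Small remainders that fail the hypothesis $r^2\ge 512\mu W\lceil\log 2r\rceil$ have $s=O(\sqrt{mW\log W})$; there the trivial bound $\Delta_I\le 2W^2$ still gives $d_I\Delta_I=O(W^2\log W)$ per level. Summing over $O(\log W)$ levels and $O(\log W)$ steps per level gives $\sum_I d_I\Delta_I=O(W^2\log^2 W)$. Within a level the $\Delta_I$ grow geometrically, so at most $\kappa=O(\log W)$ parts fall in any dyadic scale.

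\textbf{Step 3: the DP.} Sort parts by $\Delta_I$. Maintain a table indexed by the signed offset from $\wt(g)$ within the current radius $R=\sum_{I'\le I}\Delta_{I'}+W$, which is $O(\kappa\Delta_I)$. For each weight class in a part, apply \cref{lem:concave-convolution} twice: once with the concave removal profile and once with the addition profile, restricted to the output interval of the new radius. This costs $O(R)$ per weight and $O(\kappa\sum_I d_I\Delta_I)=O(W^2\log^3 W)$ in total. Finally read off $|z-\wt(g)|<W$. Witnesses are recovered by storing the argmax indices; the table sizes are already within budget.

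\textbf{Main obstacle.} The crux is Step 2. The proximity argument must be applied to a single optimum $x$ that serves all parts simultaneously, because the tie-break has to be fixed before the partition is used. One must also verify that the exchange between the buffer and the part really is profit-nonincreasing in every case, including equal densities and parts lying outside $g$. I would handle the equal-density case by making the tie-break a lexicographic order on occurrence positions, and treat the outside case symmetrically by swapping the roles of additions and removals.
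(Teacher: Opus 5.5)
Your overall plan matches the paper's: the $O(W^2)$ kernel, parts taken as the half farther from the boundary with the nearer half as buffer, the contrapositive of the intersection theorem, and a DP processed by dyadic scales. Step~3 is fine. However, two steps fail as written.

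\emph{Remainder parts.} With $\Delta_I\le 2W^2$ you get $d_I\Delta_I<(2s/m)\cdot 2W^2=4W^2s/m$. Under the remainder condition $s=O(\sqrt{mW\log W})$, this is $O(W^{5/2}\sqrt{\log W/m})$, not $O(W^2\log W)$. At $m=1$, a remainder of order $\sqrt{W\log W}$ singleton weights, each updated over a table of width $\Theta(W^2)$, already costs about $W^{5/2}$, which exceeds the target. Use the part's own total weight instead, $\Delta_I=\wt(I)\le Ws$. This gives $d_I\Delta_I<2Ws^2/m=O(W^2\log W)$.

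\emph{Tie-break.} You choose $x$ to minimize $|g\triangle x|$ first, but the direct buffer case needs an exchange that can increase this quantity. Take $I\subseteq g$ with selected buffer majority $X=B\cap x$. An equal-weight swap re-selects $A\subseteq I\setminus x$ and deselects $B'\subseteq X$. It undoes $|A|$ changes but creates $|B'|$ new removals. If the densities involved coincide, profit is unchanged. If also $|B'|>|A|$ (say, the buffer weights are lighter), the swapped optimum has a larger symmetric difference. Then nothing contradicts your choice of $x$, and the intersection theorem cannot be applied.

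A secondary lexicographic rule does not rescue this, since the primary criterion already prefers $x$. The mirrored case outside $g$, with $X=B\setminus x$ and $Y=I\cap x$, fails the same way. Instead, make the lexicographically greatest optimal incidence vector (in the fixed density order with index tie-breaks) the \emph{sole} criterion. Then no earlier unselected set has the same weight as a later selected set, which covers all four buffer cases. Your proximity bound also follows from this rule, because every removal precedes every addition.

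Finally, you should make explicit that the level sets $J$ are recomputed from the \emph{remaining} multiplicities after each part is assigned. Weights whose count drops to at most $m/2$ are deferred to later levels. If levels are fixed by kernel multiplicity, $J$ can keep all its original distinct weights while $s$ shrinks geometrically. Then both $d_I<2s/m$ and $d_I\Delta_I=O(W^2)$ fail.
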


The proof first confines all changes from $g$ to $O(W^2)$
occurrences. On this kernel, we describe how to process a partition
with a known change bound for each part. This identifies the cost
that the partition must control. We then construct the partition
and substitute its bounds. We use the same kernel and partition
for every target $z$ in the lemma.

\subsection{Ordered Exchanges and a Kernel Near the Prefix}
\label{sec:knapsack-kernel}
Fix the density order, breaking ties by record and occurrence index.
For each attainable target $z$ in the lemma, choose an optimum $x_z$
whose incidence vector is lexicographically greatest in this order.
It satisfies the following exchange rule. If nonempty sets
$A\subseteq\overline{x_z}$ and $B\subseteq x_z$ have every
occurrence of $A$ before every occurrence of $B$, then
\begin{equation}\label{eq:ordered-exchange}
 \wt(A)\ne\wt(B).
\end{equation}
If their weights were equal, replace $B$ by $A$. Every item of
$A$ has density at least that of every item of $B$, so the
exchange cannot decrease profit. If profit stays the same, the
first changed coordinate is an earlier item that becomes selected,
improving the tie-break. Both conclusions contradict the choice of
$x_z$. The argument allows negative profits. Applying it to two
items of the same weight also shows that $x_z$ selects a prefix
within each weight class.

Let $E^-=g\setminus x_z$ be the removals from $g$ and
$E^+=x_z\setminus g$ the additions. All removals precede all
additions, and
\begin{equation}\label{eq:knapsack-balance}
 \wt(E^+)-\wt(E^-)=z-\wt(g)\in(-W,W).
\end{equation}
Assign an addition its positive weight and a removal its negative
weight. A nonempty zero-sum submultiset would supply equal-weight
subsets of $E^-$ and $E^+$. The first is unselected and lies
entirely before the selected second, contradicting
\eqref{eq:ordered-exchange}. Applying \cref{lem:walk} to these
signed changes gives
\begin{equation}\label{eq:knapsack-proximity}
 |g\triangle x_z|\le2W-1.
\end{equation}

The location of these changes matters as much as their number.
Within each weight class, both $g$ and $x_z$ choose a prefix.
Their symmetric difference is therefore adjacent to the end of the
prefix chosen by $g$. Retain, for each weight,
the last $2W$ occurrences selected by $g$ and the first $2W$
not selected by $g$, or all available occurrences if fewer exist.
Fix every other choice as in $g$. Equation~\eqref{eq:knapsack-proximity}
shows that this common kernel preserves every $x_z$, and its size
is at most $4W^2$.

\paragraph{Working with binary multiplicities.}
We expand only the kernel. When $g$ must first be found for a
capacity constraint, use linear-time selection on record
densities~\cite{BFPRT73}. At a median density, compute
the total weight of all occurrences on the earlier side. Comparing
this mass with the remaining capacity determines which half of the
records contains the break. Each step discards at least half the
records; division determines the chosen multiplicity in the final
record. Ties use the fixed record order.

Within each weight class, the same selection procedure, now using
occurrence counts as masses, finds the required selected suffix
and unselected prefix. Across all classes this takes $O(N+W)$
operations. Expand and sort only the retained occurrences, storing
their original record identifiers. This costs
$O(W^2\log(W+2))$.

Subtract the weight and profit of all fixed selected occurrences.
The restricted $g$ remains a density prefix and
\eqref{eq:ordered-exchange}--\eqref{eq:knapsack-balance} remain valid.
For the rest of the proof let $n\le4W^2$ be the kernel size and
$L=\lceil\log(2n)\rceil$. An empty kernel is immediate.

\subsection{The Dynamic Program Needs Local Change Bounds}
\label{sec:knapsack-dp}
The global bound \eqref{eq:knapsack-proximity} permits changes of
total weight $\Theta(W^2)$. Applying one update for each possible
weight over that whole range would still cost $\Theta(W^3)$.
We instead partition the kernel into parts $I$ with integer bounds
$\Delta_I\ge1$ such that
\begin{equation}\label{eq:part-proximity}
 \wt\bigl((g\triangle x_z)\cap I\bigr)\le\Delta_I
 \qquad\text{for every attainable target }z.
\end{equation}
The symmetric difference counts both additions and removals with
positive weight. For example, additions of total weight $a$ and
removals of total weight $b$ contribute $a+b$ to this bound,
even if their net change $a-b$ is small. Hence any partial
correction from $I$, regardless of processing order, has signed
weight change of magnitude at most $\Delta_I$.

Let $d_I$ count the distinct weights in $I$. A DP confined to
weight changes in $[-\Delta_I,\Delta_I]$ would take
$O(d_I\Delta_I)$ time: one linear-time concave-convolution update
for each weight. When combining parts, the table must also contain
the changes made by previously processed parts.

Set $h(I)=\lceil\log\Delta_I\rceil$ and process these scales in
increasing order. If at most $\kappa$ parts have any one scale,
then all parts processed through scale $h$ have total change bound
less than $2\kappa2^h$. Thus an update for a part of bound
$\Delta_I$ needs work only $O(\kappa\Delta_I)$.

\begin{lemma}[Dynamic programming from local bounds]
\label{lem:partition-dp}
Suppose \eqref{eq:part-proximity} holds and at most $\kappa$ parts
have any one scale. The operation count for computing all values in
\cref{lem:local-profile} on the kernel, including a witness for
one requested value, is
\[
 O\!\left(n\log(n+2)+\kappa\sum_I d_I\Delta_I\right).
\]
\end{lemma}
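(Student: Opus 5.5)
We run a dynamic program over signed weight offsets $\delta=\wt(x)-\wt(g)$ relative to the restricted prefix $g$. The kernel is processed one part at a time, and within a part one distinct weight at a time. We first sort the kernel by weight and by density order within each weight, which costs $O(n\log(n+2))$. Then we order the parts by increasing scale $h(I)$. Before any part is processed, the table has the single finite entry $T_0(0)=0$. After parts $I_1,\dots,I_j$ have been processed, we keep $T_j(\delta)$ only for $|\delta|\le R_j$, where $R_j=\sum_{i\le j}\Delta_{I_i}$. Here $T_j(\delta)$ is the best profit change obtainable by altering only occurrences in $I_1\cup\dots\cup I_j$ with net weight change $\delta$. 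By \eqref{eq:part-proximity}, for every attainable target $z$ the restriction of $g\triangle x_z$ to the processed parts has signed weight change of magnitude at most $R_j$. So $x_z$ restricted to the processed parts stays inside the window at every stage. Every finite table value is realized by an actual choice, so the final entries at $\delta=z-\wt(g)$ (with the fixed profit added back) equal $H(z)$ exactly: they are no larger, and they are at least $\prf(x_z)$.

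\textbf{Per-weight update.} Fix a part $I$ and one weight $w$ occurring in $I$. Let the occurrences of weight $w$ in $I$ be split into those selected by $g$ (a suffix of $g$'s prefix in that class) and those not selected (a prefix of the rest). Changing this class by $c$ occurrences means removing the last $c$ selected ones if $c<0$, or adding the first $c$ unselected ones if $c>0$. By the prefix property derived from \eqref{eq:ordered-exchange}, each $x_z$ has this form within every weight class. The profit change $b(c)$ is concave in $c$. For $c>0$ its increments are the profits of successive unselected items, which are nonincreasing in density order because the weight is fixed. For $c<0$ its increments are profits of selected items, and these dominate the later ones. At $c=0$ the increment condition also holds, since selected items precede unselected ones.

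The update is $T\leftarrow T\xc b'$, where $b'(cw)=b(c)$. This is a max-plus convolution with a concave array on the lattice $w\Z$. We split the table into $w$ residue classes modulo $w$ and apply \cref{lem:concave-convolution} to each class. Class $\rho$ has input length $O(R/w+1)$ and output length $O(R'/w+1)$, where $R'$ is the new radius. The sum over classes is $O(R'+w)$. Since $w\le W$, the $+w$ term could hurt. We avoid it by letting the input array be indexed only by the finite part of the window and noting that $w\le \Delta_I$ whenever $I$ contains weight $w$. If some occurrence of weight $w$ can change, then $\Delta_I\ge w$. Otherwise we drop that weight entirely. Even without dropping, the per-part range $[-\Delta_I,\Delta_I]$ restricts $|c|w\le\Delta_I$, so the convex array has length $O(\Delta_I/w+1)$. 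Hence each update costs $O(R'+w)=O(R')$ provided $\Delta_I\ge w$. We may assume this, because occurrences of weight larger than $\Delta_I$ can never change and can be fixed and removed from $I$.

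\textbf{Accounting.} At a part of scale $h$, all previously processed parts have scale at most $h$, with at most $\kappa$ parts at each scale. Therefore $R'\le\sum_{h'\le h}\kappa 2^{h'}<2\kappa 2^{h}\le 4\kappa\Delta_I$. Part $I$ performs $d_I$ updates, each costing $O(\kappa\Delta_I)$, for a total of $O(\kappa\sum_I d_I\Delta_I)$. Preparing the arrays $b$ (prefix sums of profits in each class) costs $O(n)$ overall.

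\textbf{Witness recovery.} We store the maximizing indices returned by \cref{lem:concave-convolution} at every update. The total storage is bounded by the work. We then backtrack from the requested $\delta$ through the updates in reverse order. At each update we read off $c$, translate it into a count of added or removed occurrences of that weight and part, and map these to records through the stored identifiers.

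The main obstacle is the window bookkeeping. We must justify that restricting the tables to $|\delta|\le R_j$ loses no optimum in any intermediate order of processing; this is where the symmetric-difference form of \eqref{eq:part-proximity} is needed, rather than a net bound. We must also check that the per-update cost is charged to $\Delta_I$ and not to $W$, which is the role of the $w\le\Delta_I$ reduction above.
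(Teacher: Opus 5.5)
Your proof is correct and follows essentially the paper's argument. Parts are processed in increasing scale with a window bounded by the accumulated change bounds, so \eqref{eq:part-proximity} keeps every partial correction of $x_z$ inside it. Each distinct weight gets one concave max-plus update via \cref{lem:concave-convolution} on residue classes, each update is charged $O(\kappa\Delta_I)$, and stored maximizers are backtracked for the witness.

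The differences are cosmetic. You group by part and weight with a cumulative radius, where the paper groups by scale and weight with radius $2\kappa2^h$. You remove the $+w$ residue overhead by fixing occurrences of weight exceeding $\Delta_I$, which is valid by \eqref{eq:part-proximity}; the paper instead simply skips residue classes that do not meet the window.

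One small slip: the $g$-selected occurrences of weight $w$ in a part need not be a suffix of $g$'s selection in that class. This is harmless, because all you use is that $g$ and $x_z$ both select prefixes of the part's group, and that does hold.
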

\begin{proof}
Process scales in increasing order. For any $x_z$, the total weight
of its changes in parts of scale at most $h$ is bounded by
$\kappa\sum_{j=0}^h2^j<2\kappa2^h$.
We can therefore retain only signed weight changes in
$[-R_h,R_h]$, where $R_h=2\kappa2^h$. This interval contains
every partial correction of $x_z$ while processing scale $h$.

For each weight $w$ and scale $h$, put all corresponding
occurrences into one group $G$, ordered by decreasing profit.
Both $g$ and $x_z$ choose prefixes of this order. If $g$ chooses
$r_G=|g\cap G|$ occurrences and $P_G(j)$ is the sum of the
$j$ largest profits, then changing the selected count by $q$
changes profit by
\[
 B_G(q)=P_G(r_G+q)-P_G(r_G),
 \qquad -r_G\le q\le|G|-r_G.
\]
Increasing $q$ by one adds the next profit in the sorted group.
These increments are nonincreasing, so $B_G$ is concave, including
at negative $q$. Sorting the groups and forming the prefix sums
costs $O(n\log(n+2))$ in total.

Initialize $F(0)=0$, with all other states infeasible. At each stage, $F(t)$ is the best retained profit change at signed
weight change $t$. Only processed groups may differ from $g$;
all other groups still use their choices from $g$.
For a group of weight $w$ at scale $h$, the update is
\begin{equation}\label{eq:knapsack-dp}
 F_{\rm new}(t)=\max_q
       \{F_{\rm old}(t-wq)+B_G(q)\},
 \qquad |t|\le R_h.
\end{equation}
A transition changes $t$ by a multiple of $w$, so different
residue classes modulo $w$ can be processed separately. Writing
$t=r+wj$ in one residue class turns \eqref{eq:knapsack-dp} into
$F_{\rm new}(r+wj)=\max_q\{F_{\rm old}(r+w(j-q))+B_G(q)\}$.
This is max-plus convolution with the concave array $B_G$.
By \cref{lem:concave-convolution}, its cost is linear in the
number of input and output states in the residue class. Summed
over the residues that occur, this is $O(R_h)$, even if
$w>R_h$.

Every finite state describes a valid choice. Conversely, fix
$x_z$. Its choice in each group occurs in $B_G$, and the absolute
value of any partial signed change is at most the total changed
weight in the parts processed so far. By \eqref{eq:part-proximity},
this lies in the retained interval. Induction over the updates
therefore preserves a value at least as large as the profit of
the corresponding partial choice of $x_z$.
The final value $\prf(g)+F(z-\wt(g))$ is consequently exact:
the desired optimum survives, and every competing finite state is
a feasible choice of the same weight.

If $D_h$ is the number of groups at scale $h$, then
$D_h\le\sum_{I:h(I)=h}d_I$ and $2^{h(I)}<2\Delta_I$.
The update cost is
\[
 \sum_h O(D_hR_h)
 =O\!\left(\kappa\sum_hD_h2^h\right)
 =O\!\left(\kappa\sum_I d_I\Delta_I\right).
\]
Store the maximizing transitions returned by SMAWK. Backtracking
recovers one optimum, and its chosen occurrences are accumulated
by original record. Restoring the choices completes the
witness within the same bound.
\end{proof}

We will construct a partition with
$\kappa=O(L)$ and $\sum_I d_I\Delta_I=O(W^2L^2)$.
Thus a part with many distinct weights must have a correspondingly
small change bound.

\subsection{A Buffer Bounds the Changes in One Part}
\label{sec:knapsack-partition}

We now construct the partition needed by \cref{lem:partition-dp}.
For each part $I$, we want a bound $\Delta_I$ on the total weight
of occurrences in $I$ whose selection differs between $g$ and $x_z$:
\[
    \wt\bigl(I\cap(g\triangle x_z)\bigr)\le\Delta_I.
\]
The partition and its bounds must work simultaneously for all the
solutions $x_z$.

Recall that the occurrences are in a fixed density order and $g$ is
a prefix of that order. The \emph{boundary of $g$} is the position
between the last occurrence in $g$ and the first occurrence outside
$g$.

We maintain a working set $U$ of occurrences not yet assigned to
parts. Deleting an occurrence from $U$ only means assigning it to a
part: it does not change $g$ or $x_z$. Below, ``selected'' and ``unselected'' always refer to
membership in $x_z$.

\paragraph{The target bound at one multiplicity level.}

Suppose $J\subseteq U$ consists of all remaining occurrences of
weights whose current multiplicities lie in $(m/2,m]$, and put
$s=|J|$. Since each such weight has more than $m/2$ occurrences,
$J$ contains fewer than $2s/m$ distinct weights. Thus any part
$I\subseteq J$ has $d_I<2s/m$.

This suggests the change bound we should aim for:
\[
    \Delta_I=O(mW^2/s)
    \qquad\Longrightarrow\qquad
    d_I\Delta_I=O(W^2).
\]
A large working set should therefore give a part with a small
change bound. The buffer is what makes this possible.

\paragraph{Choosing a part and its buffer.}

Take the larger of $J\cap g$ and $J\setminus g$, and call it $S$.
We assign the half of $S$ farther from the boundary of $g$ to the
new part $I$, leaving the other half as a buffer $B=S\setminus I$.
More explicitly:
\begin{itemize}
    \item If $S\subseteq g$, let $I$ be the first
    $\lceil|S|/2\rceil$ occurrences of $S$ in density order.
    \item If $S\subseteq K\setminus g$, let $I$ be the last
    $\lceil|S|/2\rceil$ occurrences of $S$ in density order.
\end{itemize}
Only $I$ is assigned at this step; $B$ remains in $U$.
Writing $q=|I|$, we have $q\ge s/4$ and $|B|\ge q-1$.

\paragraph{Why the buffer controls changes.}

Fix a solution $x_z$, and write
$E^-=g\setminus x_z$ for its removals from $g$ and
$E^+=x_z\setminus g$ for its additions to $g$.
The ordered-exchange property \eqref{eq:ordered-exchange} forbids
a common positive subset sum between an earlier unselected set and
a later selected set: such a sum would allow an equal-weight
exchange in favor of the earlier occurrences.

Consider a part $I\subseteq g$. The relevant sets occur in the order
\[
    I \prec B \prec E^+,
\]
where $\prec$ means that every occurrence of the first set precedes
every occurrence of the second.

At least half of $B$ is either selected or unselected.
If a large portion is selected, it cannot share a positive subset
sum with the earlier unselected occurrences $I\setminus x_z$.
The intersection theorem then bounds the missing weight
in $I$.

If instead a large portion of $B$ is unselected, it cannot share a
positive subset sum with the later selected additions $E^+$.
This bounds the total added weight. The balance estimate
\eqref{eq:knapsack-balance} then bounds the total removed weight,
and hence the missing weight in $I$, at an additional cost of at
most $W$.

Thus either selection pattern in the buffer gives a useful bound.
A part outside $g$ uses the reversed arrangement
$E^-\prec B\prec I$; the proof below spells out both cases.
See \cref{fig:knapsack-buffer}.

\paragraph{When the working set is small.}

The intersection theorem requires enough buffer occurrences:
$s^2$ must be sufficiently large compared with $mWL$, where
$L=\lceil\log_2(2n)\rceil$. If it is not, we assign all of $J$
to one remainder part and use the trivial bound $\Delta_I=Ws$.
Although this bound is weaker, its cost is still small:
\[
    d_I\Delta_I
    <\frac{2s}{m}\cdot Ws
    =\frac{2Ws^2}{m}
    =O(W^2L).
\]

The algorithm uses $c_0=2^{18}$ to separate these two cases.
For a working set $U$, let $m_U(w)$ denote the number of its
occurrences of weight $w$. If a weight's multiplicity falls to
at most $m/2$, all its remaining occurrences wait for a later
multiplicity level.

\begin{algorithm}[H]
\caption{Partitioning the Knapsack kernel}
\label{alg:knapsack-partition}
\begin{algorithmic}[1]
\Require A nonempty kernel $K$ of $n$ occurrences in density order,
reference prefix $g$, and maximum weight $W$.
\Ensure A partition $\mathcal P$ with bounds $\Delta_I$ satisfying
\eqref{eq:part-proximity}.
\State $U\gets K$; $\mathcal P\gets\varnothing$
\State $L\gets\lceil\log_2(2n)\rceil$; $c_0\gets2^{18}$
\For{$j=\lceil\log_2 n\rceil,\ldots,0$}
    \State $m\gets2^j$
    \State $J\gets\{i\in U:m/2<m_U(w_i)\le m\}$
    \While{$J\ne\varnothing$}
        \State $s\gets|J|$
        \If{$s^2\le c_0mWL$}
            \State $I\gets J$; $\Delta_I\gets Ws$
            \Comment{Assign the small remainder.}
        \Else
            \If{$|J\cap g|\ge|J\setminus g|$}
                \State $S\gets J\cap g$
                \State $I\gets$ the first $\lceil|S|/2\rceil$
                occurrences of $S$ in density order
            \Else
                \State $S\gets J\setminus g$
                \State $I\gets$ the last $\lceil|S|/2\rceil$
                occurrences of $S$ in density order
            \EndIf
            \State $B\gets S\setminus I$
            \Comment{The buffer stays in $U$.}
            \State $\Delta_I\gets\lceil c_0mW^2/s\rceil$
        \EndIf
        \State Add $(I,\Delta_I)$ to $\mathcal P$
        \State $U\gets U\setminus I$
        \State Recompute $J$ using the updated multiplicities $m_U(w)$
    \EndWhile
\EndFor
\State \Return $\mathcal P$
\end{algorithmic}
\end{algorithm}

\begin{figure}[t]
\centering
\begin{tikzpicture}[x=1cm,y=1cm,font=\small]
    \draw[->,black!55] (0,3.45) -- (10.6,3.45)
        node[midway,above] {earlier to later in the fixed density order};
    \draw[dashed,black!55] (7.15,-0.45) -- (7.15,2.85);
    \node[above] at (7.15,2.85) {boundary of $g$};

    \draw[fill=black!4] (0,1.5) rectangle (2.9,2.25);
    \draw[fill=blue!10] (3.7,1.5) rectangle (6.6,2.25);
    \draw[fill=black!4] (7.7,1.5) rectangle (10.6,2.25);
    \node[align=center] at (1.45,1.875)
        {$I\setminus x_z$\\unselected};
    \node[align=center] at (5.15,1.875)
        {$B\cap x_z$\\selected};
    \node[align=center] at (9.15,1.875)
        {$E^+$\\selected};
    \draw[->,blue!65!black,thick]
        (5.15,2.30) -- (5.15,2.65)
        -- (1.45,2.65) -- (1.45,2.30);
    \node[anchor=east] at (-0.2,1.875) {(a)};

    \draw[fill=black!4] (0,0) rectangle (2.9,0.75);
    \draw[fill=blue!10] (3.7,0) rectangle (6.6,0.75);
    \draw[fill=black!4] (7.7,0) rectangle (10.6,0.75);
    \node[align=center] at (1.45,0.375)
        {$I\setminus x_z$\\unselected};
    \node[align=center] at (5.15,0.375)
        {$B\setminus x_z$\\unselected};
    \node[align=center] at (9.15,0.375)
        {$E^+$\\selected};
    \draw[->,blue!65!black,thick]
        (9.15,0.80) -- (9.15,1.15)
        -- (5.15,1.15) -- (5.15,0.80);
    \node[anchor=east] at (-0.2,0.375) {(b)};

    \node[below] at (1.45,-0.07) {part $I\subseteq g$};
    \node[below] at (5.15,-0.07) {buffer $B\subseteq g$};
    \node[below] at (9.15,-0.07) {outside $g$};
\end{tikzpicture}
\caption{The two buffer cases for a part $I\subseteq g$.
Selection refers to $x_z$. In each row, the blue set contains
at least half the buffer. Arrows indicate the equal-weight
exchange that would replace later selected occurrences by earlier
unselected ones. Such an exchange is forbidden.
In (a), intersection bounds $\wt(I\setminus x_z)$ directly.
In (b), it bounds $\wt(E^+)$, and the balance inequality then
bounds $\wt(I\setminus x_z)$.}
\label{fig:knapsack-buffer}
\end{figure}

\begin{lemma}[Buffer bound]
\label{lem:knapsack-buffer}
Every part produced by \cref{alg:knapsack-partition} satisfies
\eqref{eq:part-proximity}.
\end{lemma}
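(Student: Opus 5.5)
The plan is to treat the two kinds of parts separately. The remainder parts are trivial. For the buffer parts, \cref{thm:intersection} is applied in contrapositive form, with $X$ a majority of the buffer and $Y$ a set that the ordered-exchange rule \eqref{eq:ordered-exchange} forbids from sharing a positive subset sum with $X$. Throughout, fix one attainable target $z$ and its optimum $x_z$. Since $x_z$ and $g$ do not depend on the partition, the buffer $B$ created at a step can be analyzed at that step regardless of what later happens to it in $U$.

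\textbf{Remainder parts.} A remainder part has $I=J$ with $|I|=s$ occurrences, each of weight at most $W$. Hence
\[
\wt\bigl(I\cap(g\triangle x_z)\bigr)\le\wt(I)\le Ws=\Delta_I .
\]

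\textbf{Buffer parts: checking the hypotheses.} Now suppose $s^2>c_0mWL$, so in particular $s>2^9$.
\begin{itemize}
\item We have $q\ge s/4$ and $|B|\ge q-1$. Let $X$ be whichever of $B\cap x_z$ and $B\setminus x_z$ is larger. Then $r=|X|\ge(s/4-1)/2\ge s/9$.
\item Every weight in $J$ has current multiplicity at most $m$, so $\mu=m$ works.
\item Since $r\le n$, we have $\lceil\log(2r)\rceil\le L$. Therefore
\[
r^2\ge s^2/81>2^{18}mWL/81\ge512\,mW\lceil\log(2r)\rceil .
\]
\end{itemize}
By the contrapositive of \cref{thm:intersection}, any $Y$ with no common positive subset sum with $X$ satisfies
\[
\wt(Y)<256\,mW^2/r\le 2304\,mW^2/s .
\]

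We also need that the additive slack $W$ is dominated by the target bound. The set $J$ has at least $s/m$ distinct weights, and there are at most $W$ of them, so $s\le mW$. Hence $mW^2/s\ge W$. Consequently $2304\,mW^2/s+W\le c_0mW^2/s\le\Delta_I$.

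\textbf{Case $I\subseteq g$.} Here $I$ comes first in $S$, so $I\prec B$. Moreover $B\subseteq g$ and $E^+$ lies beyond the boundary, so $B\prec E^+$. Also $I\cap(g\triangle x_z)=I\setminus x_z\subseteq E^-$.
\begin{itemize}
\item If $X=B\cap x_z$, take $Y=I\setminus x_z$. This is an earlier unselected set, and $X$ is a later selected set. A common positive subset sum would give nonempty sets $A\subseteq Y$ and $B'\subseteq X$ with $A\prec B'$ and $\wt(A)=\wt(B')$, contradicting \eqref{eq:ordered-exchange}. So $\wt(I\setminus x_z)<2304\,mW^2/s$.
\item If $X=B\setminus x_z$, take $Y=E^+$. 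Now $X$ is the earlier unselected set and $Y$ the later selected set, so the same argument gives $\wt(E^+)<2304\,mW^2/s$. By the balance \eqref{eq:knapsack-balance}, $\wt(I\setminus x_z)\le\wt(E^-)<\wt(E^+)+W$.
\end{itemize}
In both subcases the slack estimate above gives $\wt(I\setminus x_z)\le\Delta_I$.

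\textbf{Case $I\subseteq K\setminus g$.} This is the mirror image, with $E^-\prec B\prec I$ and $I\cap(g\triangle x_z)=I\cap x_z\subseteq E^+$.
\begin{itemize}
\item If $X=B\setminus x_z$ (earlier, unselected), take $Y=I\cap x_z$ (later, selected). This bounds $\wt(I\cap x_z)$ directly.
\item If $X=B\cap x_z$ (later, selected), take $Y=E^-$ (earlier, unselected). This bounds $\wt(E^-)$, and the balance inequality gives $\wt(I\cap x_z)\le\wt(E^+)<\wt(E^-)+W$.
\end{itemize}
An empty $Y$ is trivially fine. The only delicate points are the constant bookkeeping (the $-1$ in $|B|\ge q-1$, and $r$ versus $s$) and the observation $s\le mW$ that absorbs the additive $W$. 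I expect the latter to be the main step to get right.
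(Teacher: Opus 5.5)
Your proof is correct and follows essentially the same route as the paper. It uses the trivial bound for remainder parts and the contrapositive of \cref{thm:intersection} with $X$ the majority of the buffer. It pairs $X$ either with the changes in $I$ or with the opposite-side changes ($E^+$ or $E^-$), invoking the balance inequality in the indirect subcases, and uses $s\le mW$ to absorb the additive $W$. The only difference is constant bookkeeping: you use $s>2^9$ to get $|X|\ge s/9$ (threshold $2304\,mW^2/s$), where the paper uses $q\ge 2$ to get $|X|\ge s/16$ (threshold $4096\,mW^2/s$), and both chains fit under $c_0=2^{18}$.
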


\begin{proof}
Fix a profile solution $x_z$. A remainder part $I=J$ has at most
$Ws$ total weight, so its change bound is immediate.

Now consider a buffered part. We first check that either majority
of the buffer is large enough for the intersection theorem, then
identify the set against which to apply it.

\paragraph{The buffer supplies a dense multiset.}

Let $q=|I|$. The construction gives $q\ge s/4$ and $|B|\ge q-1$.
The condition $s^2>c_0mWL$ ensures $q\ge2$. Thus the larger of
$B\cap x_z$ and $B\setminus x_z$, which we denote by $X$, satisfies
\[
    |X|\ge\frac{|B|}{2}
    \ge\frac{q-1}{2}
    \ge\frac q4
    \ge\frac{s}{16}.
\]
Since $X\subseteq J$, each weight occurs at most $m$ times in $X$.
Moreover, $|X|\le n$, so
\[
    |X|^2
    \ge\frac{s^2}{256}
    >1024mWL
    \ge512mW\lceil\log(2|X|)\rceil.
\]
The density hypothesis of \cref{thm:intersection} therefore holds.

Consequently, whenever a multiset $Y$ has no common positive
subset sum with $X$, the contrapositive of that theorem gives
\begin{equation}\label{eq:knapsack-buffer-threshold}
    \wt(Y)
    <\frac{256mW^2}{|X|}
    \le\frac{4096mW^2}{s}.
\end{equation}
Write $\tau=4096mW^2/s$ for this upper bound.

\paragraph{A part inside $g$.}

Suppose $I\subseteq g$. Its changed occurrences are precisely
$I\setminus x_z$, and the order is $I\prec B\prec E^+$.

If $X=B\cap x_z$, take $Y=I\setminus x_z$.
Here $Y$ is unselected and lies entirely before the selected set
$X$. A common positive subset sum would therefore contradict
\eqref{eq:ordered-exchange}. By
\eqref{eq:knapsack-buffer-threshold},
$\wt(I\setminus x_z)<\tau$.

If $X=B\setminus x_z$, take $Y=E^+$.
Now $X$ is unselected and lies entirely before the selected set
$Y$, so the same exchange property gives $\wt(E^+)<\tau$.
Using \eqref{eq:knapsack-balance},
\[
    \wt(I\setminus x_z)
    \le\wt(E^-)
    \le\wt(E^+)+W
    <\tau+W.
\]

Thus in either case the changed weight in $I$ is less than
$\tau+W$.

\paragraph{A part outside $g$.}

Suppose $I\subseteq K\setminus g$. Its changed occurrences are
$I\cap x_z$, and the order is $E^-\prec B\prec I$.

If $X=B\setminus x_z$, take $Y=I\cap x_z$.
The unselected set $X$ lies entirely before the selected set
$Y$, so \eqref{eq:ordered-exchange} and
\eqref{eq:knapsack-buffer-threshold} directly give
$\wt(I\cap x_z)<\tau$.

If $X=B\cap x_z$, take $Y=E^-$.
The unselected removals $Y$ lie entirely before the selected
buffer occurrences $X$. Thus $\wt(E^-)<\tau$, and balance gives
\[
    \wt(I\cap x_z)
    \le\wt(E^+)
    \le\wt(E^-)+W
    <\tau+W.
\]

\paragraph{Absorbing the balance error.}

There are at most $W$ weight values in $J$, each with multiplicity
at most $m$, so $s\le mW$. Hence $W\le mW^2/s$, and therefore
\[
    \tau+W
    \le\frac{4097mW^2}{s}
    \le\left\lceil\frac{c_0mW^2}{s}\right\rceil
    =\Delta_I.
\]
This proves the required bound for the fixed $x_z$. Since the
construction of $I$ and $B$ did not depend on $x_z$, it holds
simultaneously for every $x_z$.
\end{proof}

\subsection{Accounting for All Parts}
\label{sec:knapsack-charging}

The accounting uses two different scales. The \emph{multiplicity
level} $m$ determines which weights are currently processed.
The \emph{DP scale} $h(I)$ groups parts according to the size of
their change bounds $\Delta_I$.

The same geometric decrease controls both. At the start of level
$m$, every remaining weight has multiplicity at most $m$.
During the level, multiplicities only decrease. A weight outside
$J$ therefore cannot enter $J$: its multiplicity is already at
most $m/2$.

Each buffered step assigns at least $s/4$ occurrences to its part.
Recomputing $J$ may remove still more occurrences from the current
level when their weights fall to multiplicity at most $m/2$.
Thus, if $s'$ is the next working-set size at this level,
\[
    s'\le\frac34s.
\]
A remainder step assigns all of $J$ and ends the level.
At the end of level $m$, every remaining multiplicity is at most
$m/2$, establishing the invariant for the next level. In
particular, after level $1$, every occurrence has been assigned.

\begin{lemma}[Partition cost]
\label{lem:knapsack-summation}
The partition satisfies
\[
    \sum_I d_I\Delta_I=O(W^2L^2),
    \qquad
    |\{I:h(I)=h\}|\le4L
    \quad\text{for every }h.
\]
\end{lemma}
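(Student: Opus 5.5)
We fix a multiplicity level $m=2^j$ and analyze the sequence of steps performed while this level is active. Let $s_1>s_2>\cdots$ be the working-set sizes at the successive buffered steps, followed possibly by one remainder step. By the contraction $s'\le\frac34 s$ established above, the sizes decrease geometrically. Since every size is at least $1$ and at most $n$, a level contains at most $\log_{4/3}n+1=O(L)$ steps. There are $\lceil\log n\rceil+1\le L$ levels in total.

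For the first bound, a buffered part $I$ at working size $s$ has $d_I<2s/m$ and $\Delta_I=\lceil c_0mW^2/s\rceil\le 2c_0mW^2/s$. The ceiling is harmless because $c_0mW^2/s\ge c_0W\ge1$, using $s\le mW$. Hence $d_I\Delta_I=O(W^2)$. A remainder part has $d_I\Delta_I<2Ws^2/m\le 2c_0W^2L$, by the remainder condition $s^2\le c_0mWL$. Each level therefore contributes $O(L)\cdot O(W^2)+O(W^2L)=O(W^2L)$. Summing over the at most $L$ levels gives $\sum_I d_I\Delta_I=O(W^2L^2)$.

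For the scale count, we again fix a level. The buffered bounds $c_0mW^2/s_k$ grow by a factor at least $4/3$ from one step to the next. The difficulty is that a single dyadic scale $h$ could still contain two or three of them, and the ceiling could shift a bound across a scale boundary. The plan is to bound the buffered parts per scale per level by a constant. If $\Delta_{I_k}$ and $\Delta_{I_{k'}}$ with $k<k'$ share the scale $h$, then both lie in $(2^{h-1},2^h]$, so their ratio is less than $2$. The unrounded ratio is at least $(4/3)^{k'-k}$. Because the rounding changes each value by less than $1$ relative to values at least $c_0$, it perturbs the ratio by only a factor $1+O(1/c_0)$. Together these force $k'-k\le2$. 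So each level has at most $3$ buffered parts at scale $h$, plus at most one remainder part, giving at most $4$ parts per level. Summing over at most $L$ levels yields $|\{I:h(I)=h\}|\le4L$.

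The main obstacle I expect is making the constant come out as exactly $4L$. This requires checking that the number of levels is at most $L$; indeed $j$ ranges over $\lceil\log_2 n\rceil+1\le L$ values. It also requires checking that the rounding perturbation is harmless, which follows from $c_0$ being large, $(4/3)^3>2(1+2/c_0)$. If the per-level count turns out to be $4$ only with slack, the bound still holds as stated.
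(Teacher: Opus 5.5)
Your proposal is correct and follows the paper's proof essentially step for step. You use the same count of $O(L)$ buffered parts plus one remainder per level, the same bounds $d_I\Delta_I=O(W^2)$ and $d_I\Delta_I<2c_0W^2L$, and the same argument that the unrounded bounds grow by a factor $4/3$ per step, so three steps more than double them even after rounding (the paper writes this as $u_{j+3}>2(u_j+1)\ge2\lceil u_j\rceil$). This gives at most three buffered parts plus one remainder per level in any DP scale, hence $4L$ overall.
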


\begin{proof}
There are at most $L$ multiplicity levels. Within one level,
the successive buffered sizes decrease by a factor of at least
$4/3$, starting from at most $n$. Hence there are $O(L)$ buffered
parts and at most one remainder part.

\paragraph{Total partition cost.}

At a step with $s=|J|$, the part has $d_I<2s/m$ distinct weights.
For a buffered part,
\[
\begin{aligned}
    d_I\Delta_I
    &<\frac{2s}{m}
      \left(\frac{c_0mW^2}{s}+1\right)\\
    &=2c_0W^2+\frac{2s}{m}
    =O(W^2),
\end{aligned}
\]
using $s\le mW$ to handle the rounding term.

For a remainder part, $\Delta_I=Ws$ and
$s^2\le c_0mWL$, so
\[
    d_I\Delta_I
    <\frac{2Ws^2}{m}
    \le2c_0W^2L.
\]
Thus the $O(L)$ buffered parts and the single remainder part
contribute $O(W^2L)$ per multiplicity level. Summing over the
at most $L$ levels gives $O(W^2L^2)$.

\paragraph{Parts at one DP scale.}

Fix a multiplicity level, and list its buffered parts in order.
If $s_j$ is the working-set size producing the $j$th part, let
$u_j=c_0mW^2/s_j$ be its unrounded change bound.
Since $s_{j+1}\le3s_j/4$, we have
$u_{j+1}\ge4u_j/3$.

The rounded bounds therefore cannot stay in one dyadic interval
for many steps. More precisely, $s_j\le mW$ gives
$u_j\ge c_0W\ge6$, and hence
\[
    u_{j+3}
    \ge\left(\frac43\right)^3u_j
    >2(u_j+1)
    \ge2\lceil u_j\rceil.
\]
Thus the fourth of any four buffered parts has change bound
more than twice that of the first. At most three buffered parts
can lie in a single DP scale $(2^{h-1},2^h]$.

There is at most one additional remainder part at this level.
Each multiplicity level therefore contributes at most four parts
to any fixed DP scale. Summing over the levels gives the bound
$4L$.
\end{proof}

\begin{proof}[Proof of \cref{lem:local-profile}]
Construct the kernel and run \cref{alg:knapsack-partition}.
By the preceding accounting, there are $O(L^2)$ part assignments.

Index the distinct kernel weights once, and retain the occurrences
in density order. At each assignment, a scan of the kernel counts
the current multiplicities and identifies $J$. Further scans select
the appropriate first or last half in density order. Thus the
partition can be constructed in $O(nL^2)$ operations.

The kernel has $n=O(W^2)$ occurrences, so this is
$O(W^2L^2)$ work. By \cref{lem:knapsack-summation}, the partition
has total cost $O(W^2L^2)$ and at most $\kappa=4L$ parts at
any DP scale. Applying \cref{lem:partition-dp} therefore computes
the profile, including reconstruction, in $O(W^2L^3)$ operations.

Finally, $L=O(\log(W+2))$. Adding the input scan proves the lemma.
\end{proof}

\subsection{Returning to the Capacity Problem}
\label{sec:knapsack-consequences}

The local profile solves exact-weight problems near a density
prefix. To solve the capacity problem, it remains to show that
an optimum has a weight in this range.

\begin{proof}[Proof of \cref{thm:knapsack}]
Discard records of zero multiplicity or nonpositive profit.
If all remaining occurrences fit, take them all.

Otherwise, let $g$ be the longest feasible density prefix.
The next occurrence does not fit and has weight at most $W$,
so $g$ leaves slack less than $W$.

An optimum $x^*$ also leaves slack less than $W$. Indeed, not all
occurrences fit, so some occurrence is omitted. If the slack
were at least $W$, that occurrence could be added, strictly
increasing profit.

Consequently,
\[
    C-W<\wt(g),\wt(x^*)\le C,
    \qquad
    |\wt(x^*)-\wt(g)|<W.
\]
Thus an optimum lies among the exact-weight solutions covered by
\cref{lem:local-profile}. Compute that profile and return its best
value at a weight at most $C$, together with the reconstructed
witness.
\end{proof}

\paragraph{Using small profits instead of small weights.}

For the small-profit parameterization, we reverse the role of
weight and profit in the exact-value computation: for each
candidate profit, find the minimum weight that attains it.
The important point is that only $P$ consecutive candidate
profits need to be considered.

\begin{corollary}[Small profits]
\label{cor:knapsack-profit}
For positive integer profits at most $P$ and arbitrary positive
integer weights, bounded Knapsack can be solved deterministically,
with a witness, in $O(N+P^2\log^3(P+2))$ operations.
\end{corollary}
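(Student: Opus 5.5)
We reduce to \cref{lem:local-profile} with the roles of weight and profit swapped. Discard records with zero multiplicity or with weight greater than $C$. If all remaining occurrences fit, take them all. Otherwise define a transformed instance: each record $(w_i,p_i,u_i)$ becomes an item of new weight $w'_i=p_i\in[P]$ and new profit $p'_i=-w_i$. Maximizing new profit at a fixed new weight is the same as minimizing the original weight at a fixed original profit. The new density is $-w_i/p_i$, so nonincreasing new density is nondecreasing $w_i/p_i$, which is the original nonincreasing density order. The original greedy prefix $g$, the longest feasible prefix under capacity $C$ found by the linear-time selection described above, is therefore a density prefix in the transformed instance as well. \cref{lem:local-profile} then computes, in $O(N+P^2\log^3(P+2))$ operations,
\[
 M(y)=\min\{\wt(x):\prf(x)=y\}\qquad\text{for every integer }|y-\prf(g)|<P,
\]
and gives a witness for any one requested entry.

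The main step is to show that some optimum has profit in this window. Let $x^*$ be optimal. Then $\prf(x^*)\ge\prf(g)$, since $g$ is feasible. For the upper bound, let $e$ be the first occurrence after $g$. Because $e$ does not fit, we have $\wt(g)+w_e>C$. The prefix $g\cup\{e\}$ is a density prefix, so it is optimal for the fractional relaxation at capacity $\wt(g)+w_e\ge C$. Hence the fractional optimum at capacity $C$ is less than $\prf(g)+p_e\le\prf(g)+P$. Concretely, the LP optimum at capacity $C$ equals $\prf(g)+p_e(C-\wt(g))/w_e<\prf(g)+p_e$. Every integral solution is bounded by the LP optimum, so $\prf(g)\le\prf(x^*)<\prf(g)+P$. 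Because profits are integers, $\prf(x^*)$ lies in the window $|y-\prf(g)|<P$.

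To finish, the optimal value is the largest $y$ in the window with $M(y)\le C$. We scan the $O(P)$ entries, request a witness for the maximizer $y$ through \cref{lem:local-profile}, and map it back to the original records. Rational arithmetic is not needed here because profits are integers; weights become the profits of the transformed instance, which \cref{lem:local-profile} permits to be arbitrary.

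The step I expect to need the most care is checking that \cref{lem:local-profile} applies verbatim to the transformed instance. The issues are tie-breaking in the density order, occurrences with equal new weight, and the fact that the prefix $g$ is chosen by the original capacity rather than by some rule in the transformed instance. The lemma requires only that $g$ be a prefix in a nonincreasing density order with a fixed tie-break. So I would fix the same record and occurrence order in both instances and pass $g$ directly, with the multiplicity of its final record computed by division as before.
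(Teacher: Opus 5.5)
Your proposal is correct and follows essentially the same route as the paper. You swap weight and profit (new weight $p_i$, new profit $-w_i$), note that the density order and hence the prefix $g$ are preserved, and use the fractional relaxation to confine the optimal profit to $\{\prf(g),\ldots,\prf(g)+P-1\}$. You then make one call to \cref{lem:local-profile} with parameter $P$ and return the largest candidate profit whose minimum weight is at most $C$, together with its witness. Your extra step of discarding records heavier than $C$ is unnecessary but harmless, because $g$ and the relaxation bound are then computed for the reduced instance.
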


\begin{proof}
The all-fit case is immediate. Otherwise, let $g$ be the longest
feasible density prefix.

The fractional optimum takes $g$ and then a proper fraction of
the next occurrence. That occurrence has profit at most $P$,
so the fractional optimum has profit strictly less than
$\prf(g)+P$. Since $g$ is feasible and integral profits are
integers, the optimum integral profit belongs to
\[
    \{\prf(g),\prf(g)+1,\ldots,\prf(g)+P-1\}.
\]

We compute the minimum weight needed to attain each profit in
this interval. Give occurrence $i$ the transformed weight
$\widetilde w_i=p_i$ and transformed objective
$\widetilde p_i=-w_i$, retaining its multiplicity.
If the transformed weight is exactly $t$, maximizing the transformed
objective is precisely the same as minimizing original weight
subject to original profit $t$.

The original density order remains valid. Indeed,
\[
    \frac{\widetilde p_i}{\widetilde w_i}
    =-\frac{w_i}{p_i},
\]
and decreasing $p_i/w_i$ gives the same order as decreasing
$-w_i/p_i$. Thus the same $g$ is a density prefix in the
transformed instance, and all required transformed weights
lie within $P-1$ of its weight $\prf(g)$.

One call to \cref{lem:local-profile}, now with maximum weight
parameter $P$, computes the minimum original weight for every
candidate profit. Return the largest candidate whose minimum
weight is at most $C$, together with its witness. The running time is
$O(N+P^2\log^3(P+2))$.
\end{proof}

\section{Multiple-Choice Knapsack with Small Ranges}
\label{sec:multiple-choice}
There are $r$ nonempty classes of alternatives $(a,p)$, and a
solution chooses exactly one alternative from each class. Weights
are integers, profits are rational, and the objective is to maximize
total profit subject to a capacity bound. Let $N$ be the total
number of alternatives and let $w$ be the largest difference between
weights within one class. Subtracting each class's minimum weight
from its alternatives, and their sum from the capacity, puts all
weights in $[0,w]$ without changing the problem.

\begin{theorem}[Multiple-choice Knapsack]\label{thm:multiple-choice}
Put $k=\min\{r,2w\}$. For $0<\delta\le1/2$, multiple-choice
Knapsack can be solved using
\[
 O\!\left(N\log(N+2)+w^2k\log(k+2)
                          \log\frac{k+2}{\delta}\right)
\]
arithmetic and comparison operations. If alternatives are sorted
by weight within each class, the input term is $O(N)$.
The work bound holds deterministically.
Infeasibility is detected without error; otherwise the returned choice
is feasible and is optimal with probability at least $1-\delta$.
\end{theorem}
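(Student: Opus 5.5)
We follow the overview: a reference choice, a proximity bound via \cref{lem:walk}, a reduction to $O(wk)$ candidate changes, and a randomized binary-tree merge with narrow tables.

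\textbf{Reference choice.} After normalizing weights to $[0,w]$, sort each class by weight. This costs $O(N\log(N+2))$, or $O(N)$ if the input is already sorted. In each class keep only the upper envelope (the Pareto-optimal, concave hull) of alternatives. The LP relaxation then takes hull vertices and fractionally mixes two of them in at most one class.

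Rounding down in that class gives a reference choice $g$ that is optimal among choices of its own total weight. Its slack is less than $w$, unless the capacity exceeds the maximum total weight, which is detected trivially. Infeasibility means the minimum total weight exceeds capacity, and this is also detected exactly.

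\textbf{Proximity.} Take an optimum $x$ minimizing the number of classes where it differs from $g$. For each differing class $c$, let $z_c=a_{x_c}-a_{g_c}\in[-w,w]$. The total $\sum z_c=\wt(x)-\wt(g)$ lies in $(-w,w)$, because both choices have slack below $w$ at the optimal weight. If some nonempty subset of the $z_c$ summed to zero, revert those classes to $g$ and compare.
\begin{itemize}
\item If the profit does not decrease, we contradict minimality.
\item Otherwise, applying the complementary changes to $g$ gives a choice of weight $\wt(g)$ with larger profit. This contradicts the optimality of $g$ at its own weight.
\end{itemize}
Zero $z_c$ are excluded by the same argument. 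So \cref{lem:walk} gives at most $2w-1$ changed classes, hence at most $k$.

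\textbf{Candidates.} For each class and each $\Delta\in[-w,w]$, record the best profit change at weight change $\Delta$. Do the same for the global pool by $\Delta$. In an optimal correction, the classes using a given $\Delta$ can be taken among the top $2w$ classes for that $\Delta$: otherwise swap in an unused better class, after an exchange argument that handles conflicts through the walk bound. This retains $O(w\cdot k)$ candidate (class, $\Delta$) pairs, touching at most $O(wk)$ relevant classes.

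\textbf{Randomized tree.} Assign each relevant class independently and uniformly to one of $k$ leaves of a balanced binary tree. For the fixed optimum $x$ with changes $z_c$, the partial sum inside a subtree with $\ell$ leaves concentrates.
\begin{itemize}
\item Its expected count of changed classes is $O(\ell)$.
\item Its signed sum has variance $O(w^2\ell)$.
\end{itemize}
Bernstein \eqref{eq:bernstein} with $L=\ln(4k/\delta)$ and a union bound over all $O(k)$ nodes gives signed sums of magnitude $R_\ell=O(w\sqrt{\ell L}+wL)$ at every node, with probability $1-\delta$.

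At a leaf, process its classes sequentially. Each class is a max-plus update over at most $2w+1$ options, on a table of width $O(w(\text{occupancy}+L))$. The occupancy is controlled by Chernoff, and truncation prevents any blowup in the worst case. At an internal node, combine the two child tables by a (max,+) convolution truncated to $[-R_\ell,R_\ell]$, costing $O(R_{\ell}^2)$.

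\textbf{Cost.} A level with $k/\ell$ nodes costs $(k/\ell)\,O(w^2\ell L)=O(w^2kL)$. Over the $\log k$ levels this totals $O(w^2k\log(k+2)\log((k+2)/\delta))$. The work bound holds deterministically because of the truncation. Every finite entry is a valid choice. The answer is the best root entry at signed change at most $C-\wt(g)$, and the witness is recovered via stored argmaxes.

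\textbf{Main obstacle.} The hard part is making the concentration argument rigorous while the candidate pruning depends on the global pool of classes. Leaf occupancy and the per-class sequential widths must be bounded simultaneously, with the leaf sequential cost charged as $\sum_{\text{leaf}}(\text{classes})\cdot w\cdot(\text{width})$. I would first try bounding each leaf's processed classes by its relevant-class count and using expected-count arguments. That may only bound the work in expectation, so we truncate the leaf work at its bound to make it deterministic.
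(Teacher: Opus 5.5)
Your outline follows the paper's route: a reference choice $g$ optimal at its own weight, the zero-sum exchange with \cref{lem:walk}, the top-$k$ changes per weight difference, and a random binary tree with Bernstein widths and pairwise merges. The gap is in the cost of the tree, and it comes from using $k$ leaves. You correctly obtain $R_\ell=O(w\sqrt{\ell L}+wL)$, but you then charge an $\ell$-leaf merge as $O(w^2\ell L)$, which drops the additive term. In fact $R_\ell^2=\Theta(w^2\ell L+w^2L^2)$, and the $wL$ term does not shrink with $\ell$; it is the $2bL/3$ term of \eqref{eq:bernstein}, equivalently the leaf-occupancy tail. Already the first internal level consists of $k/2$ merges of two tables of width $\Theta(wL)$, costing $\Theta(kw^2L^2)$. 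Your total is therefore $O(w^2kL(\log k+L))$ with $L=\Theta(\log(k/\delta))$. This exceeds the claimed $O(w^2k\log(k+2)\log((k+2)/\delta))$ whenever $\log(1/\delta)\gg\log k$; for $\delta=2^{-k}$ you get $w^2k^3$ instead of $w^2k^2\log k$. Amplifying a constant-$\delta$ run $O(\log(1/\delta))$ times does not repair this, since that costs $O(w^2k\log^2(k+2)\log(1/\delta))$.

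The missing idea is to coarsen the leaves, so that the $wL$ floor is paid once per leaf rather than once per expected change. Take $b$ to be the smallest power of two at least $\max\{1,k/L\}$. Each leaf then expects at most $L$ desired changes, and at most $4L$ except with probability $2e^{-L}$. So a fixed leaf width $10wL$ contains every prefix of the sequential updates. An $h$-leaf subtree has variance at most $(h/b)kw^2\le hLw^2$, so the width $R_h=10wL\lceil\sqrt h\rceil$ suffices. An $h$-leaf merge then costs $O(w^2L^2h)$, and each of the $O(\log(k+2))$ levels costs $O(bw^2L^2)=O(kw^2L)$ because $b<2k/L$.

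This also resolves your ``main obstacle.'' The leaf width is fixed in advance and occupancy matters only for correctness, so the work bound is deterministic without truncating work or using expected counts. The leaf work must, however, be charged per retained change, giving $O(MwL)=O(w^2kL)$. Charging it as the number of classes times $w$ times the width gives $\Theta(w^3kL)$ when $\Theta(wk)$ classes are active.

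Two smaller points. The concave hull only serves to define $g$; alternatives below it must be kept for the dynamic program. And the $O(N)$ bound for sorted input needs linear-time weighted selection on hull-edge slopes, not a global sort of the edges.
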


We first use the reduction of Pawlewicz~\cite{Pawlewicz26} to
find a reference choice $g$ and $O(wk)$ candidate changes. Some
optimum makes at most $k$ of these changes, and their signed
weights sum to a value in $[0,w)$. A sequential DP may nevertheless
need weight changes of order $kw$: it can encounter the positive
changes before the negative ones that cancel them.

We assign classes randomly to a binary tree so that cancellation
occurs within its subtrees. A concentration bound determines the
width to retain at each node. We then combine child tables by
enumerating pairs of states. We first give the reference choice
and candidate reduction, followed by this combination algorithm.

\subsection{A Reference Choice Optimal at Its Own Weight}
\label{sec:mc-anchor}
Keep only the most profitable alternative at each weight, then
discard any alternative whose profit is no larger than that of a
lighter one. Weights and profits now strictly increase within each
class. The lightest alternatives detect infeasibility. If the
heaviest alternatives fit, they form an optimum. These rules also
handle $w=0$; assume neither stopping rule applies.

Form the upper concave hull of each class's weight-profit points.
Moving from one hull vertex to the next increases weight by
$\Delta a$ and profit by $\Delta p$; regard this as an upgrade
of density $\Delta p/\Delta a$. These densities decrease along
each hull. Thus a global decreasing-density order, with ties
respecting each hull, never takes an upgrade before its predecessor
in the same class.

Start at the lightest alternative of every class, and take these
upgrades until the next edge would exceed capacity $c$. Let $g$
be the resulting choice and $c_g$ its weight. Each upgrade has
weight at most $w$, so
\begin{equation}\label{eq:mc-anchor-slack}
 0\le c-c_g<w.
\end{equation}
The important property is that $g$ is optimal at its own weight
$c_g$. In the fractional relaxation, one may take a fraction of
an upgrade. Decreasing density then gives an optimum just as in
fractional Knapsack. At capacity $c_g$, this procedure stops
exactly at the original alternatives selected by $g$. Every
alternative of every class lies on or below its hull, so no
integral choice of weight at most $c_g$ can have greater profit.

A stack scan builds the hulls from sorted input. Linear-time
selection on edge slopes, using weight increments as masses,
finds the stopping edge without sorting all edges globally.
This is the same weighted selection used for the density prefix
in \cref{sec:knapsack-kernel}. The preprocessing therefore takes
$O(N)$ operations after sorting. Alternatives below the hulls
are retained for the subsequent computation.

\subsection{Reducing to a Short List of Changes}
\label{sec:mc-candidates}
Represent an alternative in class $i$ by its changes from $g_i$:
\[
 d=a-a(g_i)\in[-w,w],\qquad v=p-p(g_i).
\]
The unchanged alternative has $(d,v)=(0,0)$. Optimality of $g$ at its own weight is the substitute for the ordered
exchange rule used in ordinary Knapsack. Any weight-zero collection
of changes in distinct classes has nonpositive profit gain from $g$. Undoing it in an
optimum therefore preserves optimality while using fewer changes.
This is exactly the obstruction needed by the walk lemma.

\begin{lemma}[Few changed classes]\label{lem:mc-proximity}
Some optimum at capacity $c$ either equals $g$ or changes at most
$2w-1$ classes. Its total weight change from $g$ belongs to $[0,w)$.
\end{lemma}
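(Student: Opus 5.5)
Among all optima at capacity $c$, I would choose one, $x$, that changes the fewest classes relative to $g$, and show it has the required form. For each class $i$ with $x_i\ne g_i$, record the change $(d_i,v_i)$ with $d_i\in[-w,w]$. After the preprocessing of \cref{sec:mc-anchor}, weights strictly increase within each class, so distinct alternatives in a class have distinct weights. Hence every changed class has $d_i\ne0$.

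The first step bounds the total change $\delta=\sum_i d_i$. Since $x$ is feasible, $c_g+\delta\le c$, and \eqref{eq:mc-anchor-slack} gives $\delta\le c-c_g<w$. For the lower bound, suppose $\delta<0$. Then $x$ has weight below $c_g$. By optimality of $g$ at its own weight, $\prf(x)\le\prf(g)$, so $g$ is also an optimum at capacity $c$. It changes zero classes, contradicting minimality unless $x=g$. Either way, if $x\ne g$ then $\delta\ge0$, so $\delta\in[0,w)$. If some optimum equals $g$, the statement already holds with total change $0$.

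The second step is the exchange. Suppose some nonempty set $T$ of changed classes has $\sum_{i\in T}d_i=0$. Let $x'$ revert the classes in $T$ to $g$. Then $x'$ has the same weight as $x$ and is feasible. Its profit is $\prf(x)-\sum_{i\in T}v_i$. Consider the choice $g'$ that applies only the changes in $T$ to $g$. It has weight $c_g$, so optimality of $g$ at its own weight gives $\sum_{i\in T}v_i\le0$. Thus $x'$ is an optimum with fewer changed classes, contradicting minimality. So the multiset $\{d_i\}$ of nonzero integers in $[-w,w]$ has no zero-sum nonempty submultiset. Its sum lies in $[0,w)$ and so has magnitude less than $w$. \Cref{lem:walk} then bounds the number of changed classes by $2w-1$.

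The main point to check carefully is that optimality of $g$ at weight $c_g$ covers all alternatives, including those below the hulls. The alternatives discarded in preprocessing are dominated, so this costs nothing. The justification is the hull argument already given: every alternative lies on or below its class hull, and the fractional relaxation at capacity $c_g$ is attained by $g$. This is what licenses both uses above, at weight below $c_g$ and at weight exactly $c_g$. The remaining steps are bookkeeping.
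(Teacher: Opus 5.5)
Your proof is correct and follows the paper's argument. You choose an optimum with the fewest changed classes, and you use optimality of $g$ at its own weight twice: once to rule out a negative total change (the paper simply switches to $g$ when the weight is at most $c_g$, where you derive a contradiction with minimality), and once to show that reverting a zero-weight collection of changes loses no profit; \cref{lem:walk} then finishes, with distinct weights after preprocessing guaranteeing nonzero changes, exactly as in the paper.
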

\begin{proof}
Choose an optimum $x$ with the fewest changed classes. If its
weight is at most $c_g$, then $g$ is also optimal and may be used.
Otherwise its total weight change $D$ satisfies $0<D<w$ by
\eqref{eq:mc-anchor-slack}.

Suppose a nonempty collection of its changes has total weight
zero. Applying just those changes to $g$ preserves weight, so
their total profit change is nonpositive. Undoing them in $x$
preserves feasibility, does not decrease profit, and reduces the
number of changed classes, a contradiction.
Distinct alternatives have distinct weights after preprocessing.
The nonzero changes thus form a zero-sum-free sequence in
$[-w,w]$, and \cref{lem:walk} bounds their number by $2w-1$.
\end{proof}

Now consider candidates with the same nonzero weight difference $d$.
An optimum changes at most $k=\min\{r,2w\}$ classes. If we keep
the $k$ candidates of largest profit change, then any omitted
candidate can be replaced by a kept one from an unused class.
Keep all candidates if there are fewer than $k$.
There is at most one difference-$d$ candidate per class, and at most
$M=2wk$ retained candidates in total. The following argument makes
the replacement precise.

\begin{lemma}[Keeping the best changes]\label{lem:mc-pruning}
Some optimum uses at most $k$ changed classes, only retained
changes, and a total weight change in $[0,w)$.
\end{lemma}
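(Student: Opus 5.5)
Among all optima given by \cref{lem:mc-proximity} (or $g$ itself, which changes no classes and uses only the trivially retained change $(0,0)$), choose $x$ to minimize first the number of changed classes and then the number of non-retained changes. The plan is to show that $x$ already uses only retained changes. It changes at most $\min\{r,2w-1\}\le k$ classes: at most $2w-1$ by \cref{lem:mc-proximity}, and trivially at most $r$. Its total weight change lies in $[0,w)$. Both properties are preserved by any modification that keeps the multiset of weight differences and the set size of changed classes fixed, so only the retained-changes property remains.

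Suppose $x$ uses a non-retained candidate $(d,v)$ in class $i$, with $d\neq0$. Since a candidate with this $d$ was omitted, at least $k$ candidates with difference $d$ were kept. Each of them has profit change at least $v$, and they lie in $k$ distinct classes, because there is at most one difference-$d$ candidate per class after preprocessing. Class $i$ is not among them, since its difference-$d$ candidate was omitted. The other changed classes of $x$ number at most $k-1$. By pigeonhole, some kept candidate lies in a class $j$ that $x$ leaves unchanged. Replace the change in class $i$ by reverting class $i$ to $g_i$, and change class $j$ to its kept difference-$d$ candidate.

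The modified choice has the same total weight, so it remains feasible and its weight change stays in $[0,w)$. Its profit does not decrease, so it is still optimal. The number of changed classes is unchanged. The number of non-retained changes drops by one, contradicting minimality. Equivalently, one can iterate the exchange, which terminates because that count strictly decreases.

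The main point to check carefully is the pigeonhole count. It needs the kept set to have size exactly $k$ when pruning happens, plus the fact that class $i$ itself cannot host a kept difference-$d$ candidate. That fact holds because classes have at most one alternative per weight. The case where all candidates are kept is vacuous.
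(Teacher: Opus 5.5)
Your proposal is correct and follows essentially the same exchange argument as the paper. You choose an extremal optimum: you minimize changed classes and then non-retained changes, whereas the paper fixes the weight and maximizes retained changes among optima with at most $k$ changed classes. Then, as in the paper, you use pigeonhole to find a kept difference-$d$ change in an unchanged class $j$ and swap it for the omitted change in class $i$.
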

\begin{proof}
Fix the weight of an optimum from \cref{lem:mc-proximity}.
Among optima of this weight with at most $k$ changed classes,
choose one using as many retained changes as possible.
If it uses an omitted difference-$d$ change in class $i$, none
of the $k$ retained difference-$d$ changes belongs to $i$.
At most $k-1$ belong to other changed classes, so one belongs
to an unchanged class $j$.
Restore class $i$ to $g_i$ and make that retained change in $j$.
Weight is unchanged and profit does not decrease, but the number
of retained changes increases, a contradiction.
\end{proof}

Selection within the difference lists takes linear time.
Fix classes with no retained change at their reference alternatives.

\subsection{Narrow Tables in a Random Tree}\label{sec:mc-tree}
We will now assign the classes to a binary tree to exploit cancellation.
Fix one optimum from \cref{lem:mc-pruning}, before choosing the
random partition. It changes at most $k$ classes, with signed
weights $d_j\in[-w,w]$ of total $D\in[0,w)$. Assigning a class
to a random leaf keeps these changes independent across classes.

The number of leaves is chosen so that each leaf receives only
$O(\log(k/\delta))$ desired changes with high probability. This
bounds every partial sum while the leaf processes its classes.
For a larger subtree, we use concentration of the signed sum;
its table can be narrower than the sum of the leaf widths.
Set
\[
 L=\left\lceil\ln\frac{128(k+2)}{\delta}\right\rceil+1,
 \qquad
 b=\text{the smallest power of two at least }\max\{1,k/L\}.
\]
Assign each active class independently to a uniform leaf of a
complete binary tree with $b$ leaves. A subtree with $h$ leaves
receives each desired change with probability $h/b$. It therefore
has expected total weight change $(h/b)D<w$ and variance at most
$(h/b)kw^2\le hLw^2$. Bernstein's inequality bound gives size $O(wL\sqrt h)$ with failure probability
$2e^{-L}$. Accordingly, retain the interval
\begin{equation}\label{eq:mc-radii}
 [-R_h,R_h],\qquad R_h=10wL\lceil\sqrt h\rceil.
\end{equation}

At a leaf, process one class at a time, allowing a retained change
or the reference alternative. At an internal node, combine the
two child tables by max-plus convolution. \Cref{alg:mc-combine}
gives the complete computation; array entries outside their retained
intervals are $-\infty$.

\begin{algorithm}[H]
\caption{Combining the retained multiple-choice changes}
\label{alg:mc-combine}
\begin{algorithmic}[1]
\Require Retained changes $C_i=\{(d,v)\}$ for each active class;
$b$ leaves and radii $R_h$ from \eqref{eq:mc-radii}.
\Ensure A table of feasible weight and profit changes, with witnesses.
\State Assign each class independently to a uniform leaf of the $b$-leaf binary tree
\For{each leaf $\ell$}
  \State $F_\ell(0)\gets0$; all other entries are $-\infty$
  \For{each class $i$ assigned to $\ell$}
    \State For $|t|\le R_1$, set
      $F'_\ell(t)\gets\max_{(d,v)\in C_i\cup\{(0,0)\}}\{F_\ell(t-d)+v\}$
    \State $F_\ell\gets F'_\ell$
  \EndFor
\EndFor
\For{each internal node $u$ in bottom-up order}
  \State Let $a,c$ be its children and $h$ its number of descendant leaves
  \State Compute $F_u\gets(F_a\xc F_c)|_{[-R_h,R_h]}$ by enumerating all pairs of child states
\EndFor
\State \Return the root table, retaining maximizing transitions for backtracking
\end{algorithmic}
\end{algorithm}

These widths also give the desired work bound. A merge at an
$h$-leaf node enumerates $O(w^2L^2h)$ pairs, and a level has
$b/h$ such nodes. Thus every internal level costs
$O(bw^2L^2)=O(kw^2L)$ when $b>1$. The following lemma proves
that the retained states include the fixed optimum.

\begin{lemma}[The combination tree]\label{lem:mc-combination}
\Cref{alg:mc-combine} uses
$O(MwL+kw^2L\log(k+2))$ operations for every choice of the
randomness. Fix any choice of at most $k$ retained changes in
distinct classes, of total weight $D\in[0,w)$. With probability
at least $1-\delta$, the root entry at $D$ has profit at least
that of this choice. Every finite entry has a valid witness.
\end{lemma}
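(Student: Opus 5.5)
The proof has three parts: the work bound, correctness of witnesses, and the probabilistic survival of the fixed choice. All three follow the structure of \cref{alg:mc-combine}.

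\emph{Work bound.} Each leaf table has width $2R_1+1=O(wL)$. Processing one class with $|C_i|+1$ options costs $O((|C_i|+1)wL)$. Summing over classes gives $O(MwL)$ for the retained changes. The reference options contribute $O(wL)$ per active class, and this is also $O(MwL)$, since every active class has at least one retained change.

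At an internal node with $h$ leaves, each child has $h/2$ leaves. Its width is $O(wL\sqrt h)$, so enumerating pairs costs $O(w^2L^2h)$. One level of the tree holds $b/h$ such nodes, so a level costs $O(bw^2L^2)$. When $b>1$ we have $b<2k/L$, so this is $O(kw^2L)$ per level. There are $\log b=O(\log(k+2))$ levels. Both bounds hold for every outcome of the random assignment.

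\emph{Witnesses.} Finite entries are built only from actual options and stored maximizing transitions. Backtracking therefore yields one option per class whose weight change is the index and whose profit is at least the entry, with equality by construction.

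\emph{Survival.} Fix the choice, with changes $d_j$ in distinct classes, $|d_j|\le w$, and $\sum_j d_j=D$. It suffices that, at every stage, the partial choice restricted to processed classes has signed weight inside the retained interval. Induction then keeps an entry at least its profit, exactly as in \cref{lem:partition-dp}.

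At a leaf, every prefix sum is at most $w$ times the number of desired changes landing at that leaf. That number is binomial with mean at most $k/b\le L$. A Chernoff (or Bernstein) bound shows it exceeds $10L$ with probability at most $e^{-2L}$, hence the prefix sums stay within $R_1$.

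For a subtree with $h$ leaves, write its total as $\sum_j d_j\mathbf 1_j$. The mean is $(h/b)D\in[0,w)$. The centered variables are bounded by $w$, and the variance is at most $(h/b)kw^2\le 2hLw^2$. Applying \eqref{eq:bernstein} gives deviation at most $\sqrt{4hL^2w^2}+2wL/3\le 3wL\sqrt h$, except with probability $2e^{-L}$. Adding the mean keeps the total below $R_h$.

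The union bound is the step where the constants need care. Rather than union-bounding over all $2b-1$ nodes, I would union-bound only over those nodes that receive some desired change, since other nodes trivially have sum $0$. At each level at most $k$ nodes receive a change, giving at most $k\log(2b)$ relevant nodes in total. Even crudely, $2b-1\le 4(k+2)$ nodes times $2e^{-L}$ is at most $8(k+2)e^{-L}\le \delta/16$ by the choice $L\ge\ln(128(k+2)/\delta)$. This already suffices without that refinement. The leaf events add the same order, giving total failure probability at most $\delta$.

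I expect the main obstacle to be confirming that the leaf event and the Bernstein constants fit inside $R_h=10wL\lceil\sqrt h\rceil$. In particular, at $h=1$ the leaf prefix sums rather than the leaf total must be controlled. This is why the leaf uses an occupancy bound rather than Bernstein's inequality.
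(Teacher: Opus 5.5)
Your proposal is correct and follows the paper's proof essentially step by step. It uses the same $O(MwL)$ leaf cost and $O(bw^2L^2)=O(kw^2L)$ per-level merge cost, an occupancy bound at the leaves (you use a Chernoff tail at $10L$ where the paper uses Bernstein at $4L$), Bernstein's inequality for the signed subtree sums, and a union bound over $O(b)$ nodes. The suggested refinement of union-bounding only over nodes that receive a desired change is not valid as stated, because that set of nodes is random. As you note, though, the crude count over all $2b-1$ nodes already suffices, so nothing depends on it.
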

\begin{proof}
Assume first that $b>1$. At an $h$-leaf node, each desired change
is present independently with probability $\rho=h/b$, since the
changes belong to distinct classes. Its total contribution is
$S=\sum_jd_jI_j$, so
\[
 \mathbb E S=\rho D,\qquad
 \operatorname{Var}(S)\le\rho kw^2\le hLw^2.
\]
Bernstein's inequality gives
$|S-\rho D|\le wL(\sqrt{2h}+2/3)$ except with probability
$2e^{-L}$. Since $|\rho D|<w$, the contribution lies in the
retained interval.

At a leaf, the desired number of changed classes has expectation
at most $L$. Another application of Bernstein bounds it by $4L$
except with probability $2e^{-L}$. Every prefix of its changes
then has absolute weight at most $4wL$, so the sequential leaf
updates preserve it. A union bound over fewer than $2b$ nodes and the $b$ leaf
occupancies gives simultaneous survival with probability at least
$1-\delta$. If $b=1$, there are at most $k\le L$
desired changes, so every prefix survives deterministically.

The leaf updates have width $O(wL)$ and total cost $O(MwL)$,
since each retained alternative belongs to one class at one leaf;
the unchanged alternatives add at most the same order of work.
Initializing empty leaves costs $O(b)$, which is covered by the
claimed bound. If $b>1$, each internal level costs
\[
 \frac bh\,O(w^2L^2h)=O(bw^2L^2)=O(kw^2L),
\]
using $b<2k/L$. There are $O(\log(k+2))$ levels. If $b=1$
there are no internal merges.

All transitions combine disjoint classes and choose exactly one
alternative in each. Store maximizing alternatives and
pairs during the updates; backtracking is bounded by the table
construction time.
\end{proof}

\begin{proof}[Proof of \cref{thm:multiple-choice}]
Run \cref{alg:mc-combine} on the $M\le2wk$ retained changes. At the root,
maximize profit over entries of weight at most $c-c_g$, then
restore the reference profit and the fixed classes.
The unchanged choice is always available, so the returned solution
is feasible. On the survival event for the optimum from
\cref{lem:mc-pruning}, it is optimal. The tree cost and the
preprocessing give the stated bound.
\end{proof}

\subsection{One Coefficient of a Multiple Convolution}
The preceding algorithm also computes one prescribed coefficient
of a min-plus convolution of many sequences. The reduction makes
reaching the desired index the primary objective, without enlarging
the within-class ranges.

\begin{corollary}[One coefficient of a multiple min-plus convolution]
\label{cor:multiple-convolution}
Let $r$ sequences have $N$ finite rational entries in total, at
least one in each sequence. Suppose the finite-entry indices in
each sequence span an interval of length at most $w$.
One prescribed coefficient of their min-plus convolution can be
computed in $\softO(N+w^2\min\{r,w\})$ time with probability at
least $1-\delta$. The work bound holds deterministically. Every finite returned value has an original witness
and is an upper bound on the optimum; an infeasible coefficient
always receives $+\infty$.
\end{corollary}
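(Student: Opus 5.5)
The plan is to reduce the coefficient computation to a multiple-choice Knapsack instance and invoke \cref{thm:multiple-choice}. Let the sequences be $f^{(1)},\ldots,f^{(r)}$ and the prescribed index $\tau$. Each finite entry $f^{(i)}_j$ becomes an alternative of class $i$ with weight $j$. Subtracting the minimum finite index of each class, and adjusting $\tau$ by the sum, puts weights in $[0,w]$. The coefficient $\min\{\sum_i f^{(i)}_{j_i}:\sum_i j_i=\tau\}$ is an exact-weight problem, while the theorem handles only a capacity bound. I would therefore make reaching weight $\tau$ the primary objective through a large bonus in the profit. I would not use a weight transformation, since that could enlarge the within-class range.

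Concretely, let $Q$ exceed the total spread of values, for instance $Q=1+\sum_i(\max_j f^{(i)}_j-\min_j f^{(i)}_j)$; this is computable in $O(N)$. Give the alternative of weight $j$ the profit $Qj-f^{(i)}_j$, and set the capacity to $\tau$. If $\tau$ is below the minimum total weight, the theorem detects infeasibility without error, and we return $+\infty$. If $\tau$ exceeds the maximum total weight, no choice attains it, so we also return $+\infty$; this check is a direct comparison. Otherwise some choice attains weight exactly $\tau$, because the attainable total weights form a sumset of sets each containing its endpoints, and every integer between the extremes is attainable. A choice of weight $\tau$ has profit $Q\tau-\text{value}$. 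Any choice of weight at most $\tau-1$ has profit at most $Q(\tau-1)-\min\text{value}$, which is smaller than the profit of any weight-$\tau$ choice, since values of two choices differ by less than $Q$. Hence every optimum has weight exactly $\tau$ and minimizes the value, and the coefficient is $Q\tau$ minus the optimal profit.

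\textbf{Main obstacle.} The key point to check is that every integer in $[\min,\max]$ is attainable. This follows because we may change one class's index at a time from its minimum to its maximum; each step moves the total weight by at most $w$, not by one. That does not give every integer, since the finite indices need not be contiguous. So if $\tau$ is unattainable, the optimum has weight below $\tau$, and the returned value would need to be $+\infty$. I would handle this by checking the returned witness's weight. If it is not exactly $\tau$, the algorithm reports $+\infty$.

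This check is sound, because every returned finite value has a witness of weight $\tau$ and is thus an upper bound on the optimum. An infeasible coefficient can never produce a weight-$\tau$ witness, so it always receives $+\infty$. When the coefficient is finite, all optima of the Knapsack instance have weight $\tau$ by the dominance argument, so on the success event of probability $1-\delta$ the answer is exact. The rational profits with the multiplier $Q$ stay within the theorem's operation model. The running time is that of \cref{thm:multiple-choice}, $O(N\log N+w^2k\log k\log(k/\delta))$ with $k=\min\{r,2w\}$, i.e.\ $\softO(N+w^2\min\{r,w\})$, and the work bound holds deterministically.
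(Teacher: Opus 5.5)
Your reduction is the paper's: the same multiplier $Q$ (the paper's $M_0=C_0+1$, exceeding the total value spread), profits $Qj-f^{(i)}_j$ with capacity equal to the prescribed index, and a final check that the returned witness has weight exactly $\tau$; this check gives the witness, upper-bound and $+\infty$ guarantees even on an unsuccessful run. Delete the earlier sentence claiming every weight between the extremes is attainable: as your obstacle paragraph notes, it is false, and the witness-weight check is what actually handles an unattainable $\tau$.
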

\begin{proof}
Write the finite entries of sequence $i$ as $c_i(a)$, and set
\[
 C_0=\sum_i\bigl(\max_a c_i(a)-\min_a c_i(a)\bigr),
 \qquad M_0=C_0+1.
\]
Give alternative $a$ in class $i$ weight $a$ and profit
$M_0a-c_i(a)$. Run multiple-choice Knapsack at the requested
index $t$. If the capacity problem is infeasible, return $+\infty$.
Otherwise, if the returned weight is $t$, return the original cost
of this choice; if it is smaller, return $+\infty$.

The costs of any two complete choices differ by at most $C_0$.
Because the indices are integers, the modified objective first
maximizes weight at most $t$, then minimizes cost. If weight $t$
is attainable, an optimum therefore gives the desired coefficient.
The weight ranges are unchanged. Even on an unsuccessful run,
a finite returned value has a witness.
\end{proof}

\section{Output-Sensitive Subset Sum in Arbitrary Dimension}
\label{sec:subset-sum}
Let $X$ be a multiset of $n\ge1$ nonnegative integer vectors in
dimension $d$, and let $t\ge0$ be an integer target. We seek the
set $\SubSums{X,t}$ of all attainable sums in $[0,t]^d$ and write
$s=|\SubSums{X,t}|$. We improve the scaling of the algorithm of \cite{BFN25} with dimension.

\begin{theorem}[Output-sensitive Subset Sum]\label{thm:random-subset-sum}
For every fixed $d\ge1$ and $0<\delta\le1/2$, there is a randomized
algorithm with expected running time $\softO(n+s\sqrt n)$ that
returns a subset of $\SubSums{X,t}$. With probability at
least $1-\delta$, it returns the entire set within the same time
bound. More explicitly, the bound has the form
\begin{equation}\label{eq:ss-main-time}
 \operatorname{poly}(d)(n+s\sqrt n)
 \left(\log\frac{2(n+1)(d+1)(s+1)}{\delta}\right)^{d+O(1)}.
\end{equation}
We use the word-RAM assumptions of the sparse
nonnegative-convolution primitive; each vector occupies $O(d)$
words.
\end{theorem}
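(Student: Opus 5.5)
The proof assembles the pieces described in the overview, in four steps.

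\textbf{Step 1: a box-restricted sumset routine.} Given sets $A,B\subseteq[0,t]^d$, we compute $C=(A+B)\cap[0,t]^d$ in expected time
\[
\softO\bigl(|A|+|B|+\sqrt{|A||B||C|}\bigr).
\]
For one coordinate, sort $B$ by that coordinate. For each $x\in A$, let $q(x)$ be the number of $y\in B$ with $x+y\le t$ in this coordinate. A pair $(x,y)$ is valid exactly when the rank of $y$ is less than $q(x)$. The first differing binary digit of $\operatorname{rank}(y)$ and $q(x)$ certifies this comparison, so grouping by common prefix covers the valid pairs by $O(\log n)$ disjoint products. Taking this refinement in each of the $d$ coordinates gives disjoint products $A_i\times B_i$ covering exactly the valid pairs. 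Each element lies in $O(\log^d)$ products.

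Each $A_i+B_i\subseteq C$, so sparse convolution computes it in time about $|A_i+B_i|\le\min\{|A_i||B_i|,|C|\}\le\sqrt{|A_i||B_i||C|}$. Cauchy--Schwarz gives
\[
\sum_i\sqrt{|A_i||B_i|}\le\sqrt{\textstyle\sum|A_i|\sum|B_i|},
\]
which yields the bound.

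\textbf{Step 2: the recursion.} Split $X$ into large and small items; an item is large if some coordinate exceeds $t/\operatorname{polylog}$.

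For small items, partition them into $k$ balanced random children, each with target $\lfloor t/(k-1)\rfloor$. Recurse, merge the child outputs by ordinary sparse convolution, and truncate to the box.

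\emph{Correctness.} Fix a witness. Bernstein's inequality (\eqref{eq:bernstein}), using that items are small, shows that each child's share of every coordinate is at most $t/(k-1)$ with high probability. We union bound over the $s$ sums, the $d$ coordinates and the recursion nodes, which explains the $\log(s/\delta)$ factors.

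\emph{Merge cost.} Any sum omitting one child lies in the parent box, because $(k-1)\lfloor t/(k-1)\rfloor\le t$. Sumset submultiplicativity (Ruzsa-type: $|T_1+\dots+T_k|^{k-1}\le\prod_i|\sum_{j\ne i}T_j|$) then gives $|T_1+\cdots+T_k|\le s^{k/(k-1)}=O(s)$ for $k=\Theta(\log s)$. So the full merge is affordable, and we merge sequentially.

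\emph{Total work.} The inequality $\sum_i(|T_i|-1)\le2(s-1)$ together with $|X_i|\le2n/k$ shows the potential $(s-1)\sqrt n$ shrinks by factor $2\sqrt{2/k}<1$ per level. So the work is a geometric series dominated by the root.

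\textbf{Step 3: large items.} A feasible witness uses only $\operatorname{polylog}$ many large items. Color-code them into that many classes, so that with constant probability each class holds at most one witness item; repeat $O(\log(1/\delta))$ times. Insert the classes one at a time via Step 1 with $A=$ current sums and $B=$ class items plus $0$. This costs $\softO(m+\sqrt{m s\cdot s})=\softO(m+s\sqrt m)$, since all intermediate sets are subsets of $\SubSums{X,t}$.

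\textbf{Step 4: unknown $s$.} Since $k$ depends on $s$, use a doubling time budget. Guess $s'=2^j$, run with a budget matching \eqref{eq:ss-main-time}, abort on overrun or when an intermediate set exceeds $s'$, and double. Aborted runs and Las Vegas convolution overruns contribute only a geometric series in expectation. All output is valid because every sum produced is attainable.

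\textbf{Main obstacle.} I expect the hardest parts to be Step 1's exact decomposition into products in $d$ dimensions with only polylog duplication, and the submultiplicativity bound in Step 2 justifying the untruncated merge.
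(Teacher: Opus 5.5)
Your Steps 1--3 follow the paper's construction closely: rank-prefix products with Cauchy--Schwarz, $k$ balanced children with target $\lfloor t/(k-1)\rfloor$, submultiplicativity and superadditivity driving the potential $(s-1)\sqrt n$, and color coding of the few large items. The gaps are in Step 4, which is where both probabilistic guarantees of the theorem are earned.

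\textbf{Completeness at small guesses.} You union-bound a per-sum loss probability of order $n^2dk\,e^{-\Omega(k)}$ over all $s$ true sums. That needs $k=\Omega(\log(nds/\delta))$ for the \emph{true} $s$. The doubling procedure, however, runs with $k$ chosen from guesses $s'<s$, and for those runs a union bound over $s$ sums gives nothing. You have not excluded that such a run stays within its caps while missing sums, in which case its incomplete answer is returned. The missing observation, which the paper uses, is that acceptance forces a small output. Before a run, fix $\min\{s,s'+1\}$ true sums; the paper fixes $\min\{s,B+1\}$ for budget $B$, since a run of at most $B$ operations writes at most $B$ sums. If $s>s'$, an accepted output of size at most $s'$ must omit one of these pre-fixed sums. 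So a union bound over only $s'+1$ sums suffices when $k$ is tied to the guess or budget; the paper's $k_B$ makes the per-sum loss $\delta/(16R(B+1)^3)$. Summing over attempts and dyadic levels then keeps the total error below $\delta/4$. Your abort-on-size rule makes this argument available, but without it the $1-\delta$ completeness claim is unproved.

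\textbf{The high-probability time bound.} You never derive that the complete set is returned \emph{within the time bound} with probability $1-\delta$, and one capped run per guess does not give it. With only an expected-time bound for sparse convolution, Markov's inequality bounds a single run's failure probability only by the ratio of expected work to budget. Once $s'\sqrt n\ge n$, each further level doubles the budget. The best this argument yields for confidence $1-\delta$ is then an overhead of order $2^{\Theta(\sqrt{\log(1/\delta)})}$, not the $(\log(1/\delta))^{d+O(1)}$ in \eqref{eq:ss-main-time}.

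The paper instead makes $R=\lceil\log_2(8/\delta)\rceil$ independent attempts at each budget before doubling. At a budget $B_\star$ with expected work at most $B_\star/2$, all $R$ attempts fail with probability at most $2^{-R}\le\delta/8$, at total cost $O(RB_\star)$. The expected cost beyond $B_\star$ is $O(RB_\star)\sum_{j\ge1}(2q)^j$ with $q=2^{-R}<1/8$.

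Relatedly, $O(\log(1/\delta))$ coloring repetitions are too few for the union bound you invoke over up to $n^2$ calls and many sums. The paper uses $k$ trials, so the coloring failure $4^{-k}$ is absorbed into the same per-sum bound.
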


We follow the small/large item strategy of Bringmann, Fischer, and
Nakos~\cite{BFN25}. Small items are split among recursive calls.
A solution uses only a few large items, so color coding can place
them in distinct classes; we then insert at most one item per class.
Two computations determine the cost: merging the recursive answers
and inserting each color class. We bound the full recursive merge
by sumset submultiplicativity and compute color-class insertions
with a box-restricted sumset algorithm. The latter improves the
higher-dimensional running time.

\subsection{Box-Restricted Sumsets}
\label{sec:box-sumsets}
A full sumset can be much larger than its restriction to the target
box. To avoid generating these extra sums, we partition the valid input
pairs into products $A_i\times B_i$ such that every sum in
$A_i+B_i$ lies in the box. Full sparse convolution can then process
each product. The partition is useful because it copies each input
point into only a small number of products.

\begin{lemma}[Box-restricted sumsets]\label{lem:clipped-sumset}
Let $A,B\subseteq[0,t]^d\cap\Z^d$ and
$C=(A+B)\cap[0,t]^d$. Put $a=|A|$, $b=|B|$, and $c=|C|$.
The set $C$ can be computed in expected time
\begin{equation}\label{eq:box-time}
 \operatorname{poly}(d)(a+b+\sqrt{abc})
             \bigl(1+\log(a+b+2)\bigr)^{d+1}.
\end{equation}
\end{lemma}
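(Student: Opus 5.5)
We follow the plan from the overview: partition the valid pairs $\{(x,y)\in A\times B: x+y\in[0,t]^d\}$ into products $A_i\times B_i$. Each product contains only valid pairs, and each point of $A$ or $B$ occurs in at most $(1+\log(a+b+2))^{d}$ products up to $\operatorname{poly}(d)$ factors. We then run sparse convolution on every product and take the union of the supports.

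\textbf{One coordinate.} Fix coordinate $k$. Sort $B$ by that coordinate and give the elements zero-based ranks $0,\ldots,b-1$. For $x\in A$, let $q(x)$ be the number of $y\in B$ with $x_k+y_k\le t$. Because $B$ is sorted, the feasible $y$ are exactly those with $\operatorname{rank}(y)<q(x)$; we find $q(x)$ by binary search or a merge. Write ranks and $q$-values as binary strings of length $\ell=\lceil\log(b+1)\rceil+1$. We have $\operatorname{rank}(y)<q(x)$ exactly when there is a prefix $p$ with $\operatorname{rank}(y)$ starting $p0$ and $q(x)$ starting $p1$. This $p$ is unique: it is the first position where the two strings differ. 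So the valid pairs split disjointly into the products $A_p\times B_p$ over all prefixes $p$. Here $A_p$ consists of the $x$ with $q(x)$ starting $p1$, and $B_p$ of the $y$ with rank starting $p0$. Each point lies in at most $\ell$ of these products.

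\textbf{All $d$ coordinates.} We iterate over the coordinates. Inside a product $A'\times B'$ obtained from earlier coordinates, we re-sort $B'$ by the next coordinate and refine again. After $d$ rounds the products partition the pairs that are valid in every coordinate, which are exactly the valid pairs, since $A,B\ge0$ makes the lower bound automatic. Each point is copied at most $\ell^d\le(1+\log(a+b+2))^d$ times, times a constant. Constructing the products costs $\operatorname{poly}(d)(a+b)\ell^{d}\log(a+b+2)$ through sorting at each level, which fits within \eqref{eq:box-time}.

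\textbf{Convolution cost.} Each $A_i+B_i$ lies in $C$, so $|A_i+B_i|\le\min\{|A_i||B_i|,c\}\le\sqrt{|A_i||B_i|c}$. We encode vectors in base $2t+1$ as in \cref{sec:primitives} and apply the expected $O(s\log(s+2))$ sparse convolution of Jin and Xu to each product. Cauchy--Schwarz gives
\[
 \sum_i\sqrt{|A_i||B_i|c}\le\sqrt c\Bigl(\sum_i|A_i|\Bigr)^{1/2}\Bigl(\sum_i|B_i|\Bigr)^{1/2}\le\sqrt{abc}\,\ell^d .
\]
Adding the input sizes $\sum_i(|A_i|+|B_i|)\le(a+b)\ell^d$ and a logarithmic factor for the convolution bounds the total expected cost by \eqref{eq:box-time}. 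The output size can be as large as $\min\{ab,c\}$, but the logarithm of $ab$ is still $O(\log(a+b+2))$, so this introduces no new parameters. The union of the supports is exactly $C$; we deduplicate it by hashing or sorting, which costs no more than the convolution work.

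\textbf{Main obstacle.} The key step is making the one-dimensional decomposition exact rather than approximate. The decomposition has to be by ranks, not by values: using values would introduce a $\log t$ factor, whereas ranks keep the copy count at $O(\log b)$. The multi-round refinement must also use ranks within each sub-product, so that every round adds only a factor $O(\log(a+b+2))$.
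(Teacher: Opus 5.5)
Your proof is correct and is essentially the paper's argument: the same rank-prefix decomposition of each comparison $r<q$, at most $h^d$ copies per point, and Cauchy--Schwarz on $|A_i+B_i|\le\min\{|A_i||B_i|,c\}$. The only difference is that you refine coordinate by coordinate, recomputing ranks inside each sub-product, whereas the paper gives every point a $d$-tuple of prefixes from global per-coordinate ranks in one pass; this avoids the re-sorting and shows that local ranks are not actually needed, contrary to your closing remark.

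One small slip: the extra bit in $\ell=\lceil\log(b+1)\rceil+1$ is unnecessary, since $q(x)\le b$ already fits in $\lceil\log(b+1)\rceil$ bits. Keeping it makes $\ell^d$ exceed $(1+\log(a+b+2))^d$ by a factor that can be exponential in $d$, not the constant you claim, and this matters for the $\operatorname{poly}(d)$ form of \eqref{eq:box-time}.
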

\begin{proof}
Empty inputs are immediate. Sort $B$ separately in each coordinate,
breaking ties consistently. Let $r_j(y)\in\{0,\ldots,b-1\}$ be
the rank of $y\in B$ in the $j$th list. For $x\in A$, let
$q_j(x)$ count the points of $B$ with $j$th coordinate at most
$t-x_j$. Binary search computes these counts. A pair is valid
exactly when
\begin{equation}\label{eq:box-ranks}
 x+y\in[0,t]^d
 \quad\Longleftrightarrow\quad
 r_j(y)<q_j(x)\quad\text{for every }j.
\end{equation}
This remains true with repeated coordinate values, since the
count includes the entire tied block.

Use $h=\lceil\log(b+1)\rceil$ bits for every rank and count.
First consider one comparison $r<q$. At their first differing
bit, $r$ has $0$ and $q$ has $1$. If $p$ is the common prefix,
all numbers starting with $p0$ are smaller than all numbers starting
with $p1$. Thus one prefix identifies
a product of valid comparisons, and every strict comparison has
exactly one such prefix.

Apply this construction in each coordinate. A point $x\in A$
receives a key $(p_1,\ldots,p_d)$ for each choice of a prefix
immediately before a $1$-bit in every $q_j(x)$. A point $y\in B$
receives the analogous keys immediately before $0$-bits in its
ranks $r_j(y)$. Store each prefix by its length and value.
Group points with a common key into sets $A_i,B_i$, discarding
keys present on only one side. Each point receives at most $h^d$
keys, each valid pair occurs in exactly one product, and every
pair in such a product is valid. Consequently,
\[
 C=\bigcup_i(A_i+B_i),\qquad A_i+B_i\subseteq C,
 \qquad \sum_i|A_i|\le ah^d,\quad\sum_i|B_i|\le bh^d.
\]

Compute every $A_i+B_i$ by full sparse convolution.
Its size is at most $\min(c, |A_i||B_i|)$, so
\begin{align*}
 \sum_i|A_i+B_i|
 &\le\sqrt c\sum_i\sqrt{|A_i||B_i|}\\
 &\le\sqrt{c\Bigl(\sum_i|A_i|\Bigr)
                 \Bigl(\sum_i|B_i|\Bigr)}
 \le h^d\sqrt{abc}.
\end{align*}
This bounds the total number of emitted sums, including repetitions.
Sort the keys to construct the products, and
concatenate and sort the emitted lists to take their union.
The expected sparse-convolution cost and these sorting costs
give \eqref{eq:box-time}. In particular, the polynomial factor
in $d$ pays for comparisons of keys and vector encodings.
\end{proof}

\subsection{Controlling a Merge with Sumset Inequalities}
\label{sec:ss-sumset-inequalities}

The main difficulty in merging child answers is that the full sumset can
contain many sums outside the parent's box. An output-sensitive bound for
the parent is useful only if these intermediate sums are also controlled.

We arrange this by giving each of $k$ children target
$\lfloor u/(k-1)\rfloor$, where $u$ is the parent target. Any sum using
only $k-1$ children then fits in the parent box. The full $k$-child sum
need not fit, but the second inequality below bounds its size using these
leave-one-out sums. We reproduce the proof of \cite{GMR10} for completeness.

\begin{lemma}[Two sumset inequalities]\label{lem:ss-sumsets}
For nonempty finite $A_1,\ldots,A_k\subseteq\Z^d$, where $k\ge2$,
\begin{align}
    \sum_{i=1}^k(|A_i|-1)
    &\le |A_1+\cdots+A_k|-1,
    \label{eq:ss-superadditivity}\\
    |A_1+\cdots+A_k|^{k-1}
    &\le \prod_{i=1}^k\left|\sum_{j\ne i}A_j\right|.
    \label{eq:ss-submultiplicativity}
\end{align}
\end{lemma}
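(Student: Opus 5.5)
For \eqref{eq:ss-superadditivity}, I would induct on $k$, starting from the two-set inequality $|A+B|\ge|A|+|B|-1$ in $\Z^d$. To prove that, fix a linear order on $\Z^d$ compatible with addition, for instance the lexicographic order. Let $a_1<\cdots<a_p$ list $A$ and $b_1<\cdots<b_q$ list $B$. The chain
\[
 a_1+b_1<a_2+b_1<\cdots<a_p+b_1<a_p+b_2<\cdots<a_p+b_q
\]
consists of $p+q-1$ distinct elements of $A+B$. Applying this with $A=A_1+\cdots+A_{k-1}$ and $B=A_k$, and using the induction hypothesis for the first $k-1$ sets, gives \eqref{eq:ss-superadditivity}. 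This part is routine.

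For \eqref{eq:ss-submultiplicativity}, I would follow the Gyarmati--Matolcsi--Ruzsa argument via a Shearer-type (Han) inequality. The first step is to count sums using lexicographically minimal representations. For each $z\in S=A_1+\cdots+A_k$, choose the representation $(a_1,\dots,a_k)\in A_1\times\cdots\times A_k$ that is lexicographically minimal, with elements compared under a fixed total order on each $A_i$. This gives an injection $\phi:S\to\prod_iA_i$ with image $T$, so $|T|=|S|$.

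The second step applies Han's/Loomis--Whitney's inequality to $T$:
\[
 |T|^{k-1}\le\prod_i|\pi_{-i}(T)|,
\]
where $\pi_{-i}$ forgets coordinate $i$.

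The third step is to show that $\pi_{-i}(T)$ injects into $\sum_{j\ne i}A_j$ via the map $(a_j)_{j\ne i}\mapsto\sum_{j\ne i}a_j$. Suppose two elements of $T$ have projections with the same partial sum $\sigma$, say $a=\phi(z)$ and $a'=\phi(z')$. Then $z=\sigma+a_i$ and $z'=\sigma+a'_i$.

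This third step is the main obstacle: minimality of $a$ for $z$ must force the coordinates off $i$ to be determined by $\sigma$. Plain lexicographic minimality does not do this. If $a_i\ne a'_i$, then $z\ne z'$, and nothing immediately identifies $(a_j)_{j\ne i}$ with $(a'_j)_{j\ne i}$. The standard fix, which I would adopt, is to choose the representation according to an order tailored to each $i$. Equivalently, I would use the entropy formulation.

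Let $Z$ be uniform on $S$, and let $(X_1,\dots,X_k)=\phi(Z)$, which determines $Z$. Then $H(X_1,\dots,X_k)=\log|S|$. Han's inequality gives
\[
 (k-1)H(X)\le\sum_iH(X_{-i}).
\]
It then suffices to show $H(X_{-i})\le\log|\sum_{j\ne i}A_j|$. For that I would like $X_{-i}$ to be a function of $\sum_{j\ne i}X_j$, which holds if $\phi$ is chosen so that each representation is consistent with the minimal representations of the partial sums. I expect this consistency requirement to be the delicate point.

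If it resists, my fallback is a direct Ruzsa-type covering argument, proving the inequality by induction on $k$ with the Pl\"unnecke-free lemma $|A+B+C|\,|B|\le|A+B|\,|B+C|$. I would apply that lemma with $A=A_1$, $B=A_2+\cdots+A_{k-1}$ and $C=A_k$, then combine it with the induction hypothesis and the monotonicity $|\sum_{j\in J}A_j|\le|\sum_{j\in J'}A_j|$ for $J\subseteq J'$. Matching exponents will require some bookkeeping.
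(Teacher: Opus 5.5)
Your proof of \eqref{eq:ss-superadditivity} is fine and is essentially the paper's chain argument. For \eqref{eq:ss-submultiplicativity}, your first two steps (lexicographically first representations, then Han/Shearer on the image $T$) follow the paper's route exactly. However, you stop at the decisive third step and misdiagnose it. Plain lexicographic minimality \emph{does} suffice, because it is inherited by projections.

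Let $(a_1,\ldots,a_k)$ be the lexicographically first representation of $z$, and put $\sigma=\sum_{j\ne i}a_j$. Suppose some $(b_j)_{j\ne i}$ also represents $\sigma$ and is lexicographically smaller than $(a_j)_{j\ne i}$. Let $j_0\ne i$ be the first index where they differ, so $b_{j_0}<a_{j_0}$. Reinserting $a_i$ at position $i$ gives a representation of $z$ that first differs from $(a_1,\ldots,a_k)$ at $j_0$. It is therefore lexicographically smaller, contradicting canonicity.

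Hence every element of $\pi_{-i}(T)$ is the canonical representation of its own partial sum. Two elements with the same partial sum therefore coincide, whether or not $a_i=a'_i$. This is exactly the ``consistency'' you ask for, and it gives $|\pi_{-i}(T)|\le|\sum_{j\ne i}A_j|$. Your proposal never establishes this bound. Moreover, representation choices tailored to each $i$ would not fit your Han step, which needs a single set $T$ for all $i$ simultaneously.

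Your fallback does not close the gap, because the lemma it relies on, $|A+B+C|\,|B|\le|A+B|\,|B+C|$, is false. In $\Z^2$, with $e_1,e_2$ the standard basis vectors, take $A=\{0,e_1\}$, $C=\{0,e_2\}$, and $B=\{0,1\}^2\cup\{g\}$ with $g$ far from the origin. Then $|A+B|=|B+C|=8$ and $|A+B+C|=9+4=13$, so $|B|\,|A+B+C|=65>64$.

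Pl\"unnecke--Ruzsa controls $|X+A+C|$ only for some nonempty $X\subseteq B$, not for $B$ itself. The valid sum form of Ruzsa's triangle inequality, $|B|\,|A+C|\le|A+B|\,|B+C|$, does not involve $A+B+C$. The heredity argument above is the missing idea; with it, your main route is complete and coincides with the paper's proof.
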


\begin{proof}
For \eqref{eq:ss-superadditivity}, sort each $A_i$ lexicographically.
Start with the sum of the least elements, then advance the first summand
through its list, the second through its list, and so on. Each advance
strictly increases the sum in lexicographic order. We obtain
$1+\sum_i(|A_i|-1)$ distinct sums.

For \eqref{eq:ss-submultiplicativity}, choose the lexicographically first
representing tuple for every element of $A_1+\cdots+A_k$, and let $V$ be
the set of these canonical tuples. Thus $|V|=|A_1+\cdots+A_k|$.
Write $\pi_{-i}V$ for the projection that deletes the $i$th entry.

A projected tuple is itself the lexicographically first representation
of its partial sum. Indeed, replacing it by a smaller representation
would preserve the full sum and produce a smaller full tuple, contradicting
canonicity. Consequently, two different tuples in $\pi_{-i}V$ have
different partial sums, and
\[
    |\pi_{-i}V|\le\left|\sum_{j\ne i}A_j\right|.
\]
It remains to relate the size of $V$ to the sizes of these projections.
We use the following form of Shearer's inequality.

Let $Z=(Z_1,\ldots,Z_k)$ be uniform on $V$. By the entropy chain rule
and the fact that conditioning reduces entropy,
\[
\begin{aligned}
    H(Z_{-i})
    &=\sum_{j\ne i}
      H\bigl(Z_j\mid (Z_\ell)_{\ell<j,\,\ell\ne i}\bigr)\\
    &\ge\sum_{j\ne i}H(Z_j\mid Z_1,\ldots,Z_{j-1}).
\end{aligned}
\]
On summing over $i$, each term of the full chain rule occurs exactly
$k-1$ times. Hence
\[
    (k-1)\log|V|
    =(k-1)H(Z)
    \le\sum_iH(Z_{-i})
    \le\sum_i\log|\pi_{-i}V|.
\]
Exponentiating and using the projection bounds proves
\eqref{eq:ss-submultiplicativity}.
\end{proof}

To apply
them here, partition the parent items into $k$ disjoint parts, and let
$T_i$ be the true subset-sum set of part $i$ with target
$\lfloor u/(k-1)\rfloor$. If the parent has $s$ attainable sums, then
\[
    \sum_{j\ne i}T_j\subseteq\SubSums{Y,u}
    \qquad\text{for every }i:
\]
each coordinate of such a sum is at most $u$, and the child item sets
are disjoint. Thus, writing $M=|T_1+\cdots+T_k|$,
\[
    M^{k-1}\le s^k,
    \qquad
    \sum_i(|T_i|-1)\le M-1.
\]
In particular, $M\le s^{1+1/(k-1)}$. For $k=\Omega(\log s)$,
the full merge is only a constant factor larger than the parent output.
We can therefore compute it without truncation and discard the
out-of-box sums afterward.

The denominator $k-1$, rather than $k$, serves two purposes. It makes
all leave-one-out sums feasible, which controls the merge size, while
giving each child slightly more than a $1/k$ share of the parent target.
That slack will let a random partition preserve a fixed solution.

\subsection{The Recursive Algorithm}
\label{sec:ss-recursion}

Initially, delete zero vectors and vectors outside $[0,t]^d$.
Zeros do not change the attainable sums, and an out-of-box vector cannot
belong to a feasible subset because all coordinates are nonnegative.
This costs $O(dn)$ time. If no vector remains, return $\{0\}$.
Fix an integer $k\ge32$ and put $h=dk^4$; both parameters remain fixed
throughout one recursive run. We will later choose $k$ without knowing
the output size.

At a call $(Y,u)$ with $|Y|>k$, we separate the items according to
whether random partitioning is safe. An item is \emph{small} if all
its coordinates are at most $u/k^4$, and \emph{large} otherwise.
Small items are sent to recursive children. Large items stay at the
current node and are inserted after the child answers have been merged.

\paragraph{Small items: balance the sizes and leave slack in the target.}

Arrange the small items in rows of $k$, padding the last row with zeros,
and independently permute each row. The $k$ columns form the child
instances, with the padding removed. Thus every child receives at most
$\lceil|Y|/k\rceil$ items, regardless of the random choices.

Now fix a feasible subset of $Y$. For one coordinate and one child,
each row contributes either the weight of one selected item or zero.
The row contributions are independent, bounded by $u/k^4$, and have
total expectation at most $u/k$. Their variance is at most
$(u/k^4)(u/k)=u^2/k^5$, whereas the slack in the child target is
\[
    \frac{u}{k-1}-\frac uk
    =\frac{u}{k(k-1)}
    \ge\frac{u}{k^2}.
\]
The slack is therefore at least $\sqrt{k}$ times the upper bound on
the standard deviation. The tail bound below gives failure probability
$e^{-\Omega(k)}$ for each coordinate of each child.

The row construction is useful for two separate reasons: it preserves
this concentration while also balancing the child input sizes
deterministically. Independent assignment of individual items
would give the former but not the latter.

\paragraph{Large items: a feasible subset uses few of them.}

Every large item exceeds $u/k^4$ in some coordinate. Assign each large
item of a fixed feasible subset to one such coordinate. Fewer than
$k^4$ can be assigned to any coordinate, so the subset contains fewer
than $h=dk^4$ large items in total.

Color all large occurrences independently with $2h^2$ colors. A fixed
set of fewer than $h$ occurrences has no collision with probability at
least $3/4$. When it has no collision, it can be recovered by processing
the color classes one at a time and choosing at most one occurrence
from each class. We repeat the coloring $k$ times and take the union
of the resulting answers.

Thus the two branches preserve a fixed solution in different ways:
small items are numerous but concentrate under partitioning; large
items need not concentrate, but there are few enough to separate by
coloring.

\begin{algorithm}[H]
\caption{$\textsc{Solve}(Y,u)$: recursive Subset Sum for a fixed $k$}
\label{alg:subset-sum}
\begin{algorithmic}[1]
\Require A multiset $Y$ of nonzero integer vectors in $[0,u]^d$;
fixed $k\ge32$ and $h=dk^4$.
\Ensure A subset of $\SubSums{Y,u}$ containing zero.
\If{$|Y|\le k$}
    \State $D\gets\{0\}$
    \For{each occurrence $y\in Y$}
        \State $D\gets(D\cup(D+y))\cap[0,u]^d$
    \EndFor
    \State \Return $D$
\EndIf
\State $Y_S\gets Y\cap[0,u/k^4]^d$;
       $Y_L\gets Y\setminus Y_S$
\State Arrange $Y_S$ in rows of $k$, padding the last row with zeros
\State Independently permute the items within each row
\State Let $Y_1,\ldots,Y_k$ be the columns with padding removed
\State $u'\gets\lfloor u/(k-1)\rfloor$
\For{$i=1,\ldots,k$}
    \State $A_i\gets\Call{Solve}{Y_i,u'}$
\EndFor
\State $A\gets A_1$
\For{$i=2,\ldots,k$}
    \State $A\gets A+A_i$ by full sparse convolution
    \Comment{Do not truncate yet.}
\EndFor
\State $A\gets A\cap[0,u]^d$
\If{$Y_L=\varnothing$}
    \State \Return $A$
\EndIf
\State $Q\gets\varnothing$
\For{$k$ independent coloring trials}
    \State Color every occurrence of $Y_L$ independently and uniformly
           with one of $2h^2$ colors
    \State $D\gets A$
    \For{each nonempty color class $B$}
        \State $D\gets(D+(B\cup\{0\}))\cap[0,u]^d$
        \Comment{Use \cref{lem:clipped-sumset}.}
    \EndFor
    \State $Q\gets Q\cup D$
\EndFor
\State \Return $Q$
\end{algorithmic}
\end{algorithm}

Each small item fits in the child box: its coordinates are integers
and at most $u/k^4\le u/(k-1)$. Empty children return $\{0\}$.
Within a color class, the update permits one item or none, so equal
vectors may be deduplicated there. Only nonempty color classes are
stored and processed.

\subsection{Correctness and the Total Work}
\label{sec:ss-analysis}

We analyze correctness and work separately. Correctness asks whether
one fixed witness survives the random choices; it holds for every
$k\ge32$. The work analysis instead uses the \emph{true} output
sizes, not the sizes recovered by the run, and requires
$k=\Omega(\log s)$. Keeping these statements separate will be
important when we choose $k$ from a budget.

\begin{lemma}[Preserving a specified sum]
\label{lem:ss-pointwise}
For every $k\ge32$, \cref{alg:subset-sum} returns only valid sums.
Any specified root sum is omitted with probability at most
\begin{equation}\label{eq:ss-pointwise}
    n^2(dk+1)e^{-k/3}.
\end{equation}
\end{lemma}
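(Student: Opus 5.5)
Validity comes first and is a routine induction on the recursion. Base calls add items one at a time and clip to the box, so every retained sum is an attainable sum of the call's items inside $[0,u]^d$. At an internal call the children receive disjoint item sets, so any element of $A_1+\cdots+A_k$ is a subset sum of $Y_S$; clipping to $[0,u]^d$ keeps it valid. Each color-class update adds at most one occurrence of $Y_L$, which is disjoint from $Y_S$ and from the other classes, so every element of $D$ remains a subset sum of $Y$ inside the box. The box-restricted routine of \cref{lem:clipped-sumset} returns exactly $(D+(B\cup\{0\}))\cap[0,u]^d$, so no invalid sum enters.

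For the failure probability, fix a witness subset $Z\subseteq X$ for the specified sum. Follow it down the recursion: at a call $(Y,u)$ whose own witness $Z\cap Y$ has sum in $[0,u]^d$, the sum is recovered if two things hold. First, each child's share $Z\cap Y_i$ has sum in $[0,u']^d$, so that by induction $A_i$ contains it. Second, some coloring trial separates the large items of $Z\cap Y_L$ into distinct classes. Then the merged set $A$ contains $\Sigma(Z\cap Y_S)$, which is at most the parent sum coordinatewise and so survives clipping, and inserting the large items one class at a time stays inside the box because all partial sums are dominated by the final one. I would union-bound over all bad events along the recursion tree of the witness.

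The coloring failure is direct. $Z\cap Y_L$ has fewer than $h$ occurrences, so a trial collides with probability at most $1/4$, and all $k$ independent trials fail with probability at most $4^{-k}\le e^{-k/3}$. The partition failure is the main obstacle. For a fixed child $i$ and coordinate $j$, the row contributions $Z_r$ are independent, bounded by $u/k^4$, with total mean $\mu\le u/k$ and variance at most $u^2/k^5$. I would apply \eqref{eq:bernstein} to the centered variables with $b=u/k^4$, $V=u^2/k^5$ and $L=k/3+\ln 2$. The deviation threshold is then $\sqrt{2VL}+2bL/3=O(u/k^2)$ with constant below $1$, since $\sqrt{2(k/3+1)/k^5}\le 0.9/k^2$ for $k\ge32$. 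That is at most the slack $u/k(k-1)$ minus the rounding loss from the floor, which costs less than $1$ because sums are integers. The probability is then at most $2e^{-L}\le e^{-k/3}$ per child and coordinate.

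The remaining step is counting the events. Each call has $dk$ child-coordinate events and one coloring event, which gives the factor $dk+1$. I need at most $n^2$ calls. The number of internal calls is at most the number of tree nodes with more than $k$ items, at most about $n$ per depth, and the depth is $O(\log n)\le n$. So the total is bounded by $n^2$, crudely but sufficiently. Summing these bounds gives \eqref{eq:ss-pointwise}.
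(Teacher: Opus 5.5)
Your proposal is correct and follows the paper's proof: the same validity induction, the same per-call budget of $dk$ child--coordinate events plus one coloring event with $4^{-k}\le e^{-k/3}$, and the same union bound over at most $n^2$ calls. Counting only the internal calls is slightly tighter but changes nothing.

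The one difference is in the partition step. You use the paper's Bernstein inequality \eqref{eq:bernstein} with $L=k/3+\ln 2$ where the paper uses a multiplicative Chernoff bound. This works because the deviation threshold stays below $0.86\,u/k^2<u/(k(k-1))$. Your floor remark is garbled, though: the floor costs nothing, since an integer contribution at most $u/(k-1)$ is automatically at most $\lfloor u/(k-1)\rfloor$.
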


\begin{proof}
Every stored sum is valid. This follows inductively because the child
instances are disjoint, the large-item color classes are disjoint from
one another and from the children, and each update respects the available
occurrences. Truncation can discard a valid sum but cannot create an
invalid one. In particular, zero is always retained.

For the preservation claim, fix one witness for the specified root sum
before exposing any randomness. At a recursive call, its restricted
witness is determined by the ancestor partitions. Condition on those
choices and on this restricted witness fitting in the current box.
We bound the probability that the call fails to pass it on correctly.

\paragraph{The small-item partition.}

Fix a child and a coordinate. Its witness contribution is a sum of
independent row contributions in $[0,u/k^4]$, with expectation at most
$u/k$. The target $u/(k-1)$ is $(1+1/(k-1))$ times this upper bound
on the mean. A Chernoff bound therefore gives
\[
\begin{aligned}
    \Pr\!\left[\text{contribution}>\frac{u}{k-1}\right]
    &\le
    \exp\!\left(-\frac{u/k}{3(u/k^4)(k-1)^2}\right)\\
    &=\exp\!\left(-\frac{k^3}{3(k-1)^2}\right)
    \le e^{-k/3}.
\end{aligned}
\]
If there are no small items, there is no failure to bound. Otherwise
$u>0$, so the displayed application is valid. Contributions are
integers, so fitting below $u/(k-1)$ also means fitting below its floor.
A union bound over the $dk$ child-coordinate pairs gives failure
probability at most $dk e^{-k/3}$.

\paragraph{The large-item coloring.}

As observed above, the witness contains fewer than $h=dk^4$ large
occurrences. In one trial, a union bound over their pairs gives collision
probability at most $\binom h2/(2h^2)<1/4$. Thus the probability
that none of the $k$ independent trials separates them is at most
$4^{-k}$.

If the small-item witness sums survive the recursive calls and one
coloring separates the large-item witness, the call retains the desired
sum. Indeed, each color-class update can choose the required occurrence
or zero. Every intermediate sum is coordinatewise at most the final
witness sum, so none of these updates discards it.

The local failure probability is therefore at most
$dk e^{-k/3}+4^{-k}\le(dk+1)e^{-k/3}$.

\paragraph{Passing through the recursion.}

Base cases are exact. At an internal node with $n_v$ items, every
child has at most $\lceil n_v/k\rceil\le2n_v/k$ items. Hence the
depth is at most $\lceil\log_2 n\rceil$. Each non-root level has
at most $n$ calls, including empty ones: the parents have disjoint
inputs, and a parent creates $k$ children only when it contains more
than $k$ items. For $n\ge2$, the whole tree consequently has at
most $n^2$ calls; for $n\le1$, the root is exact.

If the root witness is lost, there is a first call at which either
its small-item partition or all its large-item colorings fail. Before
that call exposes its randomness, the restricted witness is already
fixed by its ancestors. The conditional local bound above therefore
applies at every possible call. A union bound over the calls proves
\eqref{eq:ss-pointwise}.
\end{proof}

\paragraph{The quantities used to charge the work.}

Let $n_v$ be the number of items at node $v$, and let $s_v$ be the
size of its true subset-sum set, regardless of what the randomized run
returns. Every such sum is also a valid root sum, so $s_v\le s$.
An empty call has $s_v=1$; a nonempty call has $s_v\ge2$, since
it contains a nonzero item that fits in its box.

\begin{lemma}[Work of one run]\label{lem:ss-core-time}
If $k\ge32+4\log s$, the expected work of
\cref{alg:subset-sum} is at most
\begin{equation}\label{eq:ss-core-time}
    \operatorname{poly}(d)k^5(n+s\sqrt n)
    \bigl(1+\log(n+s+2)\bigr)^{d+2}.
\end{equation}
The bound holds for every sequence of partitions and colorings.
\end{lemma}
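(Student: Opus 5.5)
The plan is to bound the work at a single call $v$ by a local charge of the form
\[
 \operatorname{poly}(d)\,k^5\bigl(n_v+k+(s_v-1)\sqrt{n_v}\bigr)\Lambda,
\]
where $\Lambda=(1+\log(n+s+2))^{d+1}$. I would then sum this charge over the recursion tree with a potential argument. Throughout I fix the partitions and colorings arbitrarily; the only randomness left is inside the sparse-convolution calls, and linearity of expectation handles it. The key observation is that every set the run stores at $v$ is a subset of the \emph{true} subset-sum set of the corresponding items. This holds by the validity part of \cref{lem:ss-pointwise}. So all sizes can be bounded by true sizes, whatever the run recovers.

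\textbf{Local cost of the merge.} Let $T_i$ be the true output of child $i$, with target $u'$. The returned set $A_i$ is contained in $T_i$. For $j<k$, the partial sum $A_1+\cdots+A_j$ uses at most $k-1$ children, so each coordinate is at most $(k-1)u'\le u$. Hence it lies in $\SubSums{Y,u}$ and has size at most $s_v$. The final full sum has size at most $M\le s_v^{1+1/(k-1)}$ by \cref{lem:ss-sumsets} and the discussion after it. Since $k-1\ge4\log s$, we get $s_v^{1/(k-1)}\le s^{1/(4\log s)}\le 2^{1/4}$. So each of the $k-1$ convolutions emits $O(s_v)$ sums, and the merge costs $O(ks_v\log)$ in expectation. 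The base case $|Y|\le k$ costs $O(dks_v)$.

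\textbf{Local cost of the color classes.} In one trial, let the nonempty classes have sizes $m_B$, and let $c\le\min\{n_v,2h^2\}$ be their number. Before and after each update, $D$ is a subset of the true set $\SubSums{Y,u}$, so $|D|\le s_v$. By \cref{lem:clipped-sumset}, one update costs $(m_B+s_v+s_v\sqrt{m_B})$ times $\operatorname{poly}(d)\Lambda$. Summing over the classes gives at most
\[
 n_v+s_vc+s_v\sqrt{cn_v}\le n_v+3h\,s_v\sqrt{n_v},
\]
using $\min\{a,b\}\le\sqrt{ab}$ and Cauchy--Schwarz. There are $k$ trials and $h=dk^4$, which gives the factor $k^5$. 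Replacing $s_v$ by $s_v-1$ costs only a constant factor when $s_v\ge2$. When $s_v=1$ the call is empty and costs $O(k)$. Row arrangement, separating small from large items, and the final unions cost $\operatorname{poly}(d)k(n_v+s_v)\log$.

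\textbf{Summation over the tree, the main step.} First, the terms $n_v$ sum to at most $n$ per level, and there are $O(\log n)$ levels. The $O(k)$ overhead per call sums to $O(kn\log n)$, since only calls with more than $k$ items create children. For the $s$-dependent part, I would use the potential $\Phi_v=(s_v-1)\sqrt{n_v}$. Superadditivity \eqref{eq:ss-superadditivity} gives
\[
 \sum_i(s_i-1)\le M-1\le 2^{1/4}s_v-1\le2(s_v-1)\quad\text{for }s_v\ge2.
\]
Every child has $n_i\le2n_v/k$ items. Together these give
\[
 \sum_i\Phi_i\le2\sqrt{2/k}\,\Phi_v\le\tfrac12\Phi_v
\]
for $k\ge32$. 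So the total over the tree is at most $2\Phi_{\rm root}\le2s\sqrt n$. This combines with the local charges to give \eqref{eq:ss-core-time}.

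The obstacle I expect is making the true-size bound $M\le2^{1/4}s_v$ precise at every node. The hypothesis uses the root size $s$, while the inequalities need $s_v$. I would resolve this using $s_v\le s$ together with the leave-one-out containment from \cref{sec:ss-sumset-inequalities}, applied at $v$ with its own target $u$.
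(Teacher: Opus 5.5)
Your proposal is correct and follows essentially the same route as the paper: it bounds each call's local cost through the sumset inequalities for the merge and \cref{lem:clipped-sumset} with Cauchy--Schwarz for the color classes, then sums using the contracting potential $\Phi_v=(s_v-1)\sqrt{n_v}$ and the per-depth bound on $\sum_v n_v$. The differences are cosmetic. You bound the intermediate merges by their containment in the parent box, where the paper appends zeros to get containment in $T_1+\cdots+T_k$. Also, the additive $O(k)$ per-call overhead should sit outside the $k^5$ factor in your stated local charge, which is how your summation step in fact treats it.
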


\begin{proof}
We first bound the work performed at one node, excluding recursive
calls, and then sum it over the tree. Write
$\Lambda=1+\log(n+s+2)$.

\paragraph{Merging the small-item answers.}

At an internal node $v$, let $T_1,\ldots,T_k$ be the true child
output sets. Every leave-one-out sumset lies in the parent output.
Thus \cref{lem:ss-sumsets} gives
\begin{equation}\label{eq:ss-merge-size}
\begin{aligned}
    M_v:=|T_1+\cdots+T_k|
    &\le s_v^{k/(k-1)}\le\tfrac32s_v,\\
    \sum_i(|T_i|-1)
    &\le M_v-1\le2(s_v-1).
\end{aligned}
\end{equation}
Here $s_v^{1/(k-1)}\le e^{1/4}<3/2$ by the choice of $k$;
the last inequality uses $s_v\ge2$.

Every returned child answer contains zero and is contained in its true
set. Consequently, every intermediate full merge is contained in
$T_1+\cdots+T_k$: append zero from each child not yet merged.
The $k-1$ full sparse convolutions therefore cost $\softO(ks_v)$,
with a polynomial overhead in $d$.

\paragraph{Merging the large-item color classes.}

Suppose one color class has $m_j$ occurrences. The current set, the
next set, and the individual vectors in the class all belong to the
true parent output. Applying \cref{lem:clipped-sumset}, its insertion
costs at most
\[
    \operatorname{poly}(d)
    (s_v\sqrt{m_j}+m_j)\Lambda^{d+1}.
\]
There are at most $2h^2$ nonempty classes in a trial. Since their
total number of occurrences is at most $n_v$, Cauchy--Schwarz gives
\[
    \sum_j\sqrt{m_j}
    \le\sqrt{2h^2\sum_jm_j}
    \le\sqrt2\,h\sqrt{n_v}.
\]
Summing over the classes and then over the $k$ trials, and using
$h=dk^4$, bounds the local expected work by
\begin{equation}\label{eq:ss-local-cost}
    \operatorname{poly}(d)
    \bigl(kn_v+k^5s_v\sqrt{n_v}\bigr)\Lambda^{d+1}.
\end{equation}
This also covers the small-item merge, input scans, and unions.

At a nonempty leaf, keep the current set sorted. An update merges it
with a translated copy and discards out-of-box vectors in $O(ds_v)$
time. There are at most $k$ updates, so the leaf also satisfies
\eqref{eq:ss-local-cost}. Empty calls take constant work and are
counted separately below.

\paragraph{Relating recursive costs.}

We control the term $s_v\sqrt{n_v}$. The input size shrinks
rapidly at each level, but the output sizes of the children need not
sum to at most the parent's output size. Equation~\eqref{eq:ss-merge-size}
gives the weaker bound we need: after subtracting the common zero
output, their sizes sum to at most twice the parent's.

Set $\Phi_v=(s_v-1)\sqrt{n_v}$. Every child has at most $2n_v/k$
items, so
\[
\begin{aligned}
    \sum_{i=1}^k\Phi_i
    &\le\sqrt{\frac{2n_v}{k}}\sum_i(|T_i|-1)\\
    &\le2\sqrt{\frac2k}\,(s_v-1)\sqrt{n_v}
    \le\frac12\Phi_v,
\end{aligned}
\]
where the last inequality uses $k\ge32$.

Thus the possible factor-two growth in child output sizes is more than
offset by the decrease in the square root of the input size. The total
potential over each successive depth is at most half the previous one,
and hence
\[
    \sum_v\Phi_v\le2(s-1)\sqrt n.
\]
For nonempty calls, $s_v\le2(s_v-1)$. Therefore
\[
    \sum_{v:\,n_v>0}s_v\sqrt{n_v}
    \le2\sum_v\Phi_v
    \le4(s-1)\sqrt n.
\]
Subtracting one was essential: zero appears in every child, even an
empty one, and should not be charged as a separate recursive output.

Finally, the item sets at any depth are disjoint, so their sizes sum
to at most $n$. The depth bound gives
$\sum_v n_v=O(n\log(n+2))$, and the number of empty calls is bounded
by the same quantity. Summing \eqref{eq:ss-local-cost} and absorbing
one additional logarithm proves \eqref{eq:ss-core-time}.
\end{proof}

\subsection{Choosing the Parameter without Knowing the Output Size}
\label{sec:ss-budget}

The work bound requires $k=\Omega(\log s)$, but the algorithm does
not know $s$. We also choose to only assume an expected runtime bound for sparse nonnegative convolution. We handle
both issues by trying increasing budgets and restarting an attempt
that exceeds its budget.


Fix an error parameter $0<\delta<1$, and put
$R=\lceil\log_2(8/\delta)\rceil$. For each budget
$B=1,2,4,\ldots$, use
\begin{equation}\label{eq:ss-budget-parameter}
    k_B=\left\lceil C\log
    \frac{2(n+1)(d+1)R(B+1)}{\delta}\right\rceil,
\end{equation}
where $C$ is a sufficiently large absolute constant. This ensures
$k_B\ge32$. Make up to $R$ independent attempts with this parameter,
each capped at $B$ operations, and return the first completed answer.
The cost of writing the answer is included in the cap.

\begin{algorithm}[H]
\caption{Choosing the recursion parameter by budgets}
\label{alg:ss-budget}
\begin{algorithmic}[1]
\Require The preprocessed root instance $(Y,t)$ of at most $n$ occurrences,
and $0<\delta<1$.
\State $R\gets\lceil\log_2(8/\delta)\rceil$
\For{$B=1,2,4,\ldots$}
    \State Set $k\gets k_B$ from \eqref{eq:ss-budget-parameter}
           and $h\gets dk^4$
    \For{$j=1,\ldots,R$}
        \State Run $\Call{Solve}{Y,t}$ with fresh randomness,
               stopping the attempt after $B$ operations
        \Statex \hspace{\algorithmicindent}Count all work,
                including writing the output, toward this budget
        \If{the attempt completes within the budget}
            \State \Return its answer
        \EndIf
    \EndFor
\EndFor
\end{algorithmic}
\end{algorithm}

\begin{proof}[Proof of \cref{thm:random-subset-sum}]
We separately bound the chance of returning an incomplete answer and
the time needed for some attempt to finish. The initial preprocessing
cost is included in the final bound.

\paragraph{An incorrect answer is unlikely at every budget.}

For the parameter \eqref{eq:ss-budget-parameter},
\cref{lem:ss-pointwise} bounds the probability of omitting any fixed
true sum by
\[
    \frac{\delta}{16R(B+1)^3}.
\]
Interpret this guarantee for the complete, uncapped run using the
same random choices as the capped attempt. Whenever the attempt
finishes within its budget, its answer is exactly that of this complete
run.

If $s\le B$, an incorrect answer must omit one of the $s$ true
sums. A union bound over them gives error probability at most
$\delta/(16R(B+1)^2)$.

If $s>B$, fix any $B+1$ true sums before the attempt starts.
An attempt using at most $B$ operations cannot write all $B+1$ of
them. Completion therefore implies that one of these specified sums
was omitted. The same union bound again gives probability at most
$\delta/(16R(B+1)^2)$.

In both cases this bounds the event that an attempt completes
and returns an incorrect answer. Summing over the $R$ attempts at
each dyadic budget gives total error probability less than $\delta/4$.
Validity never fails.

\paragraph{A sufficiently large budget gives constant completion probability.}

Put $H=\log(2(n+1)(d+1)(s+1)/\delta)$. There is a dyadic budget
\[
    B_\star=\operatorname{poly}(d)(n+s\sqrt n)H^{d+O(1)}
\]
for which the expected work of one complete attempt is at most
$B_\star/2$. To see that the parameter choice is consistent with this
claim, note that $\log B_\star=O(dH)$. Hence $k_{B_\star}=O(dH)$,
and the budget can be chosen large enough that
$k_{B_\star}\ge32+4\log s$. Substituting into
\cref{lem:ss-core-time} gives a bound of the displayed form; enlarging
the absolute constant makes it at most $B_\star/2$.

The same conclusion holds at every larger dyadic budget $B$.
The only budget-dependent factor in \eqref{eq:ss-core-time} is
$k_B^5$, which grows as a fixed fifth power of a logarithm, whereas
the available time grows as $B$. Increasing $B_\star$ if necessary
makes the ratio at most $1/2$ and decreasing thereafter.

Markov's inequality now gives completion probability at least $1/2$
per attempt at these budgets. Thus the probability of exhausting all
$R$ attempts at $B_\star$ is at most
$2^{-R}\le\delta/8$. The total work through that budget is
$O(RB_\star)$, because the preceding budgets form a geometric series.
This has the form \eqref{eq:ss-main-time}. Together with the error
bound above, it gives the claimed joint correctness-and-time guarantee.

\paragraph{The expected time has the same bound.}

Let $q=2^{-R}\le\delta/8<1/8$. At every stage beyond $B_\star$,
conditional on reaching it, the probability of continuing to the next
stage is at most $q$. The budgets double, so the expected work after
$B_\star$ is bounded by
\[
    O(RB_\star)\sum_{j\ge1}(2q)^j
    =O(RB_\star).
\]
Adding the work through $B_\star$ proves the expected-time bound.
\end{proof}

\paragraph{The $s^{4/3}$ bound in one dimension.}

Bringmann and Nakos~\cite{BN21} also give a one-dimensional
restricted-sumset routine taking
$\softO(|A|+|B|+|C|^{4/3})$ time. Using it for the large-item
insertions recovers $\softO(n+s^{4/3})$ with the same recursion
and budget procedure. Correctness is unchanged; only the accounting
for the output-dependent work differs.

Previously, the factor $\sqrt{n_v}$ shrank enough to make the
recursive potential contract. The new local cost has no such factor,
so we instead bound the total output sizes directly at each depth.

At depth $j$, there are at most $k^j$ nodes. Their item sets are
disjoint and their targets are at most $t/(k-1)^j$. Any group of at
most $(k-1)^j$ nodes therefore has its full sumset contained in the
root output. By \eqref{eq:ss-superadditivity},
\[
    \sum_{v\text{ in the group}}(s_v-1)\le s-1.
\]
The budget parameter satisfies $k=\Omega(\log(n+2))$, and the depth
is $O(\log(n+2))$. Consequently,
\[
    \frac{k^j}{(k-1)^j}
    \le\exp\!\left(\frac{j}{k-1}\right)
    =O(1).
\]
Each depth can therefore be partitioned into only a constant number
of these groups, giving
\[
    \sum_{v\text{ at depth }j}(s_v-1)=O(s-1).
\]
On nonempty calls, $s_v\le2(s_v-1)$, so
\[
    \sum_{\substack{v\text{ at depth }j\\n_v>0}}s_v^{4/3}
    \le
    \left(\sum_{\substack{v\text{ at depth }j\\n_v>0}}s_v\right)^{4/3}
    =O(s^{4/3}).
\]

In one dimension, $h=k^4$, so all coloring trials together perform
$O(kh^2)=O(k^9)$ color-class insertions per node. The resulting
local work is $\softO(kn_v+k^9s_v^{4/3})$. Summing over depths
therefore gives expected work $\softO(k^9(n+s^{4/3}))$.
Its dependence on the budget is still a fixed power of
$\log B$, so the same stopping argument applies.

\section{Near-Convex Min-Plus Convolution}
\label{sec:near-convex}

Convexity makes min-plus convolution easy: merge the two sorted lists
of slopes. A small perturbation can destroy the order of the minimizing
pairs, but it cannot change their values by much. Rather than enumerate the pairs that might
minimize an output, it enumerates their distinct \emph{values}.

We use the geometric decomposition of Bringmann and Cassis~\cite{BC23}
to make this observation algorithmic. First discard pairs whose convex
reference value is too large for them to be optimal. Inside any
rectangle consisting entirely of the remaining pairs, each output
index has only $O(D+1)$ possible integer values. Sparse convolution
lists these values. The geometry supplies a cover by rectangles whose
total side length is $O(N\log(N+2))$.

\begin{theorem}[Near-convex convolution]
\label{thm:near-convex}
Let $f[0..n]$ and $g[0..m]$ be integer arrays. Suppose convex rational
arrays $F,G$ satisfy
\[
    0\le f_i-F_i\le\Delta_f,
    \qquad
    0\le g_j-G_j\le\Delta_g.
\]
Put $D=\Delta_f+\Delta_g$ and $N=n+m+2$. Then $f\mc g$ can be
computed in $\softO(N(D+1))$ time using output-sensitive nonnegative
convolution. The algorithm is deterministic when the sparse-convolution
primitive is deterministic.
\end{theorem}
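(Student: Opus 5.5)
We first compute the convex reference convolution $H=F\mc G$ by merging the slope sequences of $F$ and $G$, in $O(N)$ time. Call a pair $(i,j)$ \emph{relevant} if $F_i+G_j\le H_{i+j}+D$. Every true minimizer of $(f\mc g)_{i+j}$ is relevant. Indeed, if $(i^*,j^*)$ attains $H_k$, then $f_{i^*}+g_{j^*}\le H_k+D$, whereas $f_i+g_j\ge F_i+G_j$ for every pair. For the same reason, any relevant pair satisfies
\[
H_{i+j}\le f_i+g_j\le F_i+G_j+D\le H_{i+j}+2D .
\]
Since $f,g$ are integers, each output index $k$ has at most $2D+1$ candidate values among relevant pairs. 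These values lie in the window $[\lceil H_k\rceil,\lceil H_k\rceil+2D]$.

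Next we describe the geometry of the relevant set. Along each antidiagonal $i+j=k$, the function $i\mapsto F_i+G_{k-i}$ is convex. Its sublevel set at height $H_k+D$ is therefore an interval of $i$. We plan to show that the left and right endpoints of these intervals are monotone in $k$, so that the relevant pairs form a band between two monotone grid paths. Monotonicity should follow from the slope-merge description: the minimizing $i$ moves monotonically, and convexity of $H$ controls how the threshold shifts. We then cover the band by a dyadic quadtree decomposition. A square of side $2^\ell$ is accepted if it lies entirely in the band, and it is split otherwise. We test containment with a two-corner test: for a square $[i_0,i_1]\times[j_0,j_1]$, it suffices to check relevance at the two corners $(i_0,j_1)$ and $(i_1,j_0)$, which are the extreme points of each antidiagonal segment in the square. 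This relies on convexity of each antidiagonal restriction together with monotonicity of the band boundaries. Only squares meeting a boundary path are split. A monotone path meets $O(N/2^\ell)$ squares of side $2^\ell$, so at each scale the accepted squares have total side length $O(N)$.

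For an accepted rectangle $I\times J$, we encode each pair as a two-dimensional point. Map $i\in I$ to $(i,\,f_i-\lceil F\text{-based offset}\rceil)$. More concretely, we use the value $f_i$ shifted by a rounding of a linear reference so that the sums land in a window. Since all pairs in the rectangle satisfy $f_i+g_j-\lceil H_{i+j}\rceil\in[0,2D+1]$, we can subtract a linear function $\lambda i$, with $\lambda$ a slope of $H$ valid across the rectangle, from both arrays. The remaining values $f_i-\lambda i$ and $g_j-\lambda j$ then vary over a range of size $O(D+|I|+|J|)$ in total, which is still too large. The cleaner approach is to use the graph sumset $\{(i,f_i)\}+\{(j,g_j)\}$ directly. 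Its support at index $k$ contains only values in the window, so its total support is $O((|I|+|J|)(D+1))$. Output-sensitive sparse convolution, applied to the base-encoded two-dimensional points of \cref{sec:primitives}, lists this support in $\softO((|I|+|J|)(D+1))$ time. We take the minimum value per index and then the minimum over rectangles. Summing over the $O(\log N)$ scales gives $\softO(N(D+1))$, and determinism is inherited from the sparse-convolution primitive.

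The main obstacle is the second step: proving that the relevant set is a monotone band and that the two-corner test certifies an entire square. A subtle point is that $H_k+D$ is the threshold and $H$ is itself convex. I would first try to express relevance as $\phi(i,j)\le0$ with $\phi(i,j)=F_i+G_j-H_{i+j}-D$, and then show that $\phi$ is convex along antidiagonals and monotone away from the optimal path in each row and column. This would make the sublevel set orthogonally convex around the minimizing path.
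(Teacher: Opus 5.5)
Your outline has the same architecture as the paper: relevant pairs, the $2D$ value window, graph sumsets on fully relevant rectangles, a quadtree with a two-corner test, and a per-scale count. There are two genuine gaps.

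\textbf{The band lemma.} The step you flag as the main obstacle is left unproved, and everything after it depends on it. Monotonicity of $l_k,u_k$ in $k$ is not enough. You need the unit-step property $l_{k+1}-l_k\in\{0,1\}$, $u_{k+1}-u_k\in\{0,1\}$ of \cref{lem:relevant-band}. Only then are $L(i,j)=i-l_{i+j}$ and $U(i,j)=i-u_{i+j}$ nondecreasing in $i$ and nonincreasing in $j$. That makes $(i_0,j_1)$ the minimizer and $(i_1,j_0)$ the maximizer of both over the rectangle. Relevance at these corners ($L\ge0$ at the first, $U\le0$ at the second) then certifies every pair.

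Your stated reason, that these corners are the extreme points of each antidiagonal segment, is false. In a square they are the two ends of one antidiagonal only, so convexity along antidiagonals certifies only that one. The remark that ``convexity of $H$ controls how the threshold shifts'' does not give the unit steps either.

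The paper obtains them by an exchange on slope prefixes. From a pair on diagonal $k$, appending the cheaper next available slope costs at most $\gamma_k=H_{k+1}-H_k$. From a pair on diagonal $k+1$, deleting the most expensive selected slope saves at least $\gamma_k$. So excess never increases in either direction.

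Your own suggested route can also be completed. Since $\phi(i+1,j)-\phi(i,j)=\alpha_i-\gamma_{i+j}$ changes sign at most once (from nonpositive to nonnegative), and likewise in $j$, rows and columns of the relevant set are intervals. They contain the monotone path $(i_k^\star,k-i_k^\star)$. Hence a jump $l_{k+1}\ge l_k+2$ or $l_{k+1}<l_k$, or similarly for $u$, would contradict row or column convexity.

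\textbf{The rejection test.} As written, a square is accepted if it lies in the band and split otherwise. A square lying entirely outside the band is then subdivided down to singletons. The irrelevant region can contain $\Theta(N^2)$ pairs, which destroys the bound. Your sentence ``only squares meeting a boundary path are split'' presupposes a constant-time test for squares lying wholly on one side of the band, and you do not give one. Relevance at a corner alone does not say which side an irrelevant point is on.

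The paper compares an irrelevant corner with the known minimizer. If $q=(i_1,j_0)$ is irrelevant and $i_1<i^\star_{i_1+j_0}$, then $L(q)<0$. Since $q$ maximizes $L$, the whole square lies left of the band. Symmetrically, $p=(i_0,j_1)$ irrelevant with $i_0>i^\star_{i_0+j_1}$ puts the whole square to the right.

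It then checks that a square failing all three tests has $U(p)\le0\le L(q)$. Walking from $p$ to $q$, $L$ and $U$ move by at most one per step with $L\ge U$, so the walk hits a relevant pair, while some corner is irrelevant. Hence every split square meets a boundary path, which is what gives $O(N/\ell+1)$ squares per scale.

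With these pieces supplied, and the graph points shifted to nonnegative values before base encoding, the rest of your argument goes through.
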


We call a convex function $F$ such that $F_i \le f_i$ a convex minorant. A linear-time lower-hull
scan computes the greatest convex minorant of each array, 
interpolating between hull vertices. It lies above every
promised convex minorant, so replacing $F,G$ by these minorants cannot
increase the error bounds. A scan then finds the maximum gaps. We may
use their sum as $D$ throughout the algorithm.

\subsection{Relevant Pairs Have Few Distinct Values}
\label{sec:near-convex-tiles}

Let $H=F\mc G$. The nonnegative perturbations can only increase a
pair's value, while a pair attaining $H_k$ gains at most $D$. Hence
\begin{equation}\label{eq:nearconvex-value-range}
    H_k\le(f\mc g)_k\le H_k+D.
\end{equation}
Call a pair $(i,j)$ \emph{relevant} if
$F_i+G_j\le H_{i+j}+D$. Every minimizing pair is relevant: an
irrelevant pair already exceeds the upper bound in
\eqref{eq:nearconvex-value-range} before its perturbations are added.
For a relevant pair, adding the perturbations gives
\begin{equation}\label{eq:nearconvex-pair-range}
    H_{i+j}\le f_i+g_j\le H_{i+j}+2D.
\end{equation}

Now suppose every pair in a rectangle $I\times J$ is relevant, where
$I,J$ are integer intervals. Form the sumset of the two array graphs,
\[
    Z_{I,J}
    =\{(i,f_i):i\in I\}+\{(j,g_j):j\in J\}.
\]
Its first coordinate is an output index; its second is a candidate
value at that index. There are $|I|+|J|-1$ possible indices, and
\eqref{eq:nearconvex-pair-range} permits at most
$\lfloor2D\rfloor+1$ integer values at each one. Thus
\begin{equation}\label{eq:nearconvex-support}
    |Z_{I,J}|
    \le (|I|+|J|-1)(\lfloor2D\rfloor+1).
\end{equation}
This is the saving: the number of \emph{representations} may be
$|I||J|$, but the number of distinct graph sums is only
$O((|I|+|J|)(D+1))$. Nonnegative convolution counts representations
without listing them, so its output lists 
the candidate values we need. Retain the smallest one at each index.

To use a scalar sparse-convolution primitive, put
$f_{\min}=\min_{i\in I}f_i$ and $g_{\min}=\min_{j\in J}g_j$,
and choose an integer $B>n+m$. Encode the input points as
$i+B(f_i-f_{\min})$ and $j+B(g_j-g_{\min})$. Their sum is
\[
    (i+j)+B(f_i+g_j-f_{\min}-g_{\min}).
\]
Because $0\le i+j<B$, the residue modulo $B$ recovers the output
index and the quotient recovers the shifted value. The shifts make
all encoded inputs nonnegative, even when the array entries are
negative.

It remains to find relevant rectangles without scanning the full
index grid. The convex reference arrays give the
monotonicity needed for this step.

\subsection{Structure of the Relevant Pairs}
\label{sec:near-convex-band}

Write $\alpha_i=F_{i+1}-F_i$ and $\beta_j=G_{j+1}-G_j$ for the
two nondecreasing slope lists. A pair $(i,j)$ selects the first $i$
slopes of $F$ and the first $j$ slopes of $G$. Subject to $i+j=k$,
the cheapest selection consists of the $k$ smallest slopes in their
merged order. With a fixed rule for ties, this selection is a prefix
of each list.

If the merged slopes are $\gamma_0,\ldots,\gamma_{n+m-1}$, then
\[
    H_k=F_0+G_0+\sum_{\ell<k}\gamma_\ell.
\]
During the merge, record the number $i_k^\star$ of selected
$F$-slopes. This gives both $H_k$ and a minimizing pair
$(i_k^\star,k-i_k^\star)$ for every $k$ in $O(N)$ operations.
For rational arithmetic, evaluate
$H_k=F_{i_k^\star}+G_{k-i_k^\star}$ directly rather than adding
many fractions with different denominators.

Define the \emph{excess} $R(i,j)=F_i+G_j-H_{i+j}$. It is the
extra cost of the two chosen prefixes over the cheapest prefixes of
the same total length. Relevance is precisely $R(i,j)\le D$.
Two properties of these prefix choices determine the shape of the
relevant region: on a fixed diagonal their costs are convex, and
between consecutive diagonals a relevant choice can always be extended
or shortened by one slope without increasing its excess.

\begin{lemma}[Relevant band]\label{lem:relevant-band}
On each diagonal $i+j=k$, the relevant indices form a nonempty
interval $l_k\le i\le u_k$. For $0\le k<n+m$,
\[
    l_{k+1}-l_k\in\{0,1\},
    \qquad
    u_{k+1}-u_k\in\{0,1\}.
\]
Consequently, the pairs $(l_k,k-l_k)$ and $(u_k,k-u_k)$ trace
two monotone grid paths containing the relevant region between them.
\end{lemma}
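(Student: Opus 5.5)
Our plan is to reduce everything to two properties of the excess $R(i,j)=F_i+G_j-H_{i+j}$: convexity along a diagonal, and a one-step extension property between consecutive diagonals.

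\textbf{Convexity and nonemptiness.} Fix $k$ and write $\phi_k(i)=F_i+G_{k-i}$ on its feasible range $\max\{0,k-m\}\le i\le\min\{n,k\}$. Its forward difference is $\alpha_i-\beta_{k-i-1}$. As $i$ increases, $\alpha_i$ does not decrease and $\beta_{k-i-1}$ does not increase, so $\phi_k$ is convex. Subtracting the constant $H_k$ keeps it convex. Hence the sublevel set $\{i:R(i,k-i)\le D\}$ is an integer interval. It is nonempty because it contains the merge minimizer $i_k^\star$, which has excess $0\le D$.

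\textbf{Monotonicity of $l_k$.} We would prove $l_k\le l_{k+1}\le l_k+1$.
\begin{itemize}
\item For $l_{k+1}\le l_k+1$: extend the relevant pair $(l_k,k-l_k)$ by one slope and show one of the extensions has excess at most $R(l_k,k-l_k)$. Equivalently, $\min\{\alpha_{l_k},\beta_{k-l_k}\}\le H_{k+1}-H_k=\gamma_k$ whenever both extensions exist.

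This holds because $\gamma_k$ is the $(k+1)$st smallest merged slope. The pair uses only $l_k$ slopes of $F$ and $k-l_k$ of $G$, so at least one of the two "next" slopes is among the $k+1$ smallest. If the two excluded next slopes were both larger than $\gamma_k$, the $k+1$ smallest slopes would all lie among the first $l_k$ of $F$ and first $k-l_k$ of $G$, which is only $k$ slopes, a contradiction. Ties are handled by the fixed merge rule, or simply by noting the inequality is non-strict.

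Boundary cases, where only one extension exists because $l_k=n$ or $k-l_k=m$, use the same counting.
\item For $l_k\le l_{k+1}$: symmetrically, shorten a relevant pair on diagonal $k+1$. Removing the last used slope of $F$ or of $G$ can be done without increasing the excess, since $\max\{\alpha_{i-1},\beta_{j-1}\}\ge\gamma_k$ by the dual counting argument.
\end{itemize}

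These two facts alone are not enough, since they only locate some relevant pair on the adjacent diagonal near $l_k$. The main obstacle is turning them into the exact increments $l_{k+1}-l_k\in\{0,1\}$, which needs something sharper. We would argue by comparing excess along the diagonal directly:
\[
R(i,k+1-i)-R(i,k-i)=\beta_{k-i}-\gamma_k,
\qquad
R(i+1,k-i)-R(i,k-i)=\alpha_i-\gamma_k.
\]

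\emph{Upper bound.} Suppose $l_{k+1}\ge l_k+2$. Then $i=l_k$ and $i=l_k+1$ are both irrelevant on diagonal $k+1$, since they lie below $l_{k+1}$, while $(l_k,k-l_k)$ is relevant.
\begin{itemize}
\item Irrelevance of $(l_k,k+1-l_k)$ forces $\beta_{k-l_k}>\gamma_k$.
\item Irrelevance of $(l_k+1,k-l_k)$ forces $\alpha_{l_k}>\gamma_k$.
\end{itemize}
Both contradict the counting fact, so $l_{k+1}\le l_k+1$.

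\emph{Lower bound.} Suppose $l_{k+1}<l_k$. Apply the shortening step to $(l_{k+1},k+1-l_{k+1})$, which has $i=l_{k+1}\le l_k-1$. It produces a relevant pair on diagonal $k$ at index $l_{k+1}$ or $l_{k+1}-1$, both below $l_k$. This contradicts the definition of $l_k$, so $l_{k+1}\ge l_k$.

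\textbf{Upper endpoint, paths, and checks.} The statements for $u_k$ follow by swapping the roles of $F$ and $G$, which maps $i\mapsto k-i$. The monotone grid paths then follow immediately, since each step from diagonal $k$ to $k+1$ advances exactly one coordinate.

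Finally, we would check the domain boundaries, since the feasible range of $i$ shifts with $k$ when $k>m$ or $k>n$. The counting argument handles these once "next slope unavailable" is treated as $+\infty$.
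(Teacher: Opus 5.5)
Your proof is correct and takes essentially the paper's route: convexity of $F_i+G_{k-i}$ along each diagonal, then one-slope extension and shortening steps obtained by comparing the next (or last) selected slope with $\gamma_k=H_{k+1}-H_k$. Your remark that these two facts ``alone are not enough'' is mistaken: applied to the leftmost relevant pairs they give $l_{k+1}\le l_k+1$ and $l_k\le l_{k+1}$ directly, and applied to the rightmost pairs they give the bounds on $u_k$. Your subsequent contradiction arguments simply re-derive these same two facts.
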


\begin{proof}
On diagonal $k$, the forward difference of $F_i+G_{k-i}$ is
$\alpha_i-\beta_{k-i-1}$, which is nondecreasing in $i$.
Its sublevel set at $H_k+D$ is therefore an interval, and it is
nonempty because it contains a minimizer of $H_k$.

Consider any two prefixes selecting $k<n+m$ slopes. Their cheapest
next available slope has value at most $\gamma_k$: among the $k+1$
cheapest slopes, at least one has not been selected, and the next
available slope in its list is no more expensive. Adding the cheaper
next slope therefore costs at most $H_{k+1}-H_k$, so it does not
increase excess. This extends any relevant pair to a relevant pair
on diagonal $k+1$, with its $i$-coordinate unchanged or increased
by one.

Conversely, among any $k+1$ selected slopes, the most expensive has
value at least $\gamma_k$. It is the last selected slope of one
of the lists. Deleting it preserves the prefix condition and decreases
the cost by at least $H_{k+1}-H_k$. Thus any relevant pair on
diagonal $k+1$ can be shortened to a relevant pair on diagonal $k$,
with its $i$-coordinate unchanged or decreased by one.

Extending the leftmost relevant pair gives $l_{k+1}\le l_k+1$;
shortening the leftmost pair on the next diagonal gives
$l_k\le l_{k+1}$. Extending and shortening the rightmost pairs
similarly give $u_k\le u_{k+1}\le u_k+1$.
\end{proof}

The paths need not be computed. We will use their monotonicity to
classify an entire rectangle from two corners, evaluating excess only
at those corners.

\subsection{Covering the Band by Relevant Rectangles}
\label{sec:near-convex-rectangles}

We will recurse, start by embedding the index grid in a power-of-two square, let this be the only initial square.
Then for each current square, intersect
it with $[0,n]\times[0,m]$ and classify the resulting rectangle.
If every pair is relevant, process the rectangle by sparse convolution.
If no pair is relevant, discard it. Otherwise, split the square into
four equal children and continue. A singleton is always classified,
so this process covers every relevant pair.

The important point is that subdivision occurs only near the two
boundary paths. A large region lying wholly inside the band is handled
by one convolution, regardless of how many pairs it contains.
\Cref{fig:relevant-band} illustrates the decomposition.

\begin{figure}[t]
\centering
\begin{tikzpicture}[x=0.43cm,y=0.43cm,font=\small]
\fill[black!3] (0,0) rectangle (8,8);
\foreach \a/\b/\c/\d in {
 0/0/2/2,0/2/1/3,1/2/2/3,1/3/2/4,
 2/0/3/1,2/1/3/2,3/1/4/2,2/2/4/4,
 2/4/3/5,3/4/4/5,3/5/4/6,
 4/2/5/3,4/3/5/4,5/3/6/4,4/4/6/6,
 4/6/5/7,5/6/6/7,5/7/6/8,
 6/4/7/5,6/5/7/6,7/5/8/6,6/6/8/8}
 \filldraw[fill=blue!10,draw=blue!65!black,line width=0.45pt]
 (\a,\b) rectangle (\c,\d);
\foreach \i in {0,...,7} {
 \foreach \j in {0,...,7} {
  \pgfmathtruncatemacro{\relevant}{abs(\i-\j)<=2}
  \ifnum\relevant=1
    \fill[blue!65!black] (\i+0.5,\j+0.5) circle (0.85pt);
  \else
    \fill[black!25] (\i+0.5,\j+0.5) circle (0.85pt);
  \fi
 }
}
\draw[->] (0,-0.35) -- (8.5,-0.35) node[right] {$i$};
\draw[->] (-0.35,0) -- (-0.35,8.5) node[above] {$j$};
\node[anchor=north] at (0.5,-0.4) {$0$};
\node[anchor=north] at (7.5,-0.4) {$7$};
\node[anchor=east] at (-0.4,0.5) {$0$};
\node[anchor=east] at (-0.4,7.5) {$7$};
\end{tikzpicture}
\caption{The relevant pairs for $F_i=i^2$ and $G_j=j^2$,
$0\le i,j\le7$, at excess threshold $D=2$ are exactly those with
$|i-j|\le2$ (blue dots). The outlined dyadic squares partition these
pairs. Squares wholly inside the band are processed at once; only
squares meeting a boundary require further subdivision.}
\label{fig:relevant-band}
\end{figure}

\paragraph{Why two corners suffice.}

For $I\times J=[a,b]\times[c,e]$, let $p=(a,e)$ and $q=(b,c)$.
To locate a pair relative to the band, consider
\[
    L(i,j)=i-l_{i+j},
    \qquad
    U(i,j)=i-u_{i+j}.
\]
Relevance means $L(i,j)\ge0$ and $U(i,j)\le0$.
By \cref{lem:relevant-band}, increasing $i$ by one increases both
quantities by zero or one; increasing $j$ by one decreases both by
zero or one. Hence both attain their minimum at $p$ and their
maximum at $q$.

We perform three tests. If $p$ and $q$ are both relevant,
then $L(p)\ge0$ and $U(q)\le0$ certify relevance throughout
the rectangle. If $q$ lies left of the band, then even the maximum
$L(q)$ is negative, so the whole rectangle lies left of it. If $p$
lies right of the band, then even the minimum $U(p)$ is positive,
so the whole rectangle lies right of it.

These tests require neither $l_k$ nor $u_k$. First test a corner
$(i,j)$ for relevance using $R(i,j)\le D$. If it is irrelevant,
compare $i$ with the known minimizing index $i_{i+j}^\star$:
it lies left of the relevant interval when $i<i_{i+j}^\star$,
and right when $i>i_{i+j}^\star$.

If none of the three tests applies, the rectangle contains both
relevant and irrelevant pairs. To see, walk from $p$ to $q$ by increasing $i$ or decreasing
$j$. Along this walk, $L$ and $U$ increase in steps of at most
one, with $L\ge U$. We have $U(p)\le0$ and $L(q)\ge0$. If
$p$ is not already relevant, then $L(p)<0$; the first point with
$L\ge0$ has $L=0$ and hence $U\le0$, so it is relevant. At
least one corner is nevertheless irrelevant, since otherwise the
acceptance test would apply. Thus every subdivided rectangle meets
a boundary of the band.

\begin{algorithm}[H]
\caption{Near-convex min-plus convolution}
\label{alg:near-convex}
\begin{algorithmic}[1]
\Require Integer arrays $f[0..n],g[0..m]$; convex minorants $F,G$
may be supplied.
\Ensure $C=f\mc g$.
\If{the minorants are not supplied}
    \State Compute the greatest convex minorants $F,G$ by lower-hull scans
\EndIf
\State $D\gets\max_i(f_i-F_i)+\max_j(g_j-G_j)$
\State Compute $H=F\mc G$ and minimizing indices $i_k^\star$
by merging slopes
\State Use $R(i,j)=F_i+G_j-H_{i+j}$ for constant-time excess queries
\If{$D=0$}
    \State \Return $H$
\EndIf
\State Set every $C_k\gets+\infty$
\State Let $Q$ be the smallest power of two at least $\max\{n+1,m+1\}$
\State Initialize a stack with the square $[0,Q-1]\times[0,Q-1]$
\While{the stack is nonempty}
    \State Pop a square $S$ and intersect it with $[0,n]\times[0,m]$
    \If{the intersection is empty}
        \State Continue to the next square
    \EndIf
    \State Write the intersection as $I\times J=[a,b]\times[c,e]$
    \State Set $p\gets(a,e)$ and $q\gets(b,c)$
    \If{$R(p)\le D$ and $R(q)\le D$}
        \State Compute $Z\gets\{(i,f_i):i\in I\}+\{(j,g_j):j\in J\}$
        \Statex \hspace{\algorithmicindent}using the graph encoding and
        sparse nonnegative convolution
        \For{each $(k,z)\in Z$}
            \State $C_k\gets\min\{C_k,z\}$
        \EndFor
    \ElsIf{$R(q)>D$ and $b<i_{b+c}^\star$}
        \State Discard $S$ \Comment{Every pair lies left of the band.}
    \ElsIf{$R(p)>D$ and $a>i_{a+e}^\star$}
        \State Discard $S$ \Comment{Every pair lies right of the band.}
    \Else
        \State Push the four children of $S$ onto the stack
    \EndIf
\EndWhile
\State \Return $C$
\end{algorithmic}
\end{algorithm}

\subsection{Correctness and Running Time}
\label{sec:near-convex-analysis}

\begin{proof}[Proof of \cref{thm:near-convex}]
Every minimizing pair is relevant. The rectangle decomposition discards
only irrelevant pairs, and every relevant pair eventually belongs to
an accepted rectangle. Sparse convolution enumerates all distinct
candidate values in each accepted rectangle. Taking their minimum at
each index therefore returns $f\mc g$.

For the work bound, fix a dyadic side length $\ell$. A monotone
boundary path meets $O(N/\ell+1)$ squares of that size: its two
coordinates never decrease, so it can cross each horizontal or vertical
grid line at most once. Every accepted square other than the root is
a child of a subdivided square, and every subdivided square meets a
boundary. Thus there are $O(N/\ell+1)$ accepted squares at scale
$\ell$, as well as this many squares requiring classification.
Clipping to a nonsquare input grid adds only the input grid's outer
boundaries to the count.

An accepted square contributes a rectangle with $|I|+|J|\le2\ell$.
By \eqref{eq:nearconvex-support}, its graph sumset has
$O(\ell(D+1))$ points. Since the root side length is $O(N)$,
the total support and input-scanning cost at this scale are
\[
    O\bigl((N/\ell+1)\ell(D+1)\bigr)=O(N(D+1)).
\]
There are $O(\log(N+2))$ scales. The sum of the support sizes over
all calls is therefore $O(N(D+1)\log(N+2))$. In particular,
every individual support has size $O(N(D+1))$, so an expected
$O(s\log(s+2))$ time sparse nonnegative convolution algorithm gives total expected time
\[
    O\bigl(N(D+1)\log(N+2)\log(N(D+1)+2)\bigr).
\]
The geometric tests take only $O(N)$ operations, since
$\sum_\ell(N/\ell+1)=O(N)$ over the dyadic scales. Using a
deterministic sparse primitive instead gives the stated deterministic
$\softO(N(D+1))$ bound.

When $D=0$, the input arrays equal their convex minorants, so slope
merging alone returns the answer in $O(N)$ operations.
\end{proof}

A minimizing pair for any one requested output can also be recovered
by scanning its diagonal in $O(N)$ time.

\begingroup
\raggedright
\bibliographystyle{alpha}
\bibliography{simpler_knapsack}
\endgroup
\end{document}